\newtheorem{theorem}{Theorem}
\newtheorem{lemma}{Lemma}
\newtheorem{cor}{Corollary}
\theoremstyle{definition}
\newtheorem{definition}{Definition}
\newcommand{\oSigma}{\overline{\Sigma}}
\newcommand{\uSigma}{\underline{\Sigma}}
\newcommand{\ua}{\underline{a}}
\newcommand{\ub}{\underline{b}}
\newcommand{\ob}{\bar{b}}
\newcommand{\oa}{\bar{a}}
\newcommand{\KK}{\mathbf{K}}
\newcommand{\Kc}{\mathcal{K}}
\newcommand{\Bc}{\mathcal{B}}
\newcommand{\Pc}{\mathcal{P}}
\newcommand{\ccdot}{\odot}
\newcommand{\SL}{\mathbin{/}}
\newcommand{\BS}{\mathbin{\backslash}}
\newcommand{\ACT}{\mathbf{ACT}}
\newcommand{\ACTpomega}{\ACT^{+}_{\omega}}
\newcommand{\ACTomega}{\ACT_{\omega}}
\newcommand{\rR}{\mathrm{R}}
\newcommand{\rL}{\mathrm{L}}
\newcommand{\Z}{\mathbf{0}}
\newcommand{\U}{\mathbf{1}}
\newcommand{\MALC}{\mathbf{MALC}}
\newcommand{\MALCp}{\MALC^{+}}
\newcommand{\Var}{\mathrm{Var}}
\newcommand{\yields}{\to}
\newcommand{\plhd}{\mathrel{\mbox{\hspace*{1pt}\raisebox{.75pt}{\scalebox{.4}{$\boldsymbol{+}$}}}\hspace*{-5.7pt}\lhd\hspace*{1pt}}}
\newcommand{\prhd}{\mathrel{\mbox{\hspace*{1pt}\raisebox{.75pt}{\scalebox{.4}{$\boldsymbol{+}$}}}\hspace*{-4pt}\rhd\hspace*{1pt}}}
\newcommand{\preo}{\preccurlyeq}
\newcommand{\oerp}{\succcurlyeq}
\newcommand{\botp}{\bot^{\!\prime}}
\newcommand{\hx}[1]{{\upharpoonright_{#1}}}
\newcommand{\BcL}{\Bc_{L}}
\newcommand{\BcLp}{\Bc_{L}^+}
\newcommand{\BcLph}{\Bc_{L}^+ \hx{h(\bot)}}
\newcommand{\Md}{\mathcal{M}}
\newcommand{\Mf}{\mathfrak{M}}
\newcommand{\Hc}{\mathcal{H}}
\newcommand{\Lf}{\mathfrak{L}}
\newcommand{\ud}{\underline{d}}
\newcommand{\bw}{\alpha}
\begin{document}

\title{On Syntactic Concept Lattice Models for the Lambek Calculus and Infinitary Action Logic}

\author[1,2]{Stepan L. Kuznetsov}
\affil[1]{Steklov Mathematical Institute of Russian Academy of Sciences, Moscow, Russia\\ {\tt sk@mi-ras.ru}}
\affil[2]{HSE University, Moscow, Russia}
\date{}

\maketitle

\begin{abstract}
The linguistic applications of the Lambek calculus suggest its semantics over algebras of formal languages. A straightforward
approach to construct such semantics indeed yields a brilliant completeness theorem (Pentus 1995). However, extending the calculus
with extra operations ruins completeness. In order to mitigate this issue, Wurm (2017) introduced a modification of this semantics, namely,
models over syntactic concept lattices (SCLs). We extend this semantics to the infinitary extension of the Lambek calculus with Kleene
iteration (infinitary action logic), prove strong completeness and some interesting corollaries. We also discuss issues arising with
constants---zero, unit, top---and provide some strengthenings of Wurm's results towards including these constants into the systems involved.

\emph{Keywords:} Lambek calculus, syntactic concept lattices, infinitary action logic, completeness.
\end{abstract}

\section{Introduction}\label{S:intro}

The \emph{Lambek calculus} was introduced by Joachim Lambek~\cite{Lambek1958} in~1958 for mathematical description of natural language syntax via the framework of Lambek categorial grammars. The idea of categorial grammar itself goes back to earlier work of Ajdukiewicz~\cite{Ajdukiewicz} and Bar-Hillel~\cite{BarHillel}, and Lambek grammars are one of the most well-known classes of categorial grammars.

The three core operations of the Lambek calculus are $\BS$ (left division), $\SL$ (right division), and $\cdot$ (product). Formulae built from variables using these operations are called \emph{syntactic types} or \emph{categories.} Variables are called \emph{primitive types.} The usual primitive types are $s$ (sentence), $np$ (noun phrase), $n$ (common noun group), etc. In a Lambek categorial grammar, each natural language word (lexeme) is associated with a finite number of syntactic types which describe its possible syntactic r\^{o}les in a sentence.

The meaning of product is concatenation. Left division constructs the type $A \BS B$, which consists of all phrases which, if prepended by a phrase of type $A$, yield a phrase of type $B$. Right division, $B \SL A$, is symmetric: here an arbitrary phrase of type $A$ is expected on the right.

A standard example is the sentence {\sl ``John loves Mary,''} where {\sl ``John''} and {\sl ``Mary''} both receive type $np$, and the type for the transitive verb {\sl ``loves''} is $(np \BS s) \SL np$. The fact that {\sl ``John loves Mary''} is a correct sentence is validated by \emph{derivability} of the following \emph{sequent} in the Lambek calculus:
\[
 np, (np \BS s) \SL np, np \yields s.
\]
More interesting examples include dependent clauses: having {\sl ``girl''} of type $n$, {\sl ``the''} of type $np \SL n$, and, most importantly, {\sl ``whom''} of type $(n \BS n) \SL (s \SL np)$, we validate {\sl ``the girl whom John loves''} as $np$ by deriving the following sequent:
\[
n \SL np, n,  (n \BS n) \SL (s \SL np), np, (np \BS s) \SL np \yields s.
\]
Here {\sl ``John loves''} gets type $s \SL np$: if continued by an arbitrary noun phrase, it forms a correct sentence.

In general, a sequent is an expression of the form $A_1, \ldots, A_n \yields B$, where $A_1, \ldots, A_n, B$ are formulae (types) of the Lambek calculus. The Lambek calculus itself is a substructural logical system formulated as an intuitionistic-style sequent calculus; as noticed by Abrusci~\cite{Abrusci1990}, it can be regarded as a non-commutative intuitionistic variant of Girard's~\cite{Girard1987} linear logic (its multiplicative fragment). Inference rules and axioms of the Lambek calculus are listed below in Section~\ref{S:Lambek}.

In the original formulation of the Lambek calculus, left-hand sides of all sequents were required to be non-empty. This requirement, called \emph{Lambek's non-emptiness condition,} is motivated by linguistic validity~\cite[\S~2.5]{MootRetore}. An adjective (like {\sl ``interesting''}), being a modifier for a common noun, receives type $n \SL n$. However, in the absence of Lambek's non-emptiness restriction, the \emph{empty word} also gets this type. This gives the following undesired behaviour. Say, {\sl ``very interesting book''} is a valid common noun group. The empty word being also an `adjective,' however, the grammar would also erroneously validate {\sl ``very book''} as a phrase of type $n$.

From the logical point of view, however, Lambek's non-emptiness restriction looks somehow weird, so versions of the Lambek calculus without this restriction were also considered~\cite{Lambek61}. Throughout this article, we discuss both cases, with and without the restriction.

The linguistical meaning of Lambek formulae, sketched above, suggests the following interpretation of the Lambek calculus on algebras of formal languages, called \emph{L-models.} Formulae are interpreted as formal languages over an alphabet $\Sigma$ (i.e., subsets of $\Sigma^+$ or $\Sigma^*$, depending on whether Lambek's non-emptiness restriction is imposed, that is, whether the empty word is allowed). The three Lambek operations are interpreted as follows ($M, N$ being languages):
\begin{align*}
 & M \cdot N = \{ uv \mid u \in M, v \in N \};\\
 & N \BS M = \{ u \mid (\forall v \in N) \: vu \in M \}; \\
 & M \SL N = \{ u \mid (\forall v \in N) \: uv \in M \}.
\end{align*}

Notice how the properties of divisions depend on whether we allow or disallow the empty word, i.e., on the presence of Lambek's non-emptiness restriction. Even if neither $N$ nor $M$ includes the empty word, it could belong to $N \BS M$ (if the empty word is allowed). Thus, the language $N \BS M$ (and, symmetrically, $M \SL N$) may vary, depending on whether we allow the empty word. In particular, if the empty word is allowed, it always belongs to $N \BS N$, so we have $(N \BS N) \BS M \subseteq M$. This corresponds to the sequent $(p \BS p) \BS q \yields q$, which is derivable only in the absence of Lambek's non-emptiness restriction.

Seminal results by Pentus~\cite{Pentus1995,Pentus1998fmonov} establish completeness of the Lambek calculus w.r.t.\ L-models: a sequent is derivable in the Lambek calculus if{f} it is true in all L-models ($\yields$ being interpreted as $\subseteq$). This result is valid for both cases, with and without the empty word.

Pentus' result, however, is extremely hard to prove. (For the product-free fragment, completeness has a much simpler proof, given by Buszkowski~\cite{Buszkowski1982}.) What is even worse, it fails to propagate to \emph{extensions} of the Lambek calculus with other natural operations on formal languages.

The practical motivation for extending the Lambek calculus is the fact that the generative capacity of Lambek grammars is limited. Namely, Lambek grammars can generate only context-free languages, as shown by Pentus~\cite{Pentus1993LICS}. At the same time, there is some linguistic evidence~\cite{Shieber1985} that natural language can go beyond context-freeness.

Real-life Lambek-based parsers for natural language usually extend the basic calculus with many extra operations. For example, the CatLog system developed by Morrill and his group~\cite{Morrill2019} uses a calculus with more than 40 operations. In this article, we keep on a more theoretical side, and consider extensions of the Lambek calculus with natural language-theoretic operations: union, intersection, and, most importantly, Kleene iteration.

Union and intersection, also called \emph{additive disjunction} and \emph{conjunction} (as opposed to product, which is \emph{multiplicative conjunction}) extend the generative capacity of Lambek grammars by finite intersections of context-free languages~\cite{Kanazawa1991} and even more~\cite{KuznetsovOkhotin2017}. Kleene iteration, in its positive form (i.e., without the empty word), is utilised by Morrill~\cite{Morrill2019} for modelling iterated coordination: in noun phrases like {\sl ``John, Bill, Mary, and Suzy''} the coordinator {\sl ``and''} receives type $(np^+ \BS np) \SL np$. Here ${}^+$ is positive Kleene iteration, which makes {\sl ``and''} accept an arbitrary number of $np$'s on the left. Another usage of iteration, for an alternative account of adjectives, was presented by Buszkowski and Palka~\cite{BuszkowskiPalka2008}.

As noticed above, unfortunately, adding these operations ruins completeness w.r.t.\ L-models. Similar problems arise even with constants for $\varnothing$ (the empty language) and $\{ \varepsilon \}$ (the singleton of the empty word), which seem harmless at first glance. (Concrete counter-examples to completeness are given below in Section~\ref{S:Lambek}.)

Christian Wurm~\cite{Wurm2013,Wurm2017} introduced a modification of L-models, which enjoys much better completeness properties. These models are based on the notion of \emph{syntactic concept lattice} (SCL), which goes back to Clark~\cite{Clark2011}. In what follows, we call these models SCL-models.

SCLs form a specific kind of concept lattices, the notion which is the core of {\em Formal Concept Analysis} (see~\cite{DaveyPriestley91,Wille92,SOKuznetsov1996}). Formal Concept Analysis is widely used in data science, including machine learning, data mining, knowledge representation, etc.
The ideas of SCL are closely related to {\em distributional semantics,} an approach in structural linguistics which studies words and phrases by considering the contexts they could appear in.

The novelty of SCL-models in comparison with L-models (which Wurm calls `canonical' L-models) is the use of a closure operator. Given a fixed language $L$, each language $M$ in an SCL is extended to its closure, denoted by $M^{\rhd\lhd}$. This closure includes all words from $M$ and also all words which appear in the same contexts w.r.t.\ $L$:
\[
v \in M^{\rhd\lhd} \iff \forall (x,y) \in \Sigma^* \times \Sigma^* \: \bigl( (\forall w \in M \: xwy \in L) \Rightarrow xvy \in L \bigr).
\]
This closure operator depends on $L$, as an implicit parameter. To keep the notation concise, this dependence is not made explicit.
Accurate definitions of Wurm's SCL-models are given below. With Lambek's non-emptiness restriction imposed, the words $v,w$ are required to be non-empty; however, words in the \emph{contexts}, $x$ and $y$, may still be
empty.

From the point of view of linguistic meaning of the Lambek calculus, models based on SCLs in fact seem more appropriate than L-models. In Lambek categorial grammars, formulae denote {\em syntactic types,} i.e., grammatic categories assigned to words and phrases. The interpretation of a syntactic category (Lambek formula) should be the set of all words and phrases of the given category: for $np$ it is the set of all noun phrases, for $s$---that of all sentences, for $((np \BS s) \SL np)$---of all transitive verb groups, etc. Linguistic intuition suggests, however, that if a word or phrase can appear exactly in the same contexts as words or phrases of type $A$, then it should also belong to type $A$. For example, {\sl John} is a noun phrase and may appear in contexts like {\sl {\underline{\quad}} loves Mary} or {\sl Mary met {\underline{\quad}} yesterday, } etc. {\sl Pete} may appear in the very same contexts, so it also should belong to the $np$ type. The closure used in SCLs, with $L$ being the set of all correct sentences, exactly augments the language for each type with such possible new words and phrases.

Wurm~\cite{Wurm2017} proved completeness, w.r.t.\ SCL-models, of the Lambek calculus extended with additive conjunction and disjunction (interpreted as intersection and the closure of union). We further extend these results to Kleene / positive iteration, and also discuss issues with constants. Moreover, we prove completeness results in their strong form (i.e., for derivability from sets of hypotheses) and reduce the size of the alphabet to 2 letters (in Wurm's proof, the alphabet was potentially infinite). Finally, we discuss the cases with and without Lambek's restriction. In Wurm's article, the former was just briefly mentioned, without detailed discussion how the
construction gets adapted to this situation. In contrast, we put the case with Lambek's restriction to the first place, as it is (see above) more relevant to linguistic applications of the Lambek calculus.

This article is the journal version of the conference paper~\cite{Kuzn2024WoLLIC}, presented at the 30th Workshop on Logic, Language, Information, and Computation (WoLLIC 2024) in Bern, Switzerland, June 10--13, 2024.
In comparison with~\cite{Kuzn2024WoLLIC}, this article includes the following new material:

\begin{itemize}\setlength{\itemsep}{-2pt}
 \item we first consider systems with Lambek's non-emptiness restriction, as being more appropriate from the linguistic point of view;
 \item we extend most of the results to the case with constant $\top$, meaning the maximal language---this extension requires significant modifications of the constructions;
 \item we also prove a completeness result including the $\bot$ constant for the closure of the empty language, but, as a trade-off, without the unit constant---thus, we show that no constant causes incompleteness on its own.
\end{itemize}

Moreover, given the size constraints of a conference paper, most of the proofs in~\cite{Kuzn2024WoLLIC} were given only as brief sketches. In this article, we present all the proofs in full detail.

\section{Calculi and Models, with Lambek's Non-Emp\-ti\-ness Restriction}\label{S:Lambek}

Let us start by defining the calculi with Lambek's non-emptiness restriction, as the ones which are more suitable from the linguistic point of view (see above). Formulae are constructed from a countable set of variables $\Var$ and constants $\top$ and $\bot$ using five binary operations: $\BS$ (left division), $\SL$ (right division), $\cdot$ (product, or multiplicative conjunction), $\wedge$ (additive conjunction), and $\vee$ (additive disjunction).  Furthermore, we extend the formula language with a unary operation ${}^+$, called positive iteration (written in the postfix form: $A^+$).

Sequents are expressions of the form $A_1, \ldots, A_n \yields B$, where $A_1, \ldots, A_n, B$ are formulae and $n \geq 1$ (non-emptiness condition). Variables will be denoted by letters $p,q,r$ (possibly with subscripts). Capital Latin letters from the beginning of the alphabet stand for formulae. Finally, sequences of formulae are denoted by capital Greek letters. These sequences, being parts of left-hand sides of sequents, could be empty, provided that the whole left-hand side is not.

The intended interpretation of additive operations and constants is lattice-theoretic: $\wedge$ is meet, $\vee$ is join, $\top$ is maximum, $\bot$ is minimum, w.r.t.\ the partial order given by $\yields$. The informal meaning of Lambek operations ($\BS$, $\SL$, $\cdot$) was sketched above in the introduction.

Let us formulate the axioms and inference rules for the \emph{multiplicative-additive Lambek calculus} with Lambek's non-emptiness restriction, which we denote by $\MALCp$.\footnote{The ${}^+$ symbol here reflects the intended language semantics on languages without the empty word, i.e., over $\Sigma^+$. In what follows, however, ${}^+$ will also be used to denote positive iteration; the corresponding extension of $\MALCp$ is called $\ACTpomega$.} These rules are the same as given by Kanazawa~\cite{Kanazawa1991}, who also imposes Lambek's non-emptiness restriction, plus axioms governing constants $\top$ and $\bot$.

\[
 \infer[\mathrm{Id}]{A \yields A}{}
 \qquad
 \infer[\bot\rL]{\Gamma, \bot, \Delta \yields C}{}
 \qquad
 \infer[\top\rR]{\Pi \yields \top}{}
\]
\[
\infer[{\cdot}\rL]{\Gamma, A \cdot B, \Delta \to C}
{\Gamma, A, B, \Delta \to C}
\qquad
\infer[{\cdot}\rR]{\Gamma, \Delta \to A \cdot B}
{\Gamma \to A & \Delta \to B}
\]
\[
\infer[{\BS}\rL]{\Gamma, \Pi, A \BS B, \Delta \to C}
{\Pi \to A & \Gamma, B, \Delta \to C}
\qquad
\infer[{\BS}\rR,\mbox{ where $\Pi$ is non-empty}]{\Pi \to A \BS B}{A, \Pi \to B}
\]
\[
\infer[{\SL}\rL]{\Gamma, B \SL A, \Pi, \Delta \to C}
{\Pi \to A & \Gamma, B, \Delta \to C}
\qquad
\infer[{\SL}\rR,\mbox{ where $\Pi$ is non-empty}]{\Pi \to B \SL A}{\Pi, A \to B}
\]
\[
\infer{\Gamma, A \wedge B, \Delta \to C}
{\Gamma, A, \Delta \to C}
\quad
\infer[{\wedge}\rL]{\Gamma, A \wedge B, \Delta \to C}
{\Gamma, B, \Delta \to C}
\qquad
\infer[{\wedge}\rR]{\Pi \to A \wedge B}
{\Pi \to A & \Pi \to B}
\]
\[
\infer[{\vee}\rL]{\Gamma, A \vee B, \Delta \to C}
{\Gamma, A, \Delta \to C & \Gamma, B, \Delta \to C}
\qquad
\infer{\Pi \to A \vee B}{\Pi \to A}
\quad
\infer[{\vee}\rR]{\Pi \to A \vee B}{\Pi \to B}
\]
\[
\infer[\mathrm{Cut}]
{\Gamma, \Pi, \Delta \to C}
{\Pi \to A & \Gamma, A, \Delta \to C}
\]

Lambek's non-emptiness condition here is implemented in the formulations of ${\BS}\rR$ and ${\SL}\rR$, as these are the only two rules which could potentially violate it.

When talking of derivability without hypotheses (theoremhood), the Cut rule is eliminable. (This is shown by a straightforward extension of Lambek's~\cite{Lambek1958} original argument.) We shall also consider, however, derivability from sets of hypotheses, i.e., finite or infinite sets of sequents added as extra non-logical axioms. (Hypotheses are also required to have non-empty antecedents.) In this case, Cut is, in general, not eliminable, so we keep it as one of inference rules.

As a matter of notation, the $\bot$ constant is sometimes denoted by $\mathbf{0}$ (e.g., as it is done in~\cite{Kuzn2024WoLLIC}). The $\top$ constant may seem redundant, due to the equivalence $\top \leftrightarrow \bot \SL \bot$. As we shall see below, however, $\top$ has better semantic properties, so it is meaningful to keep it as a separate constant with its own axiom, and to consider fragments with $\top$, but without $\bot$.

Next, we extend $\MALCp$ with rules for positive iteration, yielding the system which we denote by $\ACTpomega$ and call \emph{positive infinitary action logic.}
\[
 \infer[{}^+ \rL_\omega]
 {\Gamma, A^+, \Delta \yields C}
 {\bigl( \Gamma, A^n, \Delta \yields C \bigr)_{n=1}^{\infty}}
 \qquad
 \infer[{}^+ \rR_n,\ n \geq 1]
 {\Pi_1, \ldots, \Pi_n \yields A^+}
 {\Pi_1 \yields A & \ldots & \Pi_n \yields A}
\]
This system is a positive variant of infinitary action logic $\ACTomega$, introduced by Palka~\cite{Palka2007}. Cut is also eliminable~\cite{Palka2007}, for derivability without hypotheses. For derivability from hypotheses, $\ACTpomega$ enjoys neither cut elimination, nor compactness.

Derivability of a sequent $\Pi \yields C$ from a set of hypotheses $\Hc$ in the given logic $\Lf$ (e.g., $\MALCp$ or $\ACTpomega$) is denoted as follows:
\( \Hc \vdash_{\Lf} \Pi \yields A \).

The ${}^+ \rL_\omega$ rule is an $\omega$-rule, as it has an infinite sequence of premises. (In the formulation of this rule, $A^n$ means $A, \ldots, A$, $n$ times.) In the presence of an $\omega$-rule, derivations
may be inﬁnite, but are required
to be well-founded (each path from
the goal sequent should reach an axiom in finitely many steps). The usage of $\omega$-rules, or some other infinitary mechanisms, is inevitable, since the set of theorems of $\ACTpomega$ is not recursively enumerable, being $\Pi^0_1$-complete, even in the fragment without $\vee$ and $\wedge$~\cite{Buszkowski2007,Kuzn2017WoLLIC}.
Thus, induction-style axiomatisations for iteration, suggested by Pratt~\cite{Pratt1991} and Kozen~\cite{Kozen1994} (so-called `action logic') are strictly weaker than the $\omega$-rule, and therefore give no hope for completeness.

Being a substructural logic, $\MALCp$ enjoys natural abstract semantics on the appropriate class of residuated algebraic structures (see~\cite{Galatos} for a more general picture).

\begin{definition}
 A \emph{semigroup residuated bounded lattice (SRBL)} is a partially ordered algebraic structure
 $\Kc = (\KK; \preo, \cdot, \BS, \SL, \wedge, \vee, \top, \bot)$, where:
 \begin{enumerate}\setlength{\itemsep}{-2pt}
  \item $(\KK; \preo, \wedge, \vee, \top, \bot)$ is a bounded lattice;
  \item $(\KK; \cdot)$ is a semigroup (i.e., product is associative);
  \item $\BS$ and $\SL$ are residuals of the product w.r.t.\ the $\preo$ partial order:
  \[
   b \preo a \BS c \iff a \cdot b \preo c \iff a \preo c \SL b.
  \]
 \end{enumerate}
\end{definition}

The notion of SRBL is an adaptation of the more well-known notion of residuated (bounded) lattice to the case with Lambek's non-emptiness restriction. Instead of a monoid we now have a semigroup, which corresponds to removing the unit constant, which could have represented an empty left-hand side of a sequent.

Now let us add positive iteration.

\begin{definition}
 A \emph{$*$-continuous positive action lattice ($\omega$PAL)} is a structure
 $\Kc = (\KK; \preo, \cdot, \BS, \SL, \wedge, \vee, \top, \bot, {}^+)$, where:
 \begin{enumerate}\setlength{\itemsep}{-2pt}
  \item $(\KK; \preo, \cdot, \BS, \SL, \wedge, \vee, \top, \bot)$ is an SRBL;
  \item $a^+ = \sup_{\preo} \{ a^n \mid n \geq 1 \}$.
 \end{enumerate}

\end{definition}

It follows from the residuation condition that $(\KK; \cdot, \vee)$ is a semiring.
It is also well-known that $\wedge$, $\vee$, $\cdot$, ${}^+$ in such structures are monotone on both arguments, and $\BS$, $\SL$ are monotone on the numerator and antitone on the denominator.
Finally, it is worth noticing that the standard (Kozen's~\cite{Kozen1981}) definition of $*$-con\-ti\-nu\-ous Kleene algebra uses a stronger version of the $*$-continuity condition:
$b \cdot a^+ \cdot c = \sup \{ b \cdot a^n \cdot c \mid n \geq 1 \}$. In the presence of residuals, however, this follows from the simpler definition.

SRBLs and $\omega$PALs provide standard algebraic semantics for $\MALCp$ and $\ACTpomega$, respectively.

\begin{definition}
An algebraic model $\Md$ is defined via an interpretation function $\alpha$ from formulae to elements of $\KK$, which is defined
arbitrarily on variables, commutes with the operations, and maps constants ($\top$, $\bot$) to the corresponding designated
elements of $\KK$. A sequent $A_1, \ldots, A_n \yields B$ is true in $\Md$ if $\alpha(A_1) \cdot \ldots \cdot \alpha(A_n) \preo
\alpha(B)$ in $\Kc$.
\end{definition}

This definition works both for $\MALCp$ and $\ACTpomega$, as well as all their \emph{elementary fragments,} i.e., systems with restricted
sets of operations. For example, if we keep only $\BS$, $\SL$, and $\cdot$, the corresponding class of algebraic models will be the
class of all residuated semigroups (but not necessarily with a lattice structure).

Now let us recall the standard definitions of entailment and completeness.

\begin{definition}
Let $\Mf$ be a class of models for a given set of operations, and let $\Hc$ be a (possibly infinite) set of sequents. Then $\Hc$
\emph{semantically entails} a sequent $\Pi \yields C$ on the class of models $\Mf$, if for every model $\Md \in \Mf$, in which all
sequents from $\Hc$ are true, $\Pi \yields C$ is also true. Notation: $\Hc \vDash_{\Mf} \Pi \yields A$.
\end{definition}

Let $\Lf$ be one of the logics in question (e.g., $\MALCp$ or $\ACTpomega$).

\begin{definition}
A logic $\Lf$ is \emph{strongly sound} w.r.t.\ the class of models $\Mf$, if (for any set of sequents $\Hc$ and sequent $\Pi \to C$)
$\Hc \vdash_\Lf \Pi \yields C$ implies $\Hc \vDash_{\Mf} \Pi \yields C$.
\end{definition}

Strong soundness of the logics w.r.t.\ corresponding classes of models will hold in all cases considered in this article, and this
is proved by routine checks.

\begin{definition}
A logic $\Lf$ is \emph{strongly complete} w.r.t.\ the class of models $\Mf$, if (for any set of sequents $\Hc$ and sequent $\Pi \to C$)
$\Hc \vDash_{\Mf} \Pi \yields C$ implies $\Hc \vdash_\Lf \Pi \yields C$.
\end{definition}

\begin{definition}
A logic $\Lf$ is \emph{weakly complete} w.r.t.\ the class of models $\Mf$, if any sequent which is true in all models from $\Mf$ is derivable
in $\Lf$ (without hypotheses).
\end{definition}

The standard Lindenbaum--Tarski construction (see, e.g.,~\cite{BlokPigozzi}) establishes strong completeness of $\ACTpomega$ w.r.t.\ the
class of all $\omega$PALs and that of $\MALCp$ w.r.t.\ the class of all SRBLs. More serious completeness issues arise, however, with
\emph{concrete} interesting classes of those general algebraic structures.

L-models, as defined in the introduction, provide a natural class of models for $\MALCp$ and $\ACTpomega$. Namely, these are models over
\emph{algebras of formal languages} without the empty word, and these algebras are $\omega$PALs. In such an algebra, $\KK = \Pc(\Sigma^+)$
for an alphabet $\Sigma$, product and two divisions are defined in the natural language-theoretic way (see Introduction), join and meet
are set-theoretic intersection and union, and positive iteration is defined in the $*$-continuous way:
\[
M^+ = \bigcup_{n=1}^{\infty} M^n.
\]
Finally, $\bot = \varnothing$ and $\top = \Sigma^+$ are, resp., the minimal and maximal languages.

As noticed in the Introduction, however, completeness holds only for some fragments of $\MALCp$. Namely, Buszkowski~\cite{Buszkowski1982}
proved strong completeness for the set of operations $\{ \BS, \SL, \wedge \}$ and Pentus~\cite{Pentus1995} proved weak completeness
for the set $\{ \BS, \SL, \cdot \}$. As noticed by Buszkowski~\cite{Buszkowski2010}, in Pentus' situation strong completeness fails:
we have $p \yields p \cdot p \vDash p \yields q$, but $p \yields p \cdot p \nvdash p \yields q$. Without product, strong completeness
can be extended to a limited use of positive iteration: namely, for formulae in the language of $\BS, \SL, \wedge, {}^+$, where ${}^+$ can be
used only in combinations of the form $A^+ \BS B$ and $B \SL A^+$~\cite{KuznetsovRyzhkova2020}.

In the presence of both meet and join, L-models satisfy the distributivity law $(A \vee B) \wedge C \yields (A \wedge C) \vee (B \wedge C)$, which
is not derivable in $\MALCp$ and therefore even weak completeness fails. Moreover, there are corollaries of the distributivity laws, also not derivable,
in the following fragments: $\{ \BS, \SL, \vee \}$~\cite{KanKuzSce2022IC} and~$\{ \BS, \cdot, \wedge, {}^+ \}$~\cite{KuznetsovRyzhkova2020}.\footnote{For the fragment
$\{\BS,\SL,\wedge,{}^+\}$, without restrictions on the use of ${}^+$, completeness would probably also fail, by an argument similar to the one used in~\cite{KuznetsovRyzhkova2020}. However,
to the best of the author's knowledge, this is still just a conjecture, and proving it remains an open problem.} The constant~$\bot$
also causes problems with completeness: combinations like $\top \SL A$ obey certain structural rules from finite-valued logics, which are invalid in~$\MALCp$
(see~\cite{Kuzn2014}).

The models on syntactic concept lattices (SCLs) are a modification of L-mo\-dels and provide much better completeness properties.

In the presence of Lambek's restriction, we define a modification of SCLs which avoid the empty word, called \emph{positive} SCLs, or PSCLs for short.

We start with an alphabet $\Sigma$ and a fixed language $L \subseteq \Sigma^+$. Define the following two maps: ${}^{\prhd} \colon \Pc(\Sigma^+) \to \Pc(\Sigma^* \times \Sigma^*)$ and ${}^{\plhd} \colon \Pc(\Sigma^* \times \Sigma^*) \to \Pc(\Sigma^+)$, as follows:\footnote{Wurm~\cite{Wurm2017} uses a more cumbersome notation for the positive case: ${}^{+\rhd}$ and ${}^{+\lhd}$.}
\begin{align*}
& M^{\prhd} = \{ (x,y) \in \Sigma^* \times \Sigma^* \mid (\forall w \in M) \: xwy \in L \},\mbox{ for $M \subseteq \Sigma^+$;} \\
& C^{\,\plhd} = \{ v \in \Sigma^+ \mid (\forall (x,y) \in C) \: xvy \in L \},\mbox{ for $C \subseteq \Sigma^* \times \Sigma^*$.}
\end{align*}

It is easy to show that the composition of this two maps, $M \mapsto M^{\prhd\plhd}$, is a closure operator, as well as the dual one: $C \mapsto C^{\,\plhd\prhd}$. 

The closure of a language $M$ (w.r.t.\ $L$) is the following language:
\[
 M^{\prhd\plhd} = \{ v \in \Sigma^+ \mid 
 \forall (x,y) \in \Sigma^* \times \Sigma^* \, \bigl( ( \forall w \in M\ xwy \in L) \Rightarrow xvy \in L \bigr).
\]
Notice that all words in $M^{\prhd\plhd}$ are required to be non-empty. Words $x,y$ appearing in contexts, however, may be empty.
The idea is that we add all words which appear in the same contexts, as words already belonging to $M$. (For a linguistic example, say, if {\sl Mary} belongs to the language of noun phrases, then in its closure we shall also have {\sl Ann}, as this word can occur anywhere where {\sl Mary} can.)

A language $M$ is called closed, if $M = M^{\prhd\plhd}$. Closed languages are in the Galois correspondence (given by the pair of maps ${}^{\prhd}$ and ${}^{\plhd}$) with closed {\em contexts} $C \subseteq \Sigma^* \times \Sigma^*$ (such that $C = C^{\,\plhd\prhd}$).

\begin{definition}
Let $\BcLp$ denote the set of all closed languages:
 $\BcLp = \{ M \subseteq \Sigma^+ \mid M = M^{\prhd\plhd} \}$.
\end{definition}

Upon $\BcLp$, we shall build the structure of \emph{syntactic concept lattice}, which will be an $\omega$PAL (Lemma~\ref{P:correc} below). Notice that this construction is a special case of \emph{residuated frames}, see~\cite{GalatosJipsen2013}.

The partial order on $\BcLp$ is the same subset order ($M_1 \subseteq M_2$) as in the lattice of all formal languages over $\Sigma$. Since for $M_1, M_2 \in \BcLp$ their intersection $M_1 \cap M_2$ is also a
closed language, meet in $\BcLp$ is intersection: $M_1 \sqcap M_2 = M_1 \cap M_2$. Join, however, is not union, as $M_1 \cup M_2$ may not be closed. In order to compute join, one needs to apply closure:
$M_1 \sqcup M_2 = (M_1 \cup M_2)^{\prhd\plhd}$.
The new lattice, $(\BcLp, \cap, \sqcup)$, is not distributive anymore. This removes obstacles for completeness, which are related to distributivity (see above).

Let us also define the replaceability relation $\leqslant_L$: for two words $w_1$ and $w_2$ from $\Sigma^+$,
\[
 w_1 \leqslant_L w_2 \iff
 (\forall (x,y) \in \Sigma^*) \, (xw_2 y \in L \Rightarrow x w_1 y \in L).
\]
For any closed language $M$, if $w_2 \in M$ and $w_1 \leqslant w_2$, then also $w_1 \in M$. Indeed, let $x,y$ be such that $xwy \in L$ for any $w \in M$. In particular, $xw_2 y \in L$, and therefore $xw_1 y \in L$. Hence, $w_1 \in M^{\prhd\plhd} = M$.

Next, we define the remaining operations of an SRBL, and, further, positive iteration. For divisions, we have
\begin{align*}
& M_1 \BS M_2 = \{ u \in \Sigma^+ \mid (\forall v \in M_1) \, vu \in M_2 \},\\
& M_2 \SL M_1 = \{ u \in \Sigma^+ \mid (\forall v \in M_1) \, uv \in M_2 \},
\end{align*}
and it is easy to check that these operations keep the languages closed (namely, if $M_2$ is closed, then $M_1 \BS M_2$ and $M_2 \SL M_1$ are also closed). Thus, $\BS$ and
$\SL$ are defined on $\BcLp$ exactly as on $\Pc(\Sigma^+)$. Product and positive iteration may yield non-closed languages, so closure must be added, as we did for union:
\(
M_1 \circ M_2 = (M_1 \cdot M_2)^{\prhd\plhd}\); \(
M^{\oplus} = (M^+)^{\prhd\plhd}.
\)
The $\top$ constant is again the whole $\Sigma^+$ (this language is already maximal, so closure would not change it). For $\bot$, however, we take not $\varnothing$ itself, but its closure $\varnothing^{\prhd\plhd} = \{ v \in \Sigma^+ \mid (\forall x, y \in \Sigma^*) \, xvy \in L \}$, which may be (depending on $L$) a non-empty language.

\begin{lemma}\label{P:correc}
The algebraic structure $(\BcLp; \subseteq, \circ, \BS, \SL, \cap, \sqcup, {}^\oplus, \Sigma^+, \varnothing^{\prhd\plhd})$ is an $\omega$PAL.
\end{lemma}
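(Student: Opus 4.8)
The plan is to verify the three defining conditions of an $\omega$PAL (and, implicitly, of an SRBL) for the concrete structure on $\BcLp$. Since the underlying set consists of closed languages and most operations are borrowed from $\Pc(\Sigma^+)$ with closure applied where needed, the strategy is to leverage the general theory of closure operators and residuated frames: for a closure operator $M \mapsto M^{\prhd\plhd}$, the set of closed elements forms a complete lattice, and the image of any associative residuated operation on the ambient powerset, after composing with closure, inherits the residuated structure on the closed elements. I would therefore proceed condition by condition, reducing each to a routine verification.

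First I would establish the lattice structure. The closure operator property (already granted in the excerpt) gives that $\BcLp$ is a complete lattice under $\subseteq$, with meet equal to intersection (because intersections of closed languages are closed) and join equal to the closure of the union. The top element is $\Sigma^+$ (maximal, hence closed) and the bottom is $\varnothing^{\prhd\plhd}$; these bound the lattice, so condition~1 holds. Next I would check that $\circ$ is associative, making $(\BcLp; \circ)$ a semigroup (condition~2). The subtle point is that $\circ$ applies closure, so associativity of the underlying concatenation $\cdot$ does not immediately transfer. The clean way is to prove the \emph{stability} fact that closure commutes appropriately with product, i.e. $(M_1^{\prhd\plhd} \cdot M_2)^{\prhd\plhd} = (M_1 \cdot M_2)^{\prhd\plhd}$ and symmetrically on the right; this is the standard ``nucleus'' argument for residuated frames, and once in hand it yields $M_1 \circ (M_2 \circ M_3) = (M_1 \cdot M_2 \cdot M_3)^{\prhd\plhd} = (M_1 \circ M_2) \circ M_3$.

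For condition~3 (residuation) I would verify $b \preo a \BS c \iff a \circ b \preo c \iff a \preo c \SL b$, i.e. $M_2 \subseteq M_1 \BS M_3 \iff M_1 \circ M_2 \subseteq M_3 \iff M_1 \subseteq M_3 \SL M_2$. Here the key observation is that, for \emph{closed} $M_3$, we have $M_1 \circ M_2 \subseteq M_3 \iff M_1 \cdot M_2 \subseteq M_3$, because $M_3$ being closed absorbs the closure operator (monotonicity of closure plus $M_3 = M_3^{\prhd\plhd}$). After this reduction the equivalences follow from the elementary set-theoretic definitions of $\BS$ and $\SL$ exactly as for plain L-models, together with the already-noted fact that these divisions keep closed languages closed. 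Finally, for the $\omega$PAL part, I must check that ${}^\oplus$ realises the supremum condition $a^+ = \sup_{\preo}\{a^n \mid n \geq 1\}$. Since suprema in $\BcLp$ are computed as the closure of the set-theoretic union, and $M^+ = \bigcup_{n\geq 1} M^n$ while $M^n = M \circ \cdots \circ M$ (again using the stability lemma to strip internal closures), one gets $\sup_{\preo}\{M^{\circ n}\} = \bigl(\bigcup_{n\geq 1} M^n\bigr)^{\prhd\plhd} = (M^+)^{\prhd\plhd} = M^\oplus$.

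The main obstacle I anticipate is the stability lemma stating that closure may be freely inserted or removed inside products, i.e. that the product descends to a well-defined associative operation on closed languages. This is precisely the content that makes $M \mapsto M^{\prhd\plhd}$ a \emph{nucleus} compatible with $\cdot$, and everything else—associativity of $\circ$, the reduction of $\circ$-inclusions to $\cdot$-inclusions against a closed target, and the supremum characterisation of ${}^\oplus$—follows from it almost mechanically. I would prove this lemma first, either directly from the contextual definition of ${}^{\prhd}$ and ${}^{\plhd}$ (showing $(x,y)$ is a context of $M_1 \cdot M_2$ iff it is a context of $M_1^{\prhd\plhd}\cdot M_2$, by shifting the right factor into the context) or by appealing to the residuated-frames machinery of~\cite{GalatosJipsen2013}, which guarantees exactly this compatibility. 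The remaining verifications are the kind of routine checks already promised in the surrounding text.
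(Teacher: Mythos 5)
Your proposal is correct, and it matches what the paper intends: the paper omits this proof entirely (``a routine check'') while noting that the construction is a special case of residuated frames in the sense of Galatos--Jipsen, and your argument is exactly that standard verification, with the key stability/nucleus fact $\bigl(M_1^{\prhd\plhd}\cdot M_2\bigr)^{\prhd\plhd}=(M_1\cdot M_2)^{\prhd\plhd}$ correctly identified as the one non-trivial point and correctly provable by shifting a factor into the context (from $x w_1 w_2 y\in L$ for all $w_1\in M_1$, $w_2\in M_2$, pass to the contexts $(xw_1,y)$ and then $(x,v_2y)$). Everything else---the complete-lattice structure of closed sets, the reduction of $\circ$-inclusions to $\cdot$-inclusions against a closed right-hand side, and the computation of $\sup\{M^{\circ n}\}$ as $(M^+)^{\prhd\plhd}$---follows mechanically, just as you say.
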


The proof is a routine check, so we omit it.

By \emph{PSCL-models} for $\ACTpomega$ (or $\MALCp$) we mean just algebraic models over $\BcLp$ (as an $\omega$PAL or an SRBL).

It would have been natural to conjecture completeness of $\ACTpomega$ (or at least $\MALCp$) w.r.t.\ the class of PSCL-models.
However, as noticed by Wurm~\cite{Wurm2017}, the $\bot$ constant turns out to be problematic. Thus, again following Wurm,
we consider a broader fragment of models, called \emph{PSCL-models with non-standard $\bot$,} for which we shall indeed prove
completeness. Completeness w.r.t.\ the original class of PSCL-models is left as an open problem; for some discussion, see
Section~\ref{S:zero} below.

These modified models also fit in the paradigm of algebraic models on SRBLs or $\omega$PALs, being
models on \emph{upper cones} of PSCLs. For an abstract SRBL / $\omega$PAL $\Kc$, the upper cone of an element $b \in \KK$ is defined as follows:
\[
\KK \hx{b} = \{ a \in \KK \mid b \preo a \}.
\]
In general, this set may be not closed under the operations. There is, however, a simple sufficient condition for this to
be the case.

\begin{definition}
An element $\botp \in \KK$ is called a \emph{local zero} of $\Kc$, if for any $a \oerp \botp$ we have
$a \cdot \botp = \botp \cdot a = \botp$.
\end{definition}

\begin{lemma}
If $\botp$ is a local zero of $\Kc$, then the set $\KK \hx{\botp}$ is closed under all operations of $\Kc$.
\end{lemma}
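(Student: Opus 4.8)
The plan is to verify, operation by operation, that whenever $a, b \in \KK\hx{\botp}$ (that is, $a \oerp \botp$ and $b \oerp \botp$), the result of applying each operation again lies above $\botp$, and therefore belongs to $\KK\hx{\botp}$. Throughout I will use the monotonicity properties recorded just after the definition of $\omega$PAL, the residuation law, and, crucially, the defining property of the local zero.

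The lattice operations are immediate. If $\botp \preo a$ and $\botp \preo b$, then $\botp$ is a lower bound of $\{a,b\}$, so $\botp \preo a \wedge b$; and $a \vee b \oerp a \oerp \botp$. The nullary operations are equally direct: $\top$ is the greatest element, hence $\top \oerp \botp$ and $\top \in \KK\hx{\botp}$, while the bottom constant of the cone is reinterpreted as $\botp$ itself (this is exactly the ``non-standard $\bot$''), which lies in $\KK\hx{\botp}$ trivially. Positive iteration is also easy: since $a^+ = \sup_{\preo}\{a^n \mid n \geq 1\} \oerp a^1 = a \oerp \botp$, we get $a^+ \in \KK\hx{\botp}$.

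The product is where the local-zero hypothesis first enters. Since $a \oerp \botp$, the definition of local zero gives $a \cdot \botp = \botp$. Monotonicity of $\cdot$ in its second argument, together with $b \oerp \botp$, then yields $a \cdot b \oerp a \cdot \botp = \botp$, so $a \cdot b \in \KK\hx{\botp}$.

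The two divisions are the step I expect to be the main obstacle, since they are antitone in the denominator and so cannot be handled by a naive monotonicity argument; the residuation law is exactly what unblocks this. For left division, $\botp \preo a \BS b$ holds if and only if $a \cdot \botp \preo b$; but $a \oerp \botp$ gives $a \cdot \botp = \botp$, and $\botp \preo b$ holds by assumption, so $a \BS b \in \KK\hx{\botp}$. Symmetrically, $\botp \preo a \SL b$ holds if and only if $\botp \cdot b \preo a$, and $b \oerp \botp$ gives $\botp \cdot b = \botp \preo a$. Collecting these cases establishes that $\KK\hx{\botp}$ is closed under every operation of $\Kc$.
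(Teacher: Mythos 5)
Your proof is correct and follows essentially the same route as the paper's: monotonicity plus the local-zero identity for the product, residuation plus the local-zero identity for the two divisions, the general lattice fact for meet and join, and $a^+ \oerp a \oerp \botp$ for positive iteration. The only difference is cosmetic—you spell out the lattice and constant cases that the paper dispatches by citation and in the surrounding text.
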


\begin{proof}
If $a,b \oerp \botp$, then $a \cdot b \oerp a \cdot \botp = \botp$ and $a \cdot \botp = \botp \preo b$, whence $\botp \preo a \BS b$ (the
$\SL$ case is symmetric). Therefore, $\KK \hx{\botp}$ is closed under Lambek operations (product and two divisions). Closure of the upper cone
under meet and join is a general fact of lattice theory. Finally, if $\Kc$ were a $\omega$PAL, we also have positive iteration. If $a \oerp \botp$, then
we have $a^+ \oerp a \oerp \botp$.
\end{proof}

Thus, we may induce operations of $\Kc$ on its subset $\KK \hx{\botp}$, keep $\top$ as $\top$, and take $\botp$ as the new $\bot$, and this
construction gives a new SRBL or $\omega$PAL (depending on what the original $\Kc$ was). Now $\botp$ serves as the new $\bot$.

Let us apply this construction
to PSCL-models of the form $\BcLp$. The old $\bot = \varnothing^{\prhd\plhd}$ is replaced by a language $Z$ such that $Z = Z^{\prhd\plhd}$ and for any
$M \supseteq Z$ we have $(M \cdot Z)^{\prhd\plhd} = (Z \cdot M)^{\prhd\plhd} = Z$.

A trivial example of such $Z$ is $\varnothing^{\prhd\plhd}$ itself. However, depending on $L$, in $\BcLp$ there could exist other local zeroes $Z$. For example, fix two letters
$a,b \in \Sigma$ (e.g., for $\Sigma = \{ a, b, c, d \}$) and let $L = \{ uavb \mid u,v \in \Sigma^* \}$. Then, on one hand, one easily checks that the language  $Z = \{ uav \mid u,v \in \Sigma^* \}$ is a local zero.
On the other hand, in this example $\varnothing^{\prhd\plhd} = \varnothing \ne Z$.

\begin{definition}
A PSCL-model with non-standard $\bot$ for $\ACTpomega$ (resp., $\MALCp$) is a model over an $\omega$PAL (resp., SRBL) of the form $\BcLp \hx{Z}$, where $Z$
is a local zero of $\BcLp$.
\end{definition}

\section{Completeness of $\ACTpomega$ over PSCL-models}

In this section, we are going to prove strong completeness of $\ACTpomega$ over PSCL-models with non-standard $\bot$. The argument basically follows Wurm's one~\cite{Wurm2017}; however,
as noticed above, the construction is changed in order to accommodate $\top$ with its standard interpretation (namely, $\Sigma^+$). In the presence of Lambek's non-emptiness restriction, we
also do not have constant~$\U$, which makes the construction a bit simpler.

We start with the known strong completeness of $\ACTpomega$ over abstract algebraic models (see previous section). So, let us consider an $\omega$PAL $\Kc$ (which, presumably, satisfies
our set of hypotheses and falsifies the goal sequent). We shall transform it into an appropriate $\BcLph$ (the mapping $h$ will be properly defined below).

The set $\KK$ being the domain of $\Kc$, let us introduce two alphabets: $\oSigma = \{ \oa \mid a \in \KK \}$ and $\uSigma = \{ \ub \mid b \in \KK \}$, and let $\Sigma = \oSigma \cup \uSigma$.\footnote{Wurm
uses the same set $\KK$ for $\oSigma$; we perform a renaming for clarity.} For any non-empty word $w = \oa_1 \ldots \oa_n \in \oSigma^+$, let $w^\bullet = a_1 \cdot \ldots \cdot a_n$ (in $\Kc$).
As a matter of notation, we use $w, u, v$ to denote words from $\oSigma^*$; letters $x,y,z,s,t$ denote arbitrary words over $\Sigma^*$. By $|x|_{\uSigma}$ let us denote the number of letters from $\uSigma$ in $x$.

Let the designated language be
\[
 L = \{ w \ub \mid w \in \oSigma^+, w^{\bullet} \preo b \mbox{ in $\Kc$} \} \cup
 \{ x \in \Sigma^+ \mid |x|_{\uSigma} \geq 2 \}.
\]
Thus, we have a PSCL $\BcLp$.

Now let us define, simultaenously, its upper cone and a homomorphism $h$ from $\Kc$ to this upper cone.

The mapping $h$ is defined as follows, given $b \in \KK$:
\[
h(b) = \{ (\varepsilon, \ub ) \}^{\plhd} = \{ x \in \Sigma^+ \mid x \ub \in L \}.
\]
By now, $h$ maps $\Kc$ to $\BcLp$; however, $h(\bot) \ne \varnothing^{\prhd\plhd}$, so it is not a proper homomorphism.
In order to define the upper cone model, take $\botp = h(\bot)$. The following lemmata justify the correctness of the construction.

\begin{lemma}\label{Lm:hb}
 \(
  h(b) = \{ w \in \oSigma^+ \mid w^{\bullet} \preo b \mbox{\ \rm in $\Kc$} \} \cup \{ x \in \Sigma^+ \mid |x|_{\uSigma} > 0 \}.
 \)
\end{lemma}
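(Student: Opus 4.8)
The plan is to unfold both definitions and reduce the claim to a case analysis on whether the word under consideration contains a letter from $\uSigma$. Recall that $h(b) = \{ x \in \Sigma^+ \mid x\ub \in L \}$, and that $L$ is the union of two parts: a ``coding'' part $\{ w\underline{c} \mid w \in \oSigma^+,\ w^{\bullet} \preo c \}$ and a ``junk'' part $\{ x \in \Sigma^+ \mid |x|_{\uSigma} \geq 2 \}$. Fixing $x \in \Sigma^+$, membership $x \in h(b)$ means $x\ub \in L$, i.e.\ $x\ub$ lies in at least one of these two parts, and the proof amounts to computing when this happens.

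First I would handle the case $|x|_{\uSigma} > 0$. Here $x\ub$ contains at least two underlined letters, so $x\ub$ lies in the junk part of $L$; hence $x \in h(b)$. On the other side, such $x$ belongs to the second component $\{ x \in \Sigma^+ \mid |x|_{\uSigma} > 0 \}$ of the claimed right-hand side. Thus both sides contain exactly this class of words, independently of the choice of $b$.

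The remaining case is $|x|_{\uSigma} = 0$, i.e.\ $x \in \oSigma^+$. Now $x\ub$ has exactly one underlined letter, so it cannot lie in the junk part, and therefore $x\ub \in L$ holds iff $x\ub$ lies in the coding part, i.e.\ iff $x\ub = w\underline{c}$ for some $w \in \oSigma^+$ with $w^{\bullet} \preo c$. The key (and only mildly delicate) observation is that this factorization is forced: the single underlined letter of $x\ub$ is its final letter $\ub$, so necessarily $\underline{c} = \ub$ (whence $c = b$) and $w = x$. Consequently $x\ub \in L$ iff $x^{\bullet} \preo b$, which is precisely membership of $x$ in the first component $\{ w \in \oSigma^+ \mid w^{\bullet} \preo b \}$ of the right-hand side.

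Finally I would observe that the two components of the right-hand side are disjoint (one consists of words with no underlined letters, the other of words with at least one), so the two cases above together yield the desired equality. I do not anticipate any genuine obstacle here: the entire content sits in the unique-factorization remark of the third paragraph, and everything else is routine bookkeeping on the two disjuncts defining $L$.
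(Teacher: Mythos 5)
Your proof is correct and follows essentially the same route as the paper's: a case split on whether $x$ contains a letter from $\uSigma$, with the junk part of $L$ absorbing the first case and the coding part handling $x \in \oSigma^+$. The paper's own proof is just a terser version of yours; your explicit unique-factorization remark (that the sole underlined letter of $x\ub$ must be the final $\ub$, forcing $c = b$ and $w = x$) is left implicit there but is the same observation.
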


\begin{proof}
 Let $x \in \Sigma^+$.
 By definition, $x \in h(b)$ if{f} $x \ub \in L$. If $|x|_{\uSigma} > 0$, then $|x\ub|_{\uSigma} \geq 2$, thus $x \ub$ belongs to  $L$ (its second component).
 If $x = w \in \oSigma^+$, then $|w \ub|_{\uSigma} = 1$, so $w \ub \in L$ if{f} $w^\bullet \preo b$ in $\Kc$.
\end{proof}

\begin{lemma}\label{Lm:replace}
Let $v,u \in \oSigma^+$, then $v^{\bullet} \preo u^{\bullet}$ in $\Kc$ if{f} $v \leqslant_L u$.
If $x \in \Sigma^+$ and $|x|_{\uSigma} > 0$, then $x \leqslant_L w$ for any $w \in \oSigma^+$.
\end{lemma}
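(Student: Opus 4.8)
The plan is to prove the two assertions separately, both resting on the same two elementary facts. First, the map ${}^\bullet$ is a semigroup homomorphism from $(\oSigma^+,\text{concatenation})$ to $(\KK,\cdot)$, so that $(w_1 w_2)^\bullet = w_1^\bullet \cdot w_2^\bullet$; since $\cdot$ is monotone in both arguments, the inequality $v^\bullet \preo u^\bullet$ then propagates through any overline context. Second, every word of $L$ contains at least one underline letter (the first disjunct contributes exactly one such letter, the trailing $\ub$, and the second demands at least two), whereas words of $\oSigma^+$ contain none.

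For the equivalence $v^\bullet \preo u^\bullet \iff v \leqslant_L u$ I would treat the two directions asymmetrically. The direction $(\Leftarrow)$ is a one-context extraction: take the single context $(\varepsilon, \underline{u^\bullet})$. By reflexivity $u^\bullet \preo u^\bullet$, so $u\,\underline{u^\bullet} \in L$ via the first disjunct; applying $v \leqslant_L u$ yields $v\,\underline{u^\bullet} \in L$. Since $v \in \oSigma^+$, this word has exactly one underline letter, hence cannot lie in the second disjunct, so it lies in the first, which reads off precisely as $v^\bullet \preo u^\bullet$.

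For $(\Rightarrow)$ I would fix an arbitrary context $(x,y)$ with $xuy \in L$ and show $xvy \in L$ by cases on which disjunct $xuy$ falls into. If $|xuy|_\uSigma \ge 2$, then because $u,v \in \oSigma^+$ carry no underline letters we have $|xvy|_\uSigma = |xuy|_\uSigma \ge 2$, so $xvy \in L$ as well. The substantive case is $xuy \in L$ via the first disjunct, i.e.\ $xuy = w'\,\ub$ with $(w')^\bullet \preo b$ and $|xuy|_\uSigma = 1$. Here the lone underline letter is the final one; since $u$ is entirely overline and non-empty, this forces $y$ to be non-empty and to end in $\ub$, with everything else overline, so $x \in \oSigma^*$ and $y = y_0\,\ub$ with $y_0 \in \oSigma^*$ and $w' = x u y_0$. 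Then by the homomorphism property and monotonicity of $\cdot$, $(x v y_0)^\bullet \preo (x u y_0)^\bullet = (w')^\bullet \preo b$, so $xvy = (xvy_0)\,\ub \in L$ via the first disjunct, as required.

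Part 2 is then a short counting argument using the second fact above. Given $x$ with $|x|_\uSigma > 0$ and $w \in \oSigma^+$, fix a context $(s,t)$ with $swt \in L$; then $swt$ has at least one underline letter, all of which lie in $s$ or $t$ since $w$ is overline, so $|swt|_\uSigma = |s|_\uSigma + |t|_\uSigma \ge 1$. Replacing $w$ by $x$ adds at least one more underline letter, giving $|sxt|_\uSigma = |swt|_\uSigma + |x|_\uSigma \ge 2$, whence $sxt \in L$ by the second disjunct. The main obstacle is the positional bookkeeping in the first-disjunct case of $(\Rightarrow)$: carefully arguing that the single underline letter is pinned to the very end so that $w'$ factors as $xuy_0$ over $\oSigma$, together with the degenerate subcases where $x$ or $y_0$ is empty, in which the homomorphism and monotonicity step must be read off correctly.
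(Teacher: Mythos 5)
Your proof is correct and follows essentially the same route as the paper's own argument: the forward direction by case analysis on which disjunct of $L$ the word $xuy$ falls into (counting underlined letters for the second disjunct, monotonicity and transitivity of $\cdot$ for the first), the converse by instantiating the single context $(\varepsilon, \underline{u^\bullet})$, and the second statement by the same letter-counting observation. Your extra care about pinning the unique underlined letter to the final position and about the degenerate empty subcases is exactly the point the paper flags parenthetically, so nothing is missing.
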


\begin{proof}
Let $v^{\bullet} \preo u^{\bullet}$, prove that $v \leqslant_L u$. Let $s u t \in L$ for some $s,t \in \Sigma^*$. If $|s u t|_{\uSigma} \geq 2$, then also $|s v t|_{\uSigma} \geq 2$, whence $svt \in L$. Otherwise,
$s u t$ is of the form $w\ub$, i.e., $s = w_1$ and $t = w_2 \ub$, where $w_1, w_2 \in \oSigma^*$. Moreover, $(w_1 u w_2)^\bullet  \preo b$. Since $v^{\bullet} \preo u^{\bullet}$,
we also have $(w_1 v w_2)^\bullet  \preo b$ by monotonicity of product and transitivity. (Notice that here $w_1$ or $w_2$ is allowed to be empty.) Hence, $svt = w_1 v w_2 \ub \in L$.
Now let $v \leqslant_L u$. Let $a = u^{\bullet}$. Take $s = \varnothing$ and $t = \ua$. We have $s u t = u \ua \in L$, and therefore $svt = v\ua \in L$. The latter means $v^{\bullet} \preo a = u^{\bullet}$.

For the second statement, let $s w t \in L$ for some $s,t \in \Sigma^*$. Since $|swt|_{\uSigma} \geq 1$, and $x$ includes at least one more letter from $\uSigma$, while $w$ has none. Therefore,
$|sxt|_{\uSigma} \geq 2$, whence $sxt \in L$.
\end{proof}

\begin{lemma}\label{Lm:hom}
The mapping $h$ preserves all operations of $\Kc$ and maps $\top$ to $\Sigma^+$. Moreover, if $a \preo b$ in $\Kc$, then $h(a) \subseteq h(b)$.
\end{lemma}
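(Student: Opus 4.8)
The plan is to verify the homomorphism conditions one operation at a time, exploiting the two preceding lemmata as the main tools. Monotonicity ($a \preo b \Rightarrow h(a) \subseteq h(b)$) comes essentially for free: by Lemma~\ref{Lm:hb}, $h(a)$ and $h(b)$ share the same ``bulk'' part $\{x : |x|_{\uSigma} > 0\}$, and differ only in which words $w \in \oSigma^+$ they contain, namely those with $w^\bullet \preo a$ resp.\ $w^\bullet \preo b$. Since $a \preo b$ gives $w^\bullet \preo a \Rightarrow w^\bullet \preo b$ by transitivity, the inclusion follows immediately. The value of $\top$ is also immediate: $h(\top) = \{w : w^\bullet \preo \top\} \cup \{x : |x|_{\uSigma} > 0\}$, and $w^\bullet \preo \top$ holds for every $w$ since $\top$ is the lattice maximum, so $h(\top) = \oSigma^+ \cup \{x : |x|_{\uSigma} > 0\} = \Sigma^+$.

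For each operation I would reduce ``$x \in h(a \mathbin{\mathrm{op}} b)$'' to a condition on $x^\bullet$ via Lemma~\ref{Lm:hb}, split into the two cases $|x|_{\uSigma} > 0$ and $x = w \in \oSigma^+$, and match it against the definition of the corresponding operation on $\BcLp$. The case $|x|_{\uSigma} > 0$ is uniformly trivial: such an $x$ lies in \emph{every} $h(\cdot)$ and, by the second part of Lemma~\ref{Lm:replace}, is replaceable below any $w \in \oSigma^+$, so it survives every closure; hence both sides always contain it and I only need to track the $\oSigma^+$ words. For meet, $w \in h(a \wedge b) \iff w^\bullet \preo a \wedge b \iff w^\bullet \preo a$ and $w^\bullet \preo b$, matching $h(a) \cap h(b)$ directly. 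For the divisions, $w \in h(a \BS b) \iff w^\bullet \preo a \BS b \iff a \cdot w^\bullet \preo b$ (residuation), and I would show this is equivalent to $w \in h(a) \BS h(b) = \{u : (\forall v \in h(a))\, vu \in h(b)\}$; the forward direction uses that $v \in h(a)$ forces $v^\bullet \preo a$ (via Lemma~\ref{Lm:replace}), and the backward direction instantiates $v$ at a word $\oa$ realising $a = \oa^\bullet$.

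The genuinely delicate cases are the ones requiring closure---join, product, and positive iteration---because on $\BcLp$ these are defined as $(\cdot)^{\prhd\plhd}$ of the naive operation, so I must prove the closure is \emph{already absorbed} by $h$. For product, the inclusion $h(a) \circ h(b) \subseteq h(a \cdot b)$ reduces to showing that $h(a \cdot b)$ is closed (which it is, being some $h(c)$) and contains $h(a) \cdot h(b)$: a concatenation $vu$ with $v^\bullet \preo a$, $u^\bullet \preo b$ satisfies $(vu)^\bullet = v^\bullet \cdot u^\bullet \preo a \cdot b$ by monotonicity. The reverse inclusion $h(a \cdot b) \subseteq h(a) \circ h(b)$ is where I expect the main obstacle: I must show every $w$ with $w^\bullet \preo a\cdot b$ lands in the closure of $h(a)\cdot h(b)$, and the clean way is to argue $w \leqslant_L w'$ for some $w' \in h(a)\cdot h(b)$ and invoke the replaceability property of closed languages stated just before Lemma~\ref{P:correc}. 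Concretely I would pick $w' = \oa\,\ob$ with $\oa^\bullet = a$, $\ob^\bullet = b$, so $(w')^\bullet = a\cdot b \oerp w^\bullet$, giving $w \leqslant_L w'$ by Lemma~\ref{Lm:replace}; since $w'$ visibly lies in $h(a)\cdot h(b)$ hence in its closure, and closed languages are downward closed under $\leqslant_L$, we get $w \in h(a)\circ h(b)$. The join and iteration cases follow the same template---reduce to the naive operation plus a single replaceability step---with iteration additionally unfolding $a^+ = \sup\{a^n\}$ and handling the supremum by the fact that $w^\bullet \preo a^+$ together with $*$-continuity lets one compare $w$ against a power $(\oa)^n$; the $\omega$PAL definition $a^+ = \sup_{\preo}\{a^n\}$ is exactly what guarantees no extra elements sneak in beyond the closure of $\bigcup_n h(a)^n$.
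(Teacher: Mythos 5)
Your treatment of monotonicity, $\top$, meet, the divisions, and product is sound and matches the paper's proof; in particular, for product your two-step argument (the naive inclusion $h(a)\cdot h(b) \subseteq h(a\cdot b)$ plus closedness of the right-hand side in one direction; replaceability $x \leqslant_L \oa\ob$ below the witness $\oa\ob \in h(a)\cdot h(b)$ in the other) is exactly the paper's. The genuine gap is your claim that join and positive iteration ``follow the same template---reduce to the naive operation plus a single replaceability step.'' For product this template works only because the witness $\oa\ob$ lies in the naive set $h(a)\cdot h(b)$ itself. For join it fails: the needed witness is the single letter $\overline{a\vee b}$, and this letter does \emph{not} belong to $h(a)\cup h(b)$ in general, since neither $a \vee b \preo a$ nor $a \vee b \preo b$ holds; moreover no other word of the naive union can serve, because any $w' \in (h(a)\cup h(b)) \cap \oSigma^+$ has $(w')^\bullet \preo a$ or $(w')^\bullet \preo b$, whereas an $x \in h(a\vee b)$ only satisfies $x^\bullet \preo a \vee b$, which need not be below anything below $a$ or below $b$. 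The paper therefore inserts a separate, direct argument that $\overline{a\vee b}$ lies in the \emph{closure} $\bigl(h(a)\cup h(b)\bigr)^{\prhd\plhd}$: take any context $(s,t)$ accepting all of $h(a)\cup h(b)$, show that in the nontrivial case $t$ must end in a letter $\ud \in \uSigma$ with $s,t' \in \oSigma^*$, and use soundness of ${\vee}\rL$ (from $(s\oa t')^\bullet \preo d$ and $(s\ob t')^\bullet \preo d$ infer $(s\,\overline{a\vee b}\,t')^\bullet \preo d$). Only after this does replaceability $x \leqslant_L \overline{a\vee b}$ finish the forward inclusion. This context analysis is the heart of the union case and is missing from your proposal.

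For positive iteration the same structure is needed, and there your proposed fix is actually incorrect: you say that $w^\bullet \preo a^+$ together with $*$-continuity ``lets one compare $w$ against a power $(\oa)^n$,'' i.e., you want $w^\bullet \preo a^n$ for some finite $n$. That is the wrong direction of the supremum: $a^+ = \sup_{\preo}\{a^n \mid n \geq 1\}$ does not place every element below $a^+$ below some finite power---if it did, the naive set $(h(a))^+$ would already absorb everything and the closure operator would be superfluous here. The correct use of $*$-continuity is inside the context analysis for the witness $\overline{a^+}$: since $\oa^n \in (h(a))^+$ for every $n$, an accepting context gives $s^\bullet \cdot a^n \cdot t'^\bullet \preo d$ for all $n$, and the strong ($*$-continuity-in-context) identity $s^\bullet \cdot a^+ \cdot t'^\bullet = \sup_n \, s^\bullet \cdot a^n \cdot t'^\bullet$, which the paper notes follows from residuation, yields $(s\,\overline{a^+}\,t')^\bullet \preo d$. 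Hence $\overline{a^+} \in \bigl((h(a))^+\bigr)^{\prhd\plhd}$, and then $x \leqslant_L \overline{a^+}$ concludes, exactly as in the union case.
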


\begin{proof} By Lemma~\ref{Lm:hb}, $h(\top)$ includes any word $x$ with $|x|_{\uSigma} > 0$. If $w \in \oSigma^+$, then $w \in h(\top)$ if{f} $w^\bullet \preo \top$. The latter, however, holds
for any $w \in \oSigma^+$. Thus, $h(\top) = \Sigma^+$.

If $a \preo b$ in $\Kc$ and $x \in h(a)$, then either $|x|_{\uSigma} > 0$, and then $x \in h(b)$, or $x = w \in \oSigma^+$ and $w^{\bullet} \preo a$. In the latter case, $w^{\bullet} \preo b$ by transitivity,
so $w \in h(b)$.

Now let us consider the operations of $\Kc$.

\vskip 3pt
\emph{Product:} $h(a \cdot b) = h(a) \ccdot h(b) = \bigl( h(a) \cdot h(b) \bigr)^{\prhd\plhd}$.
Take an arbitrary $x \in h(a \cdot b)$. By Lemma~\ref{Lm:replace}, we have $x \leqslant_L \oa\ob$. Indeed,
if $|x|_{\uSigma} > 0$, this is always the case, and if $x = w \in \oSigma^+$, we use the fact that
$w^{\bullet} \preo a \cdot b = (\oa\ob)^{\bullet}$. Since $\oa\ob \in h(a) \cdot h(b)$ and $x \leqslant_L \oa\ob$, we have
$x \in h(a) \ccdot h(b)$.

For the opposite inclusion, it is sufficient to prove that $h(a) \cdot h(b) \subseteq h(a \cdot b)$, since the latter is closed.
Take a word $x = x_1 x_2$, where $x_1 \in h(a)$ and $x_2 \in h(b)$. If at least one of $x_1$, $x_2$ includes a letter from $\uSigma$, then
so does $x$, and therefore $x \in h(a \cdot b)$ (Lemma~\ref{Lm:hb}). Otherwise $x_1 = w_1$ and $x_2 = w_2$ belong to $\oSigma^+$, and by Lemma~\ref{Lm:hb}
we have $w_1^{\bullet} \preo a$ and $w_2^{\bullet} \preo b$. Hence, $(w_1 w_2)^{\bullet} \preo a \cdot b$, and again by Lemma~\ref{Lm:hb} we have
$x = w_1 w_2 \in h(a \cdot b)$.

\vskip 3pt
\emph{Division:} $h(b \SL a) = h(b) \SL h(a)$. For each $x \in h(b \SL a)$, let us show $x \in h(b) \SL h(a)$. Take any $y \in h(a)$. If either $x$
or $y$ includes a letter from $\uSigma$, so does $xy$, therefore $xy \in h(b)$ by Lemma~\ref{Lm:hb}. Otherwise, $x = w$ and $y = v$ both belong to
$\oSigma^+$, and by Lemma~\ref{Lm:hb} we have $w^{\bullet} \preo b \SL a$ and $v^{\bullet} \preo a$ in $\Kc$. Next, $(wv)^{\bullet} \preo (b \SL a) \cdot a \preo b$,
whence $xy = wv \in h(b)$.

Now take $x \in h(b) \SL h(a)$. If $|x|_{\uSigma} > 0$, then $x$ automatically belongs to $h(b \SL a)$ by Lemma~\ref{Lm:hb}. So let $x = w \in \uSigma^+$. Take $v = \oa \in h(a)$.
We have $w\oa \in h(b)$, whence $w^\bullet \cdot a \preo b$ in $\Kc$, which is equivalent to $w^{\bullet} \preo b \SL a$. The latter means $w \in h(b \SL a)$.

The $\BS$ case is symmetric: though there is an asymmetry in the definition of $L$, the formula for $h(b)$ and the lemmata we use are perfectly symmetric.

\vskip 3pt
\emph{Intersection:} $h(a \wedge b) = h(a) \cap h(b)$. If a word includes a letter from $\uSigma$, it belongs to both sides. Thus, the interesting case is $w \in \oSigma^+$.
Using Lemma~\ref{Lm:hb}, we have $w \in h(a \wedge b)$ if{f} $w^{\bullet} \preo a \wedge b$ if{f} $w^{\bullet} \preo a$ and $w^{\bullet} \preo b$ if{f}
$w \in h(a)$ and $w \in h(b)$.

\vskip 3pt
\emph{Union:} $h(a \vee b) = h(a) \sqcup h(b) = \bigl( h(a) \cup h(b) \bigr)^{\prhd\plhd}$.
Let us show that $\overline{a \vee b}$ (this is one letter) belongs to $\bigl( h(a) \cup h(b) \bigr)^{\prhd\plhd}$. Take a pair $s,t$ such that for any $x \in h(a) \cup h(b)$ the word $sxt$ belongs to $L$. If $st$ includes at least two letters from $\uSigma$, then so does $s\, \overline{a \vee b}\, t$, and therefore the latter belongs to $L$. Otherwise, a letter from $\uSigma$ should be the last letter of $t$ (otherwise we take, say $x = \oa$ and get $sxt \notin L$), i.e., $t = t'\ud$. Now $s$ and $t'$ belong to $\oSigma^*$, and we have $(s \oa t')^{\bullet} \preo d$ and $(s \ob t')^{\bullet} \preo d$. Therefore, $(s \, \overline{a \vee b} \, t')^{\bullet} \preo d$ (here we use the ${\vee}\rL$ rule, which is sound in any SRBL), whence $s \, \overline{a \vee b} \, t' \, \ud  \in L$.

 Now take any $x \in h(a \vee b)$. By Lemma~\ref{Lm:replace}, we
have $x \leqslant_L \overline{a \vee b}$. Therefore, $x \in \bigl( h(a) \cup h(b) \bigr)^{\prhd\plhd}$.

For the opposite inclusion, it is sufficient to prove $h(a) \cup h(b) \subseteq h(a \vee b)$, since the latter is closed. If $x \in h(a)$, then either $|x|_{\uSigma} > 0$ (and then
$x \in h(a \cup b)$, or $x = w \in \oSigma^+$ and $w^{\bullet} \preo a$ in $\Kc$. In the latter case, we also have $w^{\bullet} \preo a \vee b$, whence $w \in h(a \vee b)$. The $h(b)$ case
is symmetric.

\vskip 3pt
\emph{Positive iteration:} $h(a^+) = \bigl( h(a) \bigr)^{\oplus} = \bigl( (h(a))^+ \bigr)^{\prhd\plhd}$. The argument is similar to the one for union. First, we show that $\overline{a^+}$ (again, this is one letter) belongs to $\bigl( (h(a))^+ \bigr)^{\prhd\plhd}$. Next, if $x \in h(a^+)$, then by Lemma~\ref{Lm:replace} we have $x \leqslant_L \overline{a^+}$, whence $x \in \bigl( (h(a))^+ \bigr)^{\prhd\plhd}$. For the opposite inclusion, we take $x = x_1 \ldots x_n \in h(a)^n$ (for some $n \geq 1$). If at least one $x_i$ includes a letter from $\uSigma$, then so does $x$, and it belongs to $h(a^+)$. Otherwise, each $x_i = w_i \in \oSigma^+$, and $w_i^{\bullet} \preo a$ for each $i$. Then we have $(w_1 \ldots w_n)^{\bullet} \preo a^n \preo a^+$, whence $x \in h(a^+)$.
%Let
%$x \in h(a \cdot b)$, let us show that $x \in M = \bigl( h(a) \cdot h(b) \bigr)^{\prhd\plhd}$. If $x \notin \oSigma^+$,
\end{proof}

\begin{lemma}\label{Lm:lz}
$h(\bot)$ is a local zero of $\BcLp$.
\end{lemma}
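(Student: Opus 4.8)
The plan is to unfold what ``local zero'' means inside the concrete algebra $\BcLp$ and then verify the two required equalities by direct, mostly set-theoretic, manipulation. Writing $Z = h(\bot)$, I must show that for every closed language $M$ with $Z \subseteq M$ one has $(M \cdot Z)^{\prhd\plhd} = (Z \cdot M)^{\prhd\plhd} = Z$. Two preliminary observations will drive everything. First, in the original $\omega$PAL $\Kc$ the element $\bot$ is not merely the lattice minimum but a genuine two-sided annihilator: from the residuation law $a \cdot \bot \preo c \iff \bot \preo a \BS c$ and the fact that $\bot$ is the least element, the right-hand side always holds, so $a \cdot \bot \preo c$ for all $c$, giving $a \cdot \bot = \bot$ (and symmetrically $\bot \cdot a = \bot$). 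Second, by Lemma~\ref{Lm:hb} the language $Z$ has the explicit shape $Z = \{ w \in \oSigma^+ \mid w^{\bullet} = \bot \} \cup \{ x \in \Sigma^+ \mid |x|_{\uSigma} > 0 \}$, and $Z$ is closed since it lies in $\BcLp$.

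For the inclusion $(M \cdot Z)^{\prhd\plhd} \subseteq Z$ I would establish the stronger set-theoretic containment $M \cdot Z \subseteq Z$ and then apply monotonicity of the closure together with $Z^{\prhd\plhd} = Z$. Given $x \in M$ and $z \in Z$, consider the concatenation $xz$: if either factor contains a letter of $\uSigma$ then so does $xz$, and $xz \in Z$ immediately; otherwise $x = w$ and $z = w'$ both lie in $\oSigma^+$ with $(w')^{\bullet} = \bot$, and then $(ww')^{\bullet} = w^{\bullet} \cdot (w')^{\bullet} = w^{\bullet} \cdot \bot = \bot$ by the annihilation property, so again $xz \in Z$. (Notably this direction does not even use $M \supseteq Z$.) The argument for $Z \cdot M \subseteq Z$ is symmetric, using $\bot \cdot w^{\bullet} = \bot$.

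The substantive direction is $Z \subseteq (M \cdot Z)^{\prhd\plhd}$, and this is where I would use both $M \supseteq Z$ and the replaceability machinery of Lemma~\ref{Lm:replace}. The key witness is the two-letter word $\overline{\bot}\,\overline{\bot} \in \oSigma^+$: since $\overline{\bot}^{\bullet} = \bot$, the letter $\overline{\bot}$ belongs to $Z$, hence to $M$, so $\overline{\bot}\,\overline{\bot} \in M \cdot Z \subseteq (M \cdot Z)^{\prhd\plhd}$. Now take any $z \in Z$. If $z = w \in \oSigma^+$ with $w^{\bullet} = \bot$, then $w^{\bullet} \preo (\overline{\bot}\,\overline{\bot})^{\bullet}$, so $z \leqslant_L \overline{\bot}\,\overline{\bot}$ by the first part of Lemma~\ref{Lm:replace}; if instead $|z|_{\uSigma} > 0$, then $z \leqslant_L \overline{\bot}\,\overline{\bot}$ by its second part. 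In either case $z$ is $\leqslant_L$ the element $\overline{\bot}\,\overline{\bot}$ of the closed language $(M \cdot Z)^{\prhd\plhd}$, and since closed languages are downward closed under $\leqslant_L$ (the replaceability fact proved just before Lemma~\ref{P:correc}), we conclude $z \in (M \cdot Z)^{\prhd\plhd}$. The inclusion $Z \subseteq (Z \cdot M)^{\prhd\plhd}$ is obtained identically with the same witness. I expect the only real subtlety to be recognising that $\bot$ annihilates products in $\Kc$ and choosing a witness lying entirely in $\oSigma^+$ (so that replaceability applies cleanly to both kinds of words in $Z$); everything else is a routine case split.
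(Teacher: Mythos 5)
Your proof is correct, and it splits the task into the same two inclusions as the paper; the annihilator direction $M \cdot h(\bot) \subseteq h(\bot)$ (case split on whether a letter from $\uSigma$ occurs, then $u^{\bullet} \cdot \bot = \bot$ for the purely-$\oSigma$ case) is essentially identical to the paper's argument. Where you genuinely diverge is the converse inclusion $h(\bot) \subseteq \bigl(M \cdot h(\bot)\bigr)^{\prhd\plhd}$. The paper gets this in one line from already-established structure: since $\bot = \bot \cdot \bot$ in $\Kc$ and $h$ preserves product (Lemma~\ref{Lm:hom}), $h(\bot) = h(\bot) \ccdot h(\bot)$, and monotonicity of $\ccdot$ together with $h(\bot) \subseteq M$ finishes it. You instead argue from first principles: you exhibit the witness $\overline{\bot}\,\overline{\bot} \in M \cdot h(\bot)$, show every $z \in h(\bot)$ satisfies $z \leqslant_L \overline{\bot}\,\overline{\bot}$ via the two cases of Lemma~\ref{Lm:replace}, and invoke downward closure of closed languages under $\leqslant_L$. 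Your route is more elementary in that it bypasses Lemma~\ref{Lm:hom} entirely (only Lemmas~\ref{Lm:hb} and~\ref{Lm:replace} are needed), which makes the lemma logically independent of the homomorphism result; the cost is that you redo, in concrete combinatorial form, work that the homomorphism property already packages. The paper's version is shorter and emphasises that the local-zero property is an algebraic shadow of $\bot \cdot \bot = \bot$; yours makes visible exactly which words witness the closure membership. Both are sound; your observation that the annihilator direction never uses $h(\bot) \subseteq M$ is accurate and also implicit in the paper.
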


\begin{proof}
Take any $M \in \BcLp$ such that $h(\bot) \subseteq M$. (Recall that the preorder of $\BcLp$ is the subset relation.) In particular,
for any $a \in \KK$ we have $h(\bot) \subseteq h(a)$, since $\bot \preo a$ in $\Kc$ (by Lemma~\ref{Lm:hom}).
First, since $\bot = \bot \cdot \bot$ in $\Kc$, by Lemma~\ref{Lm:hom} we have $h(\bot) = h(\bot) \ccdot h(\bot)$, and, further by
monotonicity, $h(\bot) \ccdot h(\bot) \subseteq M \ccdot h(\bot)$ and $h(\bot) \ccdot h(\bot) \subseteq h(\bot) \ccdot M$.

Now it remains to prove the opposite inclusions: $M \ccdot h(\bot) \subseteq h(\bot)$ and $h(\bot) \ccdot M \subseteq h(\bot)$.
Since $h(\bot)$ is closed, it is sufficient to prove $M \cdot h(\bot) \subseteq h(\bot)$ (the second one, $h(\bot) \cdot M \subseteq h(\bot)$, is symmetric).
Let $x \in M$ and $y \in h(\bot)$. If at least one of these words includes a letter from $\uSigma$, then so does $xy$, and therefore $xy \in h(\bot)$ by
Lemma~\ref{Lm:hb}. Otherwise, $x = u$, $y = w$, both from $\oSigma^+$. By Lemma~\ref{Lm:hb}, since $w \in h(\bot)$, we have $w^{\bullet} \preo \bot$.
Then $(uw)^{\bullet} \preo u^{\bullet} \cdot \bot = \bot$, and therefore $xy = uw \in h(\bot)$.
\end{proof}

This lemma allows us to correctly define $\BcLph$ and consider a model over it. Let $\beta$ be the interpretation function over the original
$\omega$PAL $\Kc$. Then the new interpretation function, over $\BcLph$, is defined as follows, for any formula $A$:
\(
\alpha(A) = h(\beta(A)).
\)
By Lemma~\ref{Lm:hom} and the fact that $h(\bot) = \botp$, we see that $h$ is a homomorphism of $\Kc$ to $\BcLph$. Therefore,
$\alpha$ is a correctly defined interpretation function.

Now everything is ready to prove strong completeness.

\begin{theorem}\label{Th:PSCL}
$\ACTpomega$ is strongly complete w.r.t.\ PSCL-models with non-standard $\bot$.
\end{theorem}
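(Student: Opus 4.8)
The plan is to argue by contraposition, converting a failed derivation into a refuting model from the target class by means of the construction just developed. Assume $\Hc \nvdash_{\ACTpomega} \Pi \yields C$. By the strong completeness of $\ACTpomega$ over the class of all $\omega$PALs recalled at the start of this section, I would obtain an $\omega$PAL $\Kc$ together with an interpretation $\beta$ under which every sequent of $\Hc$ is true while $\Pi \yields C$ is false. This $\Kc$ is then fed into the apparatus of the present section: one forms $\Sigma = \oSigma \cup \uSigma$, the designated language $L$, the map $h$, the upper cone $\BcLph$ with $\botp = h(\bot)$, and the interpretation $\alpha(A) = h(\beta(A))$.

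First I would confirm that this produces a bona fide PSCL-model with non-standard $\bot$. By Lemma~\ref{Lm:lz}, $h(\bot)$ is a local zero of $\BcLp$, so $\BcLph$ is a well-defined $\omega$PAL of exactly the prescribed shape. By Lemma~\ref{Lm:hom}, $h$ preserves all operations and sends $\top$ to $\Sigma^+$, and since $h(\bot)=\botp$ is the bottom of $\BcLph$, the map $h$ is a genuine $\omega$PAL-homomorphism $\Kc \to \BcLph$; hence $\alpha$ is a correctly defined interpretation over $\BcLph$.

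The crux is then to show that $h$ not only preserves but also \emph{reflects} the order, i.e.\ $h(a) \subseteq h(b)$ iff $a \preo b$ in $\Kc$. The forward implication is already part of Lemma~\ref{Lm:hom}. For the converse I would test membership of the single-letter words: since $\oa^{\bullet} = a \preo a$, Lemma~\ref{Lm:hb} gives $\oa \in h(a)$, so $h(a) \subseteq h(b)$ yields $\oa \in h(b)$, and as $\oa \in \oSigma^+$ the same lemma forces $a = \oa^{\bullet} \preo b$. With $h$ established as an order-embedding, truth transfers in both directions: for a sequent $D_1, \ldots, D_k \yields E$ (with $k \geq 1$ by Lambek's restriction) homomorphy gives $\alpha(D_1) \ccdot \cdots \ccdot \alpha(D_k) = h(\beta(D_1) \cdot \cdots \cdot \beta(D_k))$ and $\alpha(E) = h(\beta(E))$, so the sequent holds in $\BcLph$ exactly when $\beta(D_1) \cdot \cdots \cdot \beta(D_k) \preo \beta(E)$, i.e.\ exactly when it holds in $\Kc$. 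Thus every sequent of $\Hc$ stays true and $\Pi \yields C$ stays false in $\BcLph$, contradicting $\Hc \vDash_{\Mf} \Pi \yields C$ and closing the contrapositive.

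I expect the theorem itself to be essentially an assembly argument: almost all the technical weight has already been spent in Lemmas~\ref{Lm:hb}, \ref{Lm:replace}, \ref{Lm:hom}, and \ref{Lm:lz}. The one ingredient not literally among them is the faithfulness (order-reflection) of $h$, which I anticipate as the only genuine point to pin down, even though, as above, it drops out of Lemma~\ref{Lm:hb} in a single line by evaluating on the generating letters $\oa$. A secondary thing to keep straight is that the infinitary $\omega$-rule has been wholly absorbed into the cited abstract strong completeness, so no additional well-foundedness or compactness bookkeeping is required at this stage.
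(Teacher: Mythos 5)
Your proposal is correct and follows essentially the same route as the paper: contraposition via abstract algebraic completeness, the same construction of $\BcLph$ with $\alpha = h \circ \beta$ (justified by Lemmas~\ref{Lm:hom} and~\ref{Lm:lz}), and truth transfer in both directions. The only cosmetic difference is in the backward direction: the paper evaluates the witness word $\oa_1 \ldots \oa_n$ inside $h(a_1) \circ \ldots \circ h(a_n) \subseteq h(b)$ and applies Lemma~\ref{Lm:hb} directly, whereas you first collapse the product by homomorphy to $h(a_1 \cdot \ldots \cdot a_n) \subseteq h(b)$ and then invoke order-reflection tested on the single letter $\overline{a_1 \cdot \ldots \cdot a_n}$ --- the same idea, packaged as a separate faithfulness claim.
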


\begin{proof}
Consider a set of sequents (hypotheses) $\Hc$ and a sequent $\Pi \yields C$. Suppose $\Hc \nvdash \Pi \yields C$.
By algebraic completeness, let $\Md$ be a model over an $\omega$PAL $\Kc$, such that all sequents from $\Hc$ are true in $\Md$, but $\Pi \yields C$ is not.
Using the construction above, we build the corresponding PSCL $\BcLp$ and its upper cone $\BcLph$. The new model $\Md'$ is given by the interpretation function
$\alpha(A) = h(\beta(A))$, where $\beta$ was the original interpretation function. We claim that $\Md'$ makes true exactly the same sequents as $\Md$ does.
In particular, this means that all sequents from $\Hc$ are true in $\Md'$, but $\Pi \yields C$ is not. So, $\Hc \nvDash \Pi \yields C$ on the class of PSCL-models with
non-standard $\bot$, thus we obtain strong completeness by contraposition.

Let a sequent $A_1, \ldots, A_n \yields B$ be true in the original (abstract) model $\Md$, i.e., $\beta(A_1) \cdot \ldots \cdot \beta(A_n) \preo \beta(B)$. Since
$h$ is a homomorphism, we also have $h(\beta(A_1)) \circ \ldots \circ h(\beta(A_n)) \subseteq h(\beta(B))$, i.e., $\alpha(A_1) \cdot \ldots \cdot \alpha(A_n) \subseteq \alpha(B)$.
This means that the sequent in question is true in $\Md'$. For the opposite direction, let $h(\beta(A_1)) \circ \ldots \circ h(\beta(A_n)) \subseteq h(\beta(B))$ and denote
$\beta(A_1)$, \ldots, $\beta(A_n)$, $\beta(B)$ by $a_1, \ldots, a_n, b$ respectively (those are elements of $\KK$). Each $\oa_i$ belongs to $h(a_i)$, therefore
$\oa_1 \ldots \oa_n \in h(a_1) \cdot \ldots \cdot h(a_n) \subseteq h(a_1) \circ \ldots \circ h(a_n) \subseteq h(b)$. By Lemma~\ref{Lm:hb}, $a_1 \cdot \ldots \cdot a_n \preo b$,
which means that $A_1, \ldots, A_n \yields B$ is true in $\Md$.
\end{proof}

\section{Abolishing Lambek's Non-Emptiness Restriction}\label{S:norestr}

The system $\MALC$, without Lambek's non-emptiness restriction (also called `full Lambek calculus,' denoted by $\mathbf{FL}_{\bot}$), is obtained from $\MALCp$ (defined in Section~\ref{S:Lambek}) in the following way.

First, we allow sequents of the form $\yields B$ (i.e., with empty left-hand sides) everywhere, including hypotheses, and remove the ``$\Pi$ is non-empty'' constraint on the rules ${\BS}\rR$ and ${\SL}\rL$. Second, we add the unit constant~$\U$ with the following axiom and rule:
\[
 \infer[\U \rL]
 {\Gamma, \U, \Delta \yields C}
 {\Gamma, \Delta \yields C}
 \qquad
 \infer[\U \rR]
 {\yields \U}{}
\]

It is well-known that this extension is not conservative: e.g., the sequent $(p \BS p) \BS q \yields q$ itself has a non-empty left-hand side, but it is derivable only in $\MALC$, not in $\MALCp$.

The class of models should be adjusted accordingly:
\begin{definition}
 A \emph{residuated bounded lattice} (RBL) is a partially ordered algebraic structure
 $\Kc = (\KK; \preo, \cdot, \BS, \SL, \wedge, \vee, \top, \bot, \U)$, where:
 \begin{enumerate}\setlength{\itemsep}{-2pt}
  \item $(\KK; \preo, \cdot, \BS, \SL, \wedge, \vee, \top, \bot)$ is a semigroup residuated bounded lattice (SRBL);
  \item $\U$ is the unit for product, i.e.,
  $(\KK; \cdot, \U)$ is a monoid.
 \end{enumerate}
\end{definition}

Not every semigroup has a unit, so the class of RBLs is narrower than the class of SRBLs.

Next, we could have added positive iteration ($A^+$) to $\MALC$ without Lambek's restriction, and this would give an alternative formulation of infinitary action logic $\ACTomega$. The conventional way in the presence of the unit, however, is to replace $A^+$ with \emph{Kleene iteration} $A^*$, which is equivalent to $\U \vee A^+$.

The explicit rules for Kleene iteration, which turn $\MALC$ into $\ACTomega$, are as follows:
\[
 \infer[{}^* \rL_\omega]
 {\Gamma, A^*, \Delta \yields C}
 {\bigl( \Gamma, A^n, \Delta \yields C \bigr)_{n=0}^{\infty}} \qquad
 \infer[{}^* \rR_n,\ n \geq 0]
 {\Pi_1, \ldots, \Pi_n \yields A^*}
 {\Pi_1 \yields A & \ldots & \Pi_n \yields A}
\]
(Positive iteration is now expressible as $A^+ = A \cdot A^*$.)

\begin{definition}
 An \emph{infinitary action lattice} ($\omega$AL) is a partially ordered algebraic structure $\Kc = (\KK; \preo, \cdot, \BS, \SL, \wedge, \vee, \top, \bot, \U, {}^*)$, where:
 \begin{enumerate}\setlength{\itemsep}{-2pt}
  \item $(\KK; \preo, \cdot, \BS, \SL, \wedge, \vee, \top, \bot, \U)$ is an RBL;
  \item $a^* = \sup_{\preo} \{ a^n \mid n \geq 0 \}$.
 \end{enumerate}
\end{definition}

The standard argument shows strong completeness of $\MALC$ w.r.t.\ the class of all RBLs and that of $\ACTomega$ w.r.t.\ the class of all $\omega$ALs.

L-models on $\Pc(\Sigma^*)$ (i.e., allowing the empty word) form a natural class of models for $\ACTomega$ and $\MALC$. Again, additive disjunction and constants cause incompleteness, even in the weak sense. A counter-example to strong completeness with product for this case was given in~\cite{AndrekaMikulas1994}:
two hypotheses $p \yields p \cdot p$ and $q \yields p$ semantically entail $q \yields p \cdot q$, but derivability fails.

The more advanced class of SCL-models is defined similarly to PSCL-models. The difference is that now we allow the empty word. Again, $L \subseteq \Sigma^*$ is a fixed language, and the two maps (Galois connection) are defined as follows:
\begin{align*}
& M^{\rhd} = \{ (x,y) \in \Sigma^* \times \Sigma^* \mid (\forall w \in M) \: xwy \in L \},\mbox{ for $M \subseteq \Sigma^*$;} \\
& C^{\lhd} = \{ v \in \Sigma^* \mid (\forall (x,y) \in C) \: xvy \in L \},\mbox{ for $C \subseteq \Sigma^* \times \Sigma^*$.}
\end{align*}
(Actually, ${}^\rhd$ is the same as ${}^{\prhd}$, but ${}^{\lhd}$ is essentially different from ${}^{\plhd}$, as now $v$ may be empty.)

The set of all closed languages (such that $M = M^{\rhd\lhd}$) is denoted by $\BcL$, and it bears the structure of an $\omega$AL:
$(\BcL; \subseteq, \circ, \BS, \SL, \cap, \sqcup, {}^{\circledast}, \Sigma^*, \varnothing^{\rhd\lhd}, \{\varepsilon\}^{\rhd\lhd} )$, where $M^{\circledast} = (M^*)^{\rhd\lhd}$.

Next, we again dodge the problems with $\bot$ by considering upper cones of the form $\BcL \hx{Z}$, where $Z$ is a local zero of $\BcL$ \emph{and} $\varepsilon \in Z$ (this guarantees that the unit $\{ \varepsilon \}^{\rhd\lhd}$ is still in the structure). Models on such structures are called SCL-models with non-standard $\bot$. And again, we prove strong completeness:

\begin{theorem}\label{Th:SCL}
 $\ACTomega$ is strongly complete w.r.t.\ SCL-models with non-standard $\bot$.
\end{theorem}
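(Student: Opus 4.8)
The plan is to transcribe the proof of Theorem~\ref{Th:PSCL}, upgrading each step to accommodate the unit~$\U$ and the empty word. Starting from strong completeness of $\ACTomega$ over all $\omega$ALs, I fix an $\omega$AL $\Kc$ with a model $\Md$ satisfying every hypothesis in $\Hc$ but refuting $\Pi\yields C$. Over the domain $\KK$ I form the two alphabet copies $\oSigma,\uSigma$ exactly as before; the only change in the bookkeeping is that $w$ now ranges over $\oSigma^{*}$, with the convention $\varepsilon^{\bullet}=\U$. The designated language is given by the same formula as in the positive case, but over $\Sigma^{*}$,
\[
 L=\{\,w\ub\mid w\in\oSigma^{*},\ w^{\bullet}\preo b\,\}\ \cup\ \{\,x\in\Sigma^{*}\mid |x|_{\uSigma}\ge 2\,\},
\]
and the probe map is $h(b)=\{(\varepsilon,\ub)\}^{\lhd}=\{\,x\in\Sigma^{*}\mid x\ub\in L\,\}$.

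Next I would re-establish the structural lemmas. The analogue of Lemma~\ref{Lm:hb} is $h(b)=\{\,w\in\oSigma^{*}\mid w^{\bullet}\preo b\,\}\cup\{\,x\mid|x|_{\uSigma}>0\,\}$, and Lemma~\ref{Lm:replace} carries over verbatim once $\varepsilon^{\bullet}=\U$ is fixed (in particular $w\leqslant_{L}\varepsilon\iff w^{\bullet}\preo\U$). Lemma~\ref{Lm:hom} acquires two further clauses: Kleene star, $h(a^{*})=(h(a))^{\circledast}$, proved exactly as for positive iteration but with the index widened to $n\ge 0$, the $n=0$ factor being the empty word via $\varepsilon^{\bullet}=\U\preo a^{*}$; and unit preservation, namely that $h(\U)$ is the unit of the target $\omega$AL. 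As in Lemma~\ref{Lm:lz}, $Z:=h(\bot)$ is a local zero of $\BcL$, so the upper cone $\BcL\hx{Z}$ is an $\omega$AL; with $\alpha=h\circ\beta$ this yields the model. Since $h$ is an $\omega$AL-homomorphism preserving $\top$, $\bot$, and $\U$, the argument of Theorem~\ref{Th:PSCL} shows it validates exactly the sequents validated by $\Md$, whence $\Hc\nvDash\Pi\yields C$ over SCL-models with non-standard $\bot$ and strong completeness follows by contraposition.

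The main obstacle is the unit, which was absent in the positive case. Two points demand care. First, every product-, join-, and iteration-case of Lemma~\ref{Lm:hom} must be re-checked allowing an empty factor $w_{i}=\varepsilon$ (so that $w_{i}^{\bullet}=\U$); these are precisely the places where the neutral letter enters silently. Second, and more delicately, I must pin down what the unit of $\BcL\hx{Z}$ is and prove $h(\U)$ equals it. The naive candidate $\{\varepsilon\}^{\rhd\lhd}$ need not lie in the cone $\{\,M\mid Z\subseteq M\,\}$; the genuine unit is the join $\{\varepsilon\}^{\rhd\lhd}\sqcup Z$, as one sees from distributivity of $\circ$ over $\sqcup$ together with the fact that $Z$ absorbs everything above it. One then verifies $h(\U)=\{\varepsilon\}^{\rhd\lhd}\sqcup Z$ by combining $\{\varepsilon\}^{\rhd\lhd}\subseteq h(\U)$, $h(\bot)\subseteq h(\U)$, and the replaceability fact $w\leqslant_{L}\varepsilon\iff w^{\bullet}\preo\U$. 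Reconciling the empty word---at once the neutral element of concatenation and the $n=0$ witness for Kleene star---with the non-standard bottom $Z$ is precisely the delicate step that the positive case did not present.
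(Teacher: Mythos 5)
There is a genuine gap, and it is exactly at the point you flagged as ``delicate'': the unit. Your designated language is the positive-case one lifted to $\Sigma^*$,
\[
L=\{\,w\ub\mid w\in\oSigma^{*},\ w^{\bullet}\preo b\,\}\cup\{\,x\in\Sigma^{*}\mid |x|_{\uSigma}\ge 2\,\},
\]
whereas the paper's proof of Theorem~\ref{Th:SCL} uses
\[
L=\{\,w\ub u\mid w,u\in\oSigma^{*},\ w^{\bullet}\preo b,\ u^{\bullet}\preo\U\,\}\cup\{\,x\in\Sigma^{*}\mid |x|_{\uSigma}\ge 2\,\},
\]
with an extra suffix $u$ (an addition the paper credits to Wurm). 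That suffix is precisely what lets words denoting the unit be appended \emph{after} the letter from $\uSigma$, and it is what forces $h(\U)=\{\varepsilon\}^{\rhd\lhd}$. With your $L$ this equality fails: $\overline{\U}\in h(\U)$ (since $\U\preo\U$), but $\overline{\U}\notin\{\varepsilon\}^{\rhd\lhd}$, because the context $(\overline{\U}\,\underline{\U},\,\varepsilon)$ satisfies $\overline{\U}\,\underline{\U}\in L$ while $\overline{\U}\,\underline{\U}\,\overline{\U}\notin L$ (it has a single letter from $\uSigma$, which is not final). The same counterexample refutes your claim that Lemma~\ref{Lm:replace} ``carries over verbatim'' with $w\leqslant_{L}\varepsilon\iff w^{\bullet}\preo\U$: this equivalence holds for the paper's $L$ (where replacement can also occur inside the suffix $u$), but is false for yours.

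Your repair---taking the unit of the cone to be $\{\varepsilon\}^{\rhd\lhd}\sqcup Z$ and showing $h(\U)$ equals it---is a correct computation for your $L$, but it proves a different, strictly weaker theorem. The class ``SCL-models with non-standard $\bot$'' is defined in Section~\ref{S:norestr} so that the \emph{standard} unit $\{\varepsilon\}^{\rhd\lhd}$ still lies in the cone and serves as its unit; in your construction $Z=h(\bot)\ni\overline{\bot}$, and $\overline{\bot}\notin\{\varepsilon\}^{\rhd\lhd}$, so the standard unit is not even an element of your cone, and your models interpret both $\bot$ \emph{and} $\U$ non-standardly. Completeness with respect to this larger class does not yield the stated theorem, and it loses exactly what the theorem is meant to deliver (compare the trade-off discussed in Section~\ref{S:zero}: there $\U$ is sacrificed to make $\bot$ standard, while here $\U$ and $\top$ must stay standard and only $\bot$ is non-standard). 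The fix is to adopt the paper's $L$: then $h(\U)=\{\varepsilon\}^{\rhd\lhd}$, monotonicity gives $h(\bot)\subseteq h(\U)$ so the standard unit lies in the cone, and the rest of your outline goes through, with the union and iteration cases adjusted to the fact that the letter from $\uSigma$ in a one-letter context need no longer be final (it may sit inside $w\ub u$ before the suffix).
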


The construction is very similar to the one used to prove Theorem~\ref{Th:PSCL}. The core is the designated language, which is as follows:
\[
 L = \{ w \ub u \mid w, u \in \oSigma^*, w^\bullet \preo b, u^\bullet \preo \U \mbox{ in $\Kc$} \} \cup \{ x \in \Sigma^* \mid |x|_{\uSigma} \geq 2 \}.
\]
The difference is in the word $u$. This addition goes back to Wurm~\cite{Wurm2017} and is needed for words from $h(\U)$, which has to be $\{ \varepsilon \}^{\rhd\lhd}$, to be able to be appended to any word from $L$.

The mapping $h$ is the same: $h(b) = \{ (\varepsilon, \ub)\}^{\lhd} = \{ x \in \Sigma^* \mid x \ub \in L \}$.

Again, we show that $h(b) = \{ w \in \oSigma^* \mid w^{\bullet} \preo b \} \cup \{ x \in \Sigma^* \mid |x|_{\uSigma} > 0 \}$ (analog of Lemma~\ref{Lm:hb}) and an analog of Lemma~\ref{Lm:replace}. In the latter, replacement may occur not only in $w$, but also in $u$.

The main lemma is now as follows:
\begin{lemma}
 The mapping $h$ preserves all operations of $\Kc$, maps $\top$ to $\Sigma^*$ and $\U$ to $\{\varepsilon\}^{\rhd\lhd}$. Moreover, if $a \preo b$, then $h(a) \subseteq h(b)$.
\end{lemma}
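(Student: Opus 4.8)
The plan is to mirror the proof of Lemma~\ref{Lm:hom} exactly, carrying over the same case analysis (splitting on whether a word contains a letter from $\uSigma$), and to verify only the two genuinely new points: the value $h(\U) = \{\varepsilon\}^{\rhd\lhd}$ and the places where the extra factor $u^\bullet \preo \U$ in the definition of $L$ interacts with the operations. First I would establish the analog of Lemma~\ref{Lm:hb}, namely $h(b) = \{ w \in \oSigma^* \mid w^\bullet \preo b\} \cup \{x \in \Sigma^* \mid |x|_{\uSigma} > 0\}$, by the same computation: for $x$ with $|x|_{\uSigma}>0$ we get $|x\ub|_{\uSigma}\geq 2$ so $x\ub \in L$ automatically; for $x = w \in \oSigma^*$ (now possibly empty) the word $w\ub$ has exactly one $\uSigma$-letter, so membership in $L$ forces the first-disjunct condition with $u = \varepsilon$, giving precisely $w^\bullet \preo b$. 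The only subtlety versus the positive case is that $w$ may be the empty word, whose $\bullet$-image is the monoid unit $\U$; this is consistent since $\varepsilon \in h(b)$ iff $\U \preo b$, as expected.

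Next I would treat $h(\top) = \Sigma^*$ and monotonicity exactly as before. For $h(\U)$, I would show $h(\U) = \{\varepsilon\}^{\rhd\lhd}$ directly: by the Lemma~\ref{Lm:hb} analog, $h(\U) = \{ w \in \oSigma^* \mid w^\bullet \preo \U\} \cup \{x \mid |x|_{\uSigma}>0\}$, and I would argue this equals the closure of $\{\varepsilon\}$. One inclusion uses that $\varepsilon \in h(\U)$ (since $\varepsilon^\bullet = \U \preo \U$) together with the replaceability fact that every word in $h(\U)$ can replace $\varepsilon$; the opposite inclusion is immediate because $\{\varepsilon\} \subseteq h(\U)$ and $h(\U)$ is closed, so $\{\varepsilon\}^{\rhd\lhd} \subseteq h(\U)$. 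The engine driving ``$h(\U) \subseteq \{\varepsilon\}^{\rhd\lhd}$'' is precisely the $u$-factor added to $L$: it guarantees that whenever $xvy \in L$ holds for $v = \varepsilon$, appending any $w \in \oSigma^*$ with $w^\bullet \preo \U$ on the right keeps us in $L$, which is what makes such $w$ belong to the closure of $\{\varepsilon\}$.

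For the remaining operations (product, the two divisions, meet, join, and Kleene iteration ${}^\circledast$) I would repeat the Lemma~\ref{Lm:hom} arguments verbatim, using the updated replaceability lemma in which replacement may occur in both the $w$-part and the $u$-part of a context. The case analysis is identical: if some factor contains a $\uSigma$-letter the conclusion is automatic by the Lemma~\ref{Lm:hb} analog, and otherwise all factors lie in $\oSigma^*$ and the inequalities in $\Kc$ transfer by monotonicity, associativity, and residuation. For ${}^*$ the only change from ${}^+$ is that the index now starts at $n=0$, and the $n=0$ case contributes exactly $\varepsilon$, consistent with $a^* = \sup\{a^n \mid n\geq 0\}$ and $h(\U) = \{\varepsilon\}^{\rhd\lhd}$.

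The main obstacle, I expect, is confirming that the added factor $u^\bullet \preo \U$ does not corrupt the context-based arguments elsewhere — specifically, in the division and join cases one must recheck which ``worst-case'' contexts witness non-membership, and the presence of a trailing $u$ with $u^\bullet \preo \U$ slightly enlarges the family of contexts one must consider. The key technical point to verify is that in each such argument the critical context can still be taken with $u = \varepsilon$ (or with $u$ contributing only the neutral factor), so that the inequalities recovered in $\Kc$ are exactly the same as in the positive case; this is what ultimately ensures $h$ remains a homomorphism with $\{\varepsilon\}^{\rhd\lhd}$ as a genuine multiplicative unit.
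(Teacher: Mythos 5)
Your proposal is essentially correct and follows the paper's own route: establish the analog of Lemma~\ref{Lm:hb}, copy the case analysis of Lemma~\ref{Lm:hom}, and isolate the unit as the genuinely new case. Your treatment of $h(\U)$ is the paper's argument in slightly different clothing: the paper proves $h(\U) \subseteq \{\varepsilon\}^{\rhd\lhd}$ by a direct analysis of contexts $(s,t)$ with $st \in L$, while you route the same computation through the updated replaceability lemma specialised to $u = \varepsilon$ (i.e., $v^{\bullet} \preo \U$ iff $v \leqslant_L \varepsilon$); these coincide. You do diverge on Kleene iteration: the paper reduces $A^*$ to $\U \vee A^+$ and reuses the unit, union, and positive-iteration cases already proved, whereas you rerun the supremum argument with the index starting at $n = 0$. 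Both are sound; the paper's reduction avoids repeating the context analysis, while your version avoids appealing to the equivalence of $A^*$ and $\U \vee A^+$.

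One point in your plan needs correction, though it does not sink it. In the union (and iteration) cases the context $(s,t)$ is \emph{universally} quantified---you must show that $\overline{a \vee b}$ survives every context accepting all of $h(a) \cup h(b)$---so you cannot ``take the critical context with $u = \varepsilon$.'' The enlarged family of contexts genuinely produces new inequalities: when the single $\uSigma$-letter lies inside $s$, say $s = w \, \ud \, s'$, the conditions recovered are $w^{\bullet} \preo d$, $(s' \oa t)^{\bullet} \preo \U$, and $(s' \ob t)^{\bullet} \preo \U$---inequalities with $\U$ on the right, not $d$---and one concludes $(s' \, \overline{a \vee b} \, t)^{\bullet} \preo \U$ by the same soundness of ${\vee}\rL$. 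So the inequalities are not ``exactly the same as in the positive case''; rather, the same argument pattern applies with $\U$ in place of $d$. This is precisely the adjustment the paper makes explicit, and your remark about replacement occurring in both the $w$-part and the $u$-part shows you have the right machinery; the fix is to handle the new contexts, not to avoid them.
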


\begin{proof}
 Most of the proof directly copies the argument from the proof of Lemma~\ref{Lm:hom}, so we omit it.

 A slight different occurs for the union. Namely, if the context $s,t$ includes a letter from $\uSigma$, now it is not necessarily the last one.  If this letter is in $t = t' \ud u$, we have $u^{\bullet} \preo \U$, $(s \oa t')^{\bullet} \preo d$ and $(s \ob t')^{\bullet} \preo d$. Therefore, $(s \, \overline{a \vee b} \, t')^{\bullet} \preo d$, whence $s \, \overline{a \vee b}\, t \in L$. If the letter is in $s = w \ud s'$, then, conversely, $w^{\bullet} \preo d$, $(s' \oa t)^{\bullet} \preo \U$ and $(s' \ob t)^{\bullet} \preo \U$, and we continue in the same fashion, with $\U$ instead of $d$. The same change should be made in the argument for positive iteration.

 The new cases are unit and Kleene iteration (instead of positive iteration).

 \vskip 5pt
 \emph{Unit:} $h(\U) = \{ \varepsilon \}^{\rhd\lhd}$. First, since $\varepsilon^{\bullet} = \U$, we have $\varepsilon \in h(\U)$, and by closure $\{ \varepsilon \}^{\rhd\lhd} \subseteq h(\U)$. For the opposite inclusion, take $x \in h(\U)$ and show that for any $s,t$ if $st \in L$ then $sxt \in L$. Notice that $|st|_{\uSigma} \geq 1$, so if $x$ includes a letter from $\uSigma$, then $|sxt|_{\uSigma} \geq 2$, whence $sxt \in L$. The same happens if $|st|_{\uSigma} \geq 2$.

 The interesting case is where $|st|_{\uSigma} = 1$ and $x = v \in \oSigma^*$ with $v^{\bullet} \preo \U$. Then either $s = w_1 \in \oSigma^*$ and $t = w_2 \ub u$ or $s = w \ub u_1$ and $t = u_2 \in \oSigma^*$. In both cases, inserting $v$ does not alter the corresponding condition: $w^{\bullet} \preo b$ or $u^{\bullet} \preo \U$.

\vskip 5pt
\emph{Kleene iteration:} here we use the fact that $A^*$ is equivalent to $\U \vee A^+$ and reduce Kleene iteration to the unit, union, and positive iteration, all considered above.
\end{proof}

Having proved this lemma, we further proceed exactly as in the proof of Theorem~\ref{Th:PSCL}. For sequents with empty left-hand sides, we have the following. The sequent $\yields B$ is true in the original abstract model $\Md$, if $\U \preo \beta(B)$. Applying $h$ yields $h(\U) \preo h(\beta(B))$. Since $h(\U) = \{ \varepsilon \}^{\rhd\lhd}$, this is true if{f} $\varepsilon \in h(\beta(B))$, which is the truth conditon for ${} \yields B$.

\section{Variants and Corollaries of Completeness}

This section contains some discussion on variations and corollaries of the completeness results presented above.

First, let us show how to reduce the size of the alphabet $\Sigma$. In the definition of (P)SCL-models, the alphabet was allowed to be infinite. The model constructions used to prove completeness essentially use this feature: $\Sigma$ includes two copies of each element of $\KK$, the domain of an abstract algebraic model. This domain could be made countable, but, in general, not finite.

For weak completeness, we could have used the finite model property (FMP) and indeed make the domains finite. For strong completeness, as we shall see below, the FMP does not hold. On the other hand, in linguistic practice alphabets are usually finite.

We shall show how to reduce an arbitrary SCL with a countable (or finite) $\Sigma$ to an SCL over a two-letter alphabet $\Sigma_2 = \{ e,f \}$. The reducing function is due to Pentus~\cite{Pentus1995}, and we show that it can be transferred from L-models to SCL-models. Unfortunately, it does not work for constants $\U$ and $\top$, making their interpretation non-standard (for $\bot$, it was non-standard from the start).

We shall present the argument for the system without Lambek's restriction. For the other case, the proof is basically the same.

Let $\Sigma = \{ a_1, a_2, \ldots \}$ be an infinite alphabet and let $g \colon \Sigma^* \to \Sigma_2^*$ be a homomorphism defined (on letters) as follows:
\(
 g(a_i) = e f^i e.
\)
The homomorphism $g$ can also be applied to languages: $g(M) = \{ g(w) \mid w \in M \}$.
Notice that $g$ is injective: every word of the form $ef^{i_1}eef^{i_2}e \ldots ef^{i_n}e$ is the image of exactly one word, namely, $a_{i_1} a_{i_2} \ldots a_{i_n}$.

\begin{lemma}\label{Lm:grhdlhd}
If $M \ni w \ne \varepsilon$ and $M^{\rhd} \ne \varnothing$, then $g(M^{\rhd\lhd}) = \bigl(g(M)\bigr)^{\rhd\lhd}$. The closure operator in $\Bc_{L_0}$ and in $\Bc_{g(L_0)}$, resp., is taken w.r.t.\ $L_0$ and $g(L_0)$.
\end{lemma}

\begin{proof}
Let $v \in g(M^{\rhd\lhd})$, i.e., $v = g(v')$, where $v' \in M^{\rhd\lhd}$. Take a pair $(x,y) \in \Sigma_2^* \times \Sigma_2^*$, such that for any $w \in g(M)$ we have $xwy \in g(L_0)$. There is a non-empty $w \in g(M)$, therefore, $x$ and $y$ are also of the form $g(x')$ and $g(y')$ respectively.  Moreover, for any $w' \in M$ we have $x'w'y' \in L_0$. Therefore, since $v' \in M^{\rhd\lhd}$, we have $x'v'y' \in L_0$, whence $xvy \in g(L_0)$. This yields $v \in \bigl(g(M)\bigr)^{\rhd\lhd}$. The inclusion $g(M^{\rhd\lhd}) \subseteq \bigl(g(M)\bigr)^{\rhd\lhd}$ is established.

Now let $v \in \bigl(g(M)\bigr)^{\rhd\lhd}$. Since $M^\rhd$ is non-empty, there exists a pair $(x',y') \in \Sigma^* \times \Sigma^*$, such that $x'w'y' \in L_0$ for any $w' \in M$. Apply the mapping $g$: $x = g(x')$, $y = g(y')$. For the pair $(x,y)$, we have $xvy \in g(L_0)$. Hence, $v$ is of the form $g(v')$ for some $v' \in \Sigma^*$. Let us show that $v' \in M^{\rhd\lhd}$. Take an arbitrary pair $(x',y') \in M^{\rhd}$ and apply $g$ as above. For any $w' \in M$, we have $x'w'y' \in L_0$, whence for $w = g(w')$ we have $xwy \in g(L_0)$. Since $v \in \bigl(g(M)\bigr)^{\rhd\lhd}$ and $w$ is an arbitrary element of $g(M)$, we get $xvy \in g(L_0)$, whence $x'v'y' \in L_0$. This yields $v' \in M^{\rhd\lhd}$, and therefore $v \in g(M^{\rhd\lhd})$. This establishes the inclusion $g(M^{\rhd\lhd}) \supseteq \bigl(g(M)\bigr)^{\rhd\lhd}$.
\end{proof}
Both non-emptiness conditions here are crucial. In what follows, we shall check them when applying this lemma. (Similar issues were dealt with by Pentus.)

\begin{theorem}\label{Th:twoletter}
The fragment of $\ACTomega$ without constants ($\top$, $\bot$, and $\U$) is strongly complete w.r.t.\ SCL-models over the two-letter alphabet $\Sigma_2$.
\end{theorem}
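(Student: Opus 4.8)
The plan is to transfer the strong completeness result of Theorem~\ref{Th:SCL} across the Pentus encoding $g$, using Lemma~\ref{Lm:grhdlhd} as the bridge. I start from the constant-free fragment, so I may assume given hypotheses $\Hc$ and a sequent $\Pi \yields C$ with $\Hc \nvdash \Pi \yields C$ in $\ACTomega$ (more precisely, in its constant-free fragment). By Theorem~\ref{Th:SCL}, there is an SCL-model with non-standard $\bot$, living on an upper cone $\BcL \hx{Z}$ over some designated language $L_0 \subseteq \Sigma^*$ with $\Sigma$ countable, that validates every hypothesis and refutes $\Pi \yields C$. The goal is to produce an SCL-model over the two-letter alphabet $\Sigma_2 = \{e,f\}$ with the same separating behaviour, taking $g(L_0)$ as the new designated language.

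The central step is to check that $g$ induces a map of the relevant algebraic structure $\BcL$ (over $L_0$) into $\Bc_{g(L_0)}$ (over $g(L_0)$) that preserves the operations interpreting the constant-free connectives. Concretely, I would set $\widehat{M} = \bigl(g(M)\bigr)^{\rhd\lhd}$ for each closed language $M$, and verify that $\widehat{\cdot}$ commutes with $\circ$, $\BS$, $\SL$, $\cap$, $\sqcup$, and ${}^{\circledast}$. Lemma~\ref{Lm:grhdlhd} does the heavy lifting here: whenever $M$ contains a nonempty word and $M^{\rhd} \neq \varnothing$, it gives $g(M^{\rhd\lhd}) = \bigl(g(M)\bigr)^{\rhd\lhd}$, so closure and encoding commute. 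Using injectivity of $g$ and the characterisation $g(w)g(w') = g(ww')$, the product and union cases reduce to set-theoretic identities about $g$, and the two divisions follow from the fact that $g$ preserves and reflects concatenation on its image. Iteration ${}^{\circledast}$ is then handled by the $*$-continuity clause together with the product case. The new interpretation is $\alpha'(A) = \widehat{\alpha(A)}$, where $\alpha$ was the interpretation over $\BcL\hx{Z}$; since no constants occur, $\alpha$ is defined freely on variables and $\alpha'$ inherits this.

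The main obstacle, and the reason the theorem is stated only for the constant-free fragment, is the pair of non-emptiness side conditions in Lemma~\ref{Lm:grhdlhd}: one needs $M$ to contain a nonempty word \emph{and} $M^{\rhd} \neq \varnothing$ for every subformula interpretation $M$ to which the lemma is applied. I would discharge these uniformly by arranging, exactly as Pentus does, that the separating model can be taken so that every $\alpha(A)$ is a nondegenerate closed language---nonempty, proper, and with nonempty context set---which is why $\top$, $\bot$, and $\U$ must be excluded: their standard interpretations ($\Sigma^*$, the closure of $\varnothing$, and $\{\varepsilon\}^{\rhd\lhd}$) are precisely the degenerate cases where one or the other non-emptiness condition fails. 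Tracking these conditions through the inductive commutation argument (and through the subformulas appearing in $\Hc$, $\Pi$, $C$) is the delicate bookkeeping part of the proof.

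Once commutation is established, the conclusion is immediate and parallels the end of the proof of Theorem~\ref{Th:PSCL}: a constant-free sequent $A_1,\dots,A_n \yields B$ is true under $\alpha$ iff $\alpha(A_1)\circ\cdots\circ\alpha(A_n) \subseteq \alpha(B)$, and applying $g$ together with its injectivity shows this holds iff $\alpha'(A_1)\circ\cdots\circ\alpha'(A_n) \subseteq \alpha'(B)$, i.e.\ iff the sequent is true under $\alpha'$. Hence $\alpha'$ over $\Sigma_2$ validates all of $\Hc$ and refutes $\Pi \yields C$, giving $\Hc \nvDash \Pi \yields C$ on SCL-models over the two-letter alphabet, and strong completeness follows by contraposition.
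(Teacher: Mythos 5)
Your overall strategy coincides with the paper's: transfer the model produced by Theorem~\ref{Th:SCL} along Pentus' encoding $g$, use Lemma~\ref{Lm:grhdlhd} to commute $g$ with closure, verify commutation with the constant-free operations, and conclude that the transferred interpretation $\alpha' = g \circ \alpha$ validates exactly the same sequents. However, at the single point where the argument is genuinely delicate you substitute an assertion for a proof. You claim the non-emptiness hypotheses of Lemma~\ref{Lm:grhdlhd} can be discharged ``by arranging, exactly as Pentus does, that the separating model can be taken so that every $\alpha(A)$ is a nondegenerate closed language.'' Nothing in the \emph{statement} of Theorem~\ref{Th:SCL} licenses such an arrangement: an arbitrary separating SCL-model need not have nondegenerate interpretations, and you give no normalisation procedure producing one. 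What actually closes this gap is inspecting the construction \emph{inside} the proof of Theorem~\ref{Th:SCL}: there $\alpha(A) = h(a) = \{(\varepsilon, \ua)\}^{\lhd}$, so the letter $\oa$ always lies in $\alpha(A)$ (a non-empty word), and the context $(\varepsilon, \ua)$ always lies in $\alpha(A)^{\rhd} = \{(\varepsilon,\ua)\}^{\lhd\rhd}$ (a non-empty context set). Moreover, Lemma~\ref{Lm:grhdlhd} must be applied not to the closed languages $\alpha(A)$ themselves (for which it is vacuous) but to the \emph{pre-closure} composites $\alpha(A) \cup \alpha(B)$, $\alpha(A) \cdot \alpha(B)$, $\alpha(A)^*$; for these the first condition is inherited from the factors, while non-emptiness of the context set follows from $M^{\rhd} = M^{\rhd\lhd\rhd}$, which identifies $M^{\rhd}$ with $\alpha(A \vee B)^{\rhd}$, $\alpha(A \cdot B)^{\rhd}$, $\alpha(A^*)^{\rhd}$ respectively. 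These observations constitute the substance of the paper's proof; your ``delicate bookkeeping part'' is precisely the part that is missing.

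A secondary inaccuracy: your diagnosis of why the constants must be excluded is wrong for $\top$. In the constructed model $h(\top) = \Sigma^*$ contains non-empty words, and $(\varepsilon, \underline{\top}) \in (\Sigma^*)^{\rhd}$, so \emph{both} hypotheses of Lemma~\ref{Lm:grhdlhd} hold and the lemma applies, yielding $\bigl(g(\Sigma^*)\bigr)^{\rhd\lhd} = g(\Sigma^*)$. The obstruction is instead that $g$ is not surjective: $g(\Sigma^*)$ is a proper closed sublanguage of $\Sigma_2^*$, so the transferred interpretation of $\top$ is non-standard --- which is what the paper means in saying the reduction ``does not work'' for $\U$ and $\top$. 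For $\bot$ the interpretation was non-standard from the start, so in the constant-free fragment none of this matters; but the reason is not, as you state, that the constants are ``precisely the degenerate cases where one or the other non-emptiness condition fails.''
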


\begin{proof}
The set of variables is countable, whence so are the RKLs given by the Lindenbaum--Tarski construction. Thus, the alphabet $\Sigma$ used in our SCL-models  is also countable. We shall reduce $\Sigma$ to $\Sigma_2$ using $g$.

Let us inspect the proof of Theorem~\ref{Th:SCL}. In the construction of the SCL-model, each formula $A$ is interpreted by the language $\bw(A) = h(a)$, where $a$ is the element of $\mathbf{K}$ corresponding to formula $A$ (Lindenbaum--Tarski construction). We always have $\overline{a} \in h(a)$ and $h(a)^{\rhd} = \{ (\varepsilon, \underline{a}) \}^{\lhd\rhd} \ni (\varepsilon, \underline{a})$. Therefore, any language of the form $M = \bw(A)$ satisfies the conditions of Lemma~\ref{Lm:grhdlhd}.

Now let us construct a new SCL-model $\Md'$ over the two-letter alphabet $\Sigma_2$, taking $\alpha' = g \circ \alpha$. Let us show that $\bw'$ commutes with the operations.

Obviously, $g$ commutes with meet (set-theoretic intersection): $\bw'(A \wedge B) = g(\bw(A) \cap \bw(B)) = g(\bw(A)) \cap g(\bw(B))$. It also commutes with division, provided its denominator (which is of the form
$\bw(A)$) is non-empty. Indeed, if a word $w$ belongs to $g(M) \SL g(N)$, it should be of the form $g(w')$; otherwise, we append a word from $g(N)$ (which is non-empty) and get a word which is not from the image of $g$, and thus could not belong
to $g(M)$. Having this in mind, we work only in the image of $g$, which is isomorphic to the set of words over the original alphabet.

Finally, as an injective homomorphism, $g$ of course commutes with standard language-theoretic union, product, and Kleene iteration. However, here the closure operator also comes into play:
 \begin{multline*}
  g(\bw(A \vee B)) = g((\bw(A) \cup \bw(B))^{\rhd\lhd}) = ( g(\bw(A) \cup \bw(B)) )^{\rhd\lhd} =\\ (g(\bw(A)) \cup g(\bw(B)))^{\rhd\lhd} = (\bw'(A) \cup \bw'(B))^{\rhd\lhd} = \bw'(A \vee B);
 \end{multline*}
 \begin{multline*}
  g(\bw(A \cdot B)) = g((\bw(A) \cdot \bw(B))^{\rhd\lhd}) = ( g(\bw(A) \cdot \bw(B)) )^{\rhd\lhd} =\\ (g(\bw(A)) \cdot g(\bw(B)))^{\rhd\lhd} = (\bw'(A) \cdot \bw'(B))^{\rhd\lhd} = \bw'(A \cdot B);
 \end{multline*}
 \begin{multline*}
 g(\bw(A^*)) = g((\bw(A)^*)^{\rhd\lhd}) = (g(\bw(A)^*))^{\rhd\lhd} =\\ (g(\bw(A))^*)^{\rhd\lhd} = (\bw'(A)^*)^{\rhd\lhd} = \bw'(A^*).
 \end{multline*}
 The second equality, which is the only interesting one, in each line here is due to Lemma~\ref{Lm:grhdlhd}. Applying Lemma~\ref{Lm:grhdlhd}, however, requires
 checking the non-emptiness conditions for languages $M$ of the form $\bw(A) \cup \bw(B)$, $\bw(A) \cdot \bw(B)$, and $\bw(A)^*$.
 Since $\bw(A)$ and $\bw(B)$ include non-empty words, so does $M$.
  For the non-emptiness of $M^{\rhd}$, the arguments are based on the facts that $M^{\rhd} = M^{\rhd\lhd\rhd}$ and $\bw(C)^{\rhd} \ne \varnothing$ for any formula $C$:
 \begin{align*}
 & (\bw(A) \cup \bw(B))^{\rhd} = (\bw(A) \cup \bw(B))^{\rhd\lhd\rhd} = \bw(A \vee B)^{\rhd} \ne \varnothing; \\
 & (\bw(A) \cdot \bw(B))^{\rhd} = (\bw(A) \cdot \bw(B))^{\rhd\lhd\rhd} = \bw(A \cdot B)^{\rhd} \ne \varnothing; \\
 & (\bw(A)^*)^{\rhd} = (\bw(A)^*)^{\rhd\lhd\rhd} = \bw(A^*)^{\rhd} \ne \varnothing.
 \end{align*}

Finally, the new SCL-model $(\Bc_{g(L_0)}, \alpha')$ validates exactly the same sequents as $(\Bc_{L_0}, \alpha)$ from Theorem~\ref{Th:SCL}. This yields  strong completeness over $\Sigma_2$.
\end{proof}

Next, let us briefly discuss \emph{regular} SCL-models. An SCL-model is called regular, if its language $L$ is a regular language. As shown by Wurm~\cite[Lemma~11]{Wurm2017}, and SCL is regular if{f} it (i.e., its domain) is finite. This is very classical, going back to the famous Myhill -- Nerode theorem.

For finite (regular) SCL-models, we have weak, but not strong completeness. Again, for definiteness we consider the case without Lambek's restriction, as the other one is similar.

\begin{theorem}
 $\ACTomega$ is weakly complete w.r.t.\ finite SCL-models with non-standard $\bot$.
\end{theorem}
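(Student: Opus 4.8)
The plan is to establish weak completeness for finite SCL-models by combining the full SCL-completeness of Theorem~\ref{Th:SCL} with a \emph{finite model property} (FMP) argument for $\ACTomega$ without hypotheses. Since we want weak completeness only, we may drop the machinery for infinite sets of hypotheses and work with a single underivable sequent $\Pi \yields C$. The starting point is that if $\Pi \yields C$ is not derivable in $\ACTomega$, then by weak completeness over the abstract algebraic models (the Lindenbaum--Tarski $\omega$AL), there is some abstract model falsifying it. The obstacle is that the canonical $\omega$AL from Lindenbaum--Tarski, and hence the SCL $\BcL$ it induces via the construction of Theorem~\ref{Th:SCL}, is in general \emph{infinite}, whereas Theorem~\ref{Th:SCL} only guarantees a possibly infinite SCL-model. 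So the real content is to replace this infinite countermodel by a \emph{finite} one.

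First I would invoke the known FMP for $\ACTomega$: the logic (without hypotheses) has the finite model property with respect to its abstract algebraic semantics, i.e. every underivable sequent is falsified in a \emph{finite} $\omega$AL. This is the crucial external input; I expect it to follow from the cut-elimination / proof-search analysis for $\ACTpomega$ and $\ACTomega$ together with a standard filtration or subformula-based finite quotient of the Lindenbaum--Tarski algebra. The point is that although $\ACTomega$ is $\Pi^0_1$-complete and lacks an r.e.\ theorem set, individual \emph{non}-theorems can still be separated by finite models, because falsifying a single sequent only requires control over the finitely many subformulae occurring in it; the $\omega$-rule premises collapse once the algebra is finite, since the supremum defining $a^*$ is then attained at a bounded exponent.

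Having a finite $\omega$AL $\Kc$ falsifying $\Pi \yields C$, I would then feed $\Kc$ into the construction underlying Theorem~\ref{Th:SCL}. Because the domain $\KK$ is now \emph{finite}, the alphabet $\Sigma = \oSigma \cup \uSigma$ built from two copies of $\KK$ is finite, and the designated language $L$ is a regular language: $L$ is defined by the two explicit finite-disjunction conditions (the $\{ w\ub u \mid w^\bullet \preo b,\ u^\bullet \preo \U \}$ part, which is recognised by a finite automaton tracking the finitely many products $w^\bullet \in \KK$, together with the $\{ x \mid |x|_{\uSigma} \geq 2 \}$ part, which is plainly regular). Consequently the resulting SCL $\BcL$ is regular, hence finite by the cited result of Wurm~\cite[Lemma~11]{Wurm2017} (the Myhill--Nerode correspondence). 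Passing to the upper cone $\BcL\hx{h(\bot)}$ with the non-standard $\bot$, exactly as in Theorem~\ref{Th:SCL}, keeps the structure finite and yields a finite SCL-model with non-standard $\bot$ that validates the same sequents as $\Kc$, in particular falsifying $\Pi \yields C$.

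The main obstacle is therefore not the SCL bookkeeping, which is inherited verbatim from Theorem~\ref{Th:SCL}, but securing the FMP step: one must be sure that a finite abstract $\omega$AL countermodel exists, and that its finiteness is genuinely transported through the construction. The two finiteness facts that make the transport work are (i) $|\KK| < \infty \Rightarrow |\Sigma| < \infty$ and $L$ regular, and (ii) regularity of $L$ forces $\BcL$ finite. I would state the FMP as a cited lemma (or prove it by a filtration of the Lindenbaum--Tarski $\omega$AL through the finite set of subformulae of $\Pi \yields C$, checking that $*$-continuity survives because suprema over a finite chain stabilise), and then the theorem follows by contraposition: underivability yields a finite abstract countermodel, which yields a finite SCL-model with non-standard $\bot$ falsifying the sequent, so every sequent true in all finite SCL-models with non-standard $\bot$ is derivable in $\ACTomega$.
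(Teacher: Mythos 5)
Your proposal is correct and follows essentially the same route as the paper: cite the FMP for $\ACTomega$ (Buszkowski) to get a finite $\omega$AL countermodel, observe that the language $L$ built from it in the proof of Theorem~\ref{Th:SCL} is regular, conclude via Wurm's regular-iff-finite lemma that the SCL is finite, and note that the upper cone stays finite. The only cosmetic difference is that you argue regularity of $L$ directly by a finite-automaton construction while the paper cites Wurm's Lemma~12 for it (with a remark that the added second component preserves regularity); your side speculation that the FMP could be proved by filtration is unnecessary, since both you and the paper treat it as an external cited result.
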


\begin{proof}
 We use the FMP for $\ACTomega$~\cite{Buszkowski2007}: any sequent not derivable in $\ACTomega$ is falsified on a finite $\omega$AL. By~\cite[Lemma~12]{Wurm2017} the language $L$ constructed from this $\omega$AL is regular. (We have altered Wurm's construction by adding a second component to $L$, but it is obviously regular.) Any upper cone of a finite SCL is finite. Thus, we obtain the desired finite model which falsifies the sequent.
\end{proof}

Strong completeness fails, however, even for finite sets of hypotheses.

\begin{theorem}
 Neither $\ACTomega$, nor even $\MALC$ is strongly complete w.r.t.\ regular SCL-models.
\end{theorem}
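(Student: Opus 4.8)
The plan is to exhibit a concrete pair consisting of a finite set of hypotheses $\Hc$ and a goal sequent that separates semantic entailment over regular SCL-models from derivability. Since regular SCL-models are exactly the \emph{finite} ones (by Wurm's Myhill--Nerode argument cited above), the statement amounts to saying that finite SCL-models with non-standard $\bot$ do not suffice for strong completeness. The natural strategy is to reuse the classical obstruction to strong completeness already noted in Section~\ref{S:Lambek} for L-models, and check that it survives the passage to SCL-models. Concretely, I would look for a finite set of hypotheses $\Hc$ such that $\Hc \vDash \Pi \yields C$ holds over all \emph{finite} models for a purely cardinality/compactness-style reason, while $\Hc \nvdash_{\ACTomega} \Pi \yields C$.

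The cleanest candidate exploits the failure of compactness for $\ACTpomega$/$\ACTomega$ mentioned earlier: there is a standard family of hypotheses whose semantic consequences on finite structures strictly exceed the derivable ones, because on a finite $\omega$AL the supremum defining $a^*$ is attained at some finite stage. First I would fix a variable $p$ and take a single hypothesis asserting, say, $p \cdot p \yields p$ (so that $\beta(p)$ is an idempotent-like element), together with auxiliary hypotheses forcing the relevant element to generate only finitely many powers in any finite model. In a finite $\omega$AL these constraints collapse an infinite iteration $p^*$ (or $p^+$) down to a finite join, yielding a nontrivial entailment such as $\Hc \vDash {} \yields q \SL q^*$-type consequences that cannot be obtained by any well-founded derivation using only the hypotheses. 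The key point is that the $\omega$-rule for ${}^*$ needs \emph{all} premises $\Gamma, A^n, \Delta \yields C$, and from finitely many hypotheses one cannot, in general, derive each of these, whereas on every finite model the infinite supremum is reached after boundedly many steps, so the entailment holds. I would then verify that the chosen $\Hc$ and goal are expressible without essential reliance on the non-standard status of $\bot$, so that the counterexample is genuinely about finiteness and not about the $\bot$ constant.

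The main technical obstacle will be the semantic direction: proving that $\Hc \vDash \Pi \yields C$ over \emph{all} regular (finite) SCL-models with non-standard $\bot$. For this I would argue uniformly over finite $\omega$ALs rather than over the SCL presentation directly, invoking strong soundness (Section~\ref{S:Lambek}) to transfer: a regular SCL-model is in particular a finite $\omega$AL, and on a finite $\omega$AL the ascending chain $a \preo a^2 \preo \cdots$ (or its analogue) stabilizes, so the truth of each hypothesis pins down the interpretation enough to force the goal. The delicate part is to make this stabilization argument independent of \emph{which} finite $\omega$AL we are in, i.e.\ to ensure the entailment is genuinely valid on the whole class, not just on the specific model built in Theorem~\ref{Th:SCL}. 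I would phrase the finiteness consequence as: in any finite $\omega$AL, $a^* = \bigvee_{n=0}^{N} a^n$ for some $N$ depending on the model, and then check that the hypotheses of $\Hc$ guarantee the goal regardless of $N$.

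For the syntactic direction, $\Hc \nvdash_{\ACTomega} \Pi \yields C$, the plan is to construct an \emph{infinite} model (an $\omega$AL, or equivalently a full SCL-model over a possibly infinite alphabet) in which all hypotheses of $\Hc$ hold but the goal fails, precisely because the relevant iteration does not collapse to any finite stage. By strong soundness, the existence of such a model certifies non-derivability. The infinite $\omega$AL of formal languages over an infinite alphabet, or the free structure supplied by the Lindenbaum--Tarski construction underlying Theorem~\ref{Th:SCL}, should serve here; I would simply check that the chosen $\Hc$ is satisfiable there while the goal is refuted, which is where the non-recursive-enumerability / non-compactness phenomena cited in Section~\ref{S:Lambek} do the work. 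Assembling the two directions gives the failure of strong completeness over regular SCL-models for $\ACTomega$, and since $\MALC$ is a fragment, the counterexample (if chosen in the $\MALC$ signature, e.g.\ mimicking the $p \yields p \cdot p \vDash p \yields q$ obstruction adapted to the finite-model setting) shows the same for $\MALC$, establishing the theorem.
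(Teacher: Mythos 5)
Your proposal diverges from the paper's proof, and unfortunately the route you chose has genuine gaps. The paper's argument is a short computability-theoretic contradiction, with no concrete counterexample at all: if $\MALC$ were strongly complete w.r.t.\ regular (hence finite) SCL-models, then derivability from a finite set of hypotheses would be both recursively enumerable (proof search in the finitary calculus $\MALC$) and co-recursively enumerable (search for a finite countermodel), hence decidable by Post's theorem --- contradicting Buszkowski's theorem that derivability from finite hypothesis sets in $\MALC$ is undecidable. Your plan instead is to exhibit an explicit pair $(\Hc, \Pi \yields C)$ witnessing the failure, but you never actually produce one, and the mechanism you rely on for non-derivability is unsound. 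In $\ACTomega$, derivations from hypotheses are allowed to be infinite well-founded trees, so each premise $\Gamma, A^n, \Delta \yields C$ of the $\omega$-rule may be derived separately (using Cut with the hypotheses) and then all of them combined by the $\omega$-rule; your key claim that ``from finitely many hypotheses one cannot, in general, derive each of these'' is exactly what fails for the natural candidates. For instance, from $p \yields p \cdot p$ one \emph{can} derive $p^+, p^+ \yields p^+$ and even $p^+ \yields p \cdot p^+$ in this way, so the stabilization-in-finite-models phenomenon you describe does not by itself separate entailment from derivability.

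The deeper problem is the $\MALC$ half, which is the real content of the theorem (``nor even $\MALC$''). $\MALC$ has no iteration, so there is no $\omega$-rule and no stabilization argument available; moreover, derivability from hypotheses in $\MALC$ is recursively enumerable, so by the Post's-theorem argument above, failure of strong completeness w.r.t.\ finite models is essentially \emph{equivalent} to undecidability --- any proof must somehow invoke a result of that strength, which your sketch never does. Your fallback, adapting $p \yields p \cdot p \vDash p \yields q$, cannot work: that entailment is an artifact of L-models (where $M \subseteq M \cdot M$ forces $M = \varnothing$ over $\Sigma^+$) and is simply false over finite algebraic or SCL-models, where one may interpret $p$ as a nontrivial idempotent and $q$ as $\bot$. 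Finally, the fragment argument at the end runs in the wrong direction: a counterexample for $\ACTomega$ formulated with ${}^*$ does not restrict to the $\MALC$ signature, so it proves nothing about $\MALC$. The paper's remark after its proof is telling here: a concrete counterexample \emph{can} be given, but only by extracting it from the undecidability construction via creative sets, and it is ``not a short one'' --- which is strong evidence that the simple explicit example your plan presupposes is not available.
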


\begin{proof}
 Suppose the contrary. Since regular SCL-models are special kinds of finite ones, this makes $\MALC$ strongly complete w.r.t.\ finite (algebraic) models. Hence, the algorithmic problem of deriving sequents from finite sets of sequents, in $\MALC$, becomes both enumerable (proof search) and co-enumerable (finite countermodel search). By Post's theorem, it is decidable. On the other hand, derivability from finite sets of sequents in $\MALC$ is known to be undecidable~\cite{Buszko82a}. Contradiction.
\end{proof}

A concrete counter-example for strong completeness (though not a short one) can be extracted from the undecidability proof for $\MALC$, which itself goes back to ideas of Markov~\cite{Markov} and Post~\cite{Post}. Namely, there exists a finite set of hypotheses $\Hc$ of the form $p_1, \ldots, p_n \yields q_1 \cdot \ldots \cdot q_m$, which encodes a type-0 grammar for a $\Sigma^0_1$-complete language $S$: a word $r_1 \ldots r_k$ belongs to $S$ if{f} $\Hc \vdash_{\MALC} r_1, \ldots, r_k \to s$, where $s$ is the designated starting symbol. (This construction can be of course adapted to $\MALCp$).

Now we use the fact that any $\Sigma^0_1$-complete set is \emph{creative}~\cite{Myhill}, which means that there exists a computable function $f$ which yields, given the index $i$ of a recursively enumerable set $W_i$, an element $f(i) \notin A \cup W_i$. As $W_i$, we take the set of all sequents of the form $r_1, \ldots, r_k \to s$ such that the claim $\Hc \vDash r_1, \ldots, r_k \to s$ is falsified on a finite model. The sequent $f(i)$ is the needed counter-example. Since $f(i) \notin A$, we have $\Hc \nvdash f(i)$. However, since $f(i) \notin W_i$, this cannot be verified by a finite counter-model.

Let us return to infinite models and extract some corollaries of our completeness results. The first one considers abstract algebraic models of $\ACTomega$ on action lattices which are complete in the lattice-theoretic sense (i.e., which include all infinite suprema and infima). This is a narrower class than the class of $\omega$ALs, since $*$-continuity requires existence only of suprema of a very specific form.\footnote{In the presence of all suprema, the action lattice is always $*$-continuous, even if originally Kleene star was defined in a fixpoint fashion. This is a well-known fact which Restall attributes to Pratt, calling it ``Pratt's normality theorem''~\cite[Thm.~9.44]{Restall}.} Weak completeness w.r.t.\ such models is actually due to Buszkowski~\cite{Buszkowski2007}, who proved the FMP for $\ACTomega$: any finite lattice is complete.
For strong completeness, the FMP does not work. SCLs and their upper cones, however, are complete as lattices, which does the job. Actually, the closure construction used is quite close to Dedekind -- MacNeille completions used in lattice theory.

\begin{cor}
$\ACTomega$  is strongly complete w.r.t.\ models on $\omega$ALs which are complete as lattices; the same holds for $\ACTpomega$ and $\omega$PALs.
\end{cor}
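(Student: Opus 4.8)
The plan is to deduce this corollary directly from the SCL/PSCL completeness theorems (Theorems~\ref{Th:SCL} and~\ref{Th:PSCL}) by invoking the general principle that strong completeness with respect to a subclass of models entails strong completeness with respect to any larger class. Concretely, let $\Mf_1$ be the class of SCL-models with non-standard $\bot$ and let $\Mf_2$ be the class of all models on $\omega$ALs that are complete as lattices. Once the inclusion $\Mf_1 \subseteq \Mf_2$ is established, the argument is purely formal: for any hypotheses $\Hc$ and sequent $\Pi \yields C$, the entailment $\Hc \vDash_{\Mf_2} \Pi \yields C$ forces $\Hc \vDash_{\Mf_1} \Pi \yields C$ (every $\Mf_1$-model is an $\Mf_2$-model, so if it validates $\Hc$ it validates $\Pi \yields C$), which by Theorem~\ref{Th:SCL} yields $\Hc \vdash_{\ACTomega} \Pi \yields C$. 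The $\ACTpomega$ / $\omega$PAL case is identical, using Theorem~\ref{Th:PSCL} and $\Mf_1 = {}$PSCL-models with non-standard $\bot$.

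The real content is therefore the verification of $\Mf_1 \subseteq \Mf_2$, i.e.\ that every structure $\BcL \hx{Z}$ (resp.\ $\BcLp \hx{Z}$) underlying an SCL-model is an $\omega$AL that is complete as a lattice. Completeness of $\BcL$ itself is the standard fact that the fixed points of a closure operator on a complete lattice again form a complete lattice: the ambient $\Pc(\Sigma^*)$ is complete, arbitrary meets are intersections (which preserve closedness), and the join of any family $\{M_i\}$ is $\bigl(\bigcup_i M_i\bigr)^{\rhd\lhd}$. This is exactly the Dedekind--MacNeille-style completion alluded to above. For the upper cone I would observe that a principal filter $\BcL \hx{Z} = \{ M \in \BcL \mid Z \subseteq M \}$ of a complete lattice is always complete: an arbitrary meet of sets all containing $Z$ still contains $Z$, and an arbitrary nonempty join dominates some member and hence $Z$, while the empty join is $Z$, the least element of the cone. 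The same reasoning applies verbatim to $\BcLp \hx{Z}$.

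The one point requiring care --- and the only place where ``non-standard $\bot$'' might seem to break the argument --- is the interpretation of the constant $\bot$. In an SCL-model with non-standard $\bot$, the constant is sent to the local zero $Z$ rather than to $\varnothing^{\rhd\lhd}$; but $Z$ is precisely the minimum of the complete lattice $\BcL \hx{Z}$, so from the abstract algebraic standpoint $\bot$ is interpreted in the standard way, as the least element of the structure. Hence each SCL-model genuinely is a model on a complete-as-lattice $\omega$AL, and the inclusion $\Mf_1 \subseteq \Mf_2$ holds without reservation. Strong soundness is immediate, since complete-as-lattice $\omega$ALs are in particular $\omega$ALs. I do not expect a serious obstacle: the entire proof is a packaging of the completeness theorems with the lattice-completeness of closure-fixed-point lattices and of principal filters, the latter being the step most worth stating explicitly.
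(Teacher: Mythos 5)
Your proposal is correct and matches the paper's own argument: the paper proves this corollary precisely by remarking that SCLs and their upper cones are complete as lattices, so Theorems~\ref{Th:SCL} and~\ref{Th:PSCL} immediately give completeness w.r.t.\ the larger class of complete-as-lattice $\omega$ALs / $\omega$PALs. You merely spell out the details the paper leaves implicit (closure-operator fixed points form a complete lattice, principal filters of complete lattices are complete, and the local zero $Z$ is the bottom of the cone, so $\bot$ is abstractly standard), all of which are accurate.
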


Another corollary is completeness of a fragment of $\ACTomega$ w.r.t.\ L-models (in the original sense).

\begin{cor}
The fragment of $\ACTomega$ in the language of $\BS$, $\SL$, $\wedge$, $\top$, and ${}^*$, where ${}^*$ is allowed to be used only in subformulae of the form
$A^* \BS B$ and $B \SL A^*$, is strongly complete w.r.t.\ L-models.
\end{cor}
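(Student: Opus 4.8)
The plan is to reuse the SCL-model produced by Theorem~\ref{Th:SCL} and to observe that, on formulae of this fragment, the syntactic-concept-lattice interpretation coincides, as a plain subset of $\Sigma^*$, with the ordinary L-model interpretation: the closure operator never gets a chance to act nontrivially. Concretely, suppose $\Hc \nvdash_{\ACTomega} \Pi \yields C$, where all formulae occurring in $\Hc$, $\Pi$ and $C$ are drawn from the fragment. By Theorem~\ref{Th:SCL} there is an SCL-model over some $\BcL$ (with a non-standard $\bot$, which we may ignore, since the fragment contains neither $\bot$ nor $\U$, so the upper-cone operations restrict to those of $\BcL$) whose interpretation $\alpha$ validates $\Hc$ and falsifies $\Pi \yields C$. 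I would then define an L-model over $\Pc(\Sigma^*)$ by setting $\alpha_{\mathrm L}(p) = \alpha(p)$ on variables and using the genuine language-theoretic operations (ordinary concatenation and Kleene star, set intersection for $\wedge$, the maximal language $\Sigma^*$ for $\top$, and the unchanged divisions), and claim it makes exactly the same fragment sequents true.

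The technical heart is the following absorption lemma: for any closed language $N \in \BcL$ and any $M \subseteq \Sigma^*$,
\[
M \BS N = M^{\rhd\lhd} \BS N, \qquad N \SL M = N \SL M^{\rhd\lhd}.
\]
I would prove the first identity by residuation: $u \in M \BS N$ iff $M \subseteq N \SL \{u\}$; since $N$ is closed, so is $N \SL \{u\}$, and therefore $M \subseteq N \SL \{u\}$ iff $M^{\rhd\lhd} \subseteq N \SL \{u\}$ (as $M^{\rhd\lhd}$ is the least closed superset of $M$), which is exactly $u \in M^{\rhd\lhd} \BS N$. The second identity is symmetric. This is precisely the point where the restriction on ${}^*$ pays off: every iteration occurs immediately under a division whose other argument is a (closed) interpretation, so the closure that the SCL wraps around $(\cdot)^*$ is washed out.

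With this in hand I would show, by induction on the fragment grammar, that $\alpha_{\mathrm L}(A) = \alpha(A)$ for every fragment formula $A$. The base cases (variables, $\top$) and the cases $A \wedge B$, $A \BS B$, $B \SL A$ are immediate, because meet is intersection, $\top$ is $\Sigma^*$, and divisions preserve closed languages, so the SCL interpretation already uses no closure and the inductive hypothesis applies directly. The only genuine cases are $A^* \BS B$ and $B \SL A^*$, which must be treated as single compound cases (a bare $A^*$ is never a fragment formula): here $\alpha(A^* \BS B) = (\alpha(A)^*)^{\rhd\lhd} \BS \alpha(B)$ by definition of the SCL operations, this equals $\alpha(A)^* \BS \alpha(B)$ by the absorption lemma (with the closed $N = \alpha(B)$), and finally equals $\alpha_{\mathrm L}(A)^* \BS \alpha_{\mathrm L}(B) = \alpha_{\mathrm L}(A^* \BS B)$ by the inductive hypothesis on the structurally simpler $A$ and $B$.

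It remains to transfer truth of sequents. For $A_1, \ldots, A_n \yields B$ in the fragment, SCL-truth reads $\alpha(A_1) \circ \cdots \circ \alpha(A_n) \subseteq \alpha(B)$, and since the closure is a nucleus this left-hand side is $(\alpha(A_1) \cdots \alpha(A_n))^{\rhd\lhd}$, whereas L-truth reads $\alpha_{\mathrm L}(A_1) \cdots \alpha_{\mathrm L}(A_n) \subseteq \alpha_{\mathrm L}(B)$. As $\alpha$ and $\alpha_{\mathrm L}$ agree on all formulae and $\alpha(B)$ is closed, inclusion of a concatenation product into a closed language is unaffected by applying closure, so the two conditions are equivalent. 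Hence the L-model validates $\Hc$ and falsifies $\Pi \yields C$, giving $\Hc \nvDash \Pi \yields C$ over L-models, and the contrapositive yields the asserted strong completeness. I expect the absorption lemma to be the only real obstacle; the remainder is bookkeeping, with the mild care of phrasing the induction over the fragment's own grammar rather than over arbitrary subformulae.
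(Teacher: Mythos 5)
Your proposal is correct, and its overall architecture is the same as the paper's: reuse the SCL counter-model supplied by Theorem~\ref{Th:SCL}, observe that $\BS$, $\SL$, $\wedge$, $\top$ are interpreted identically in SCL-models and L-models, and show that the closure wrapped around a starred denominator is absorbed by the division. Where you genuinely differ is in the key lemma. The paper proves exactly the identity $M_1^* \BS M_2 = M_1^{\circledast} \BS M_2$ (for closed $M_1, M_2$) by a hands-on context-shifting argument: given $u \in M_1^* \BS M_2$, $v \in (M_1^*)^{\rhd\lhd}$, and a context $(x,y)$ with $xwy \in L_0$ for all $w \in M_2$, it passes to the shifted context $(x,uy)$, which works for all of $M_1^*$, whence $xvuy \in L_0$. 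You instead prove the more general absorption identity $M \BS N = M^{\rhd\lhd} \BS N$ for \emph{arbitrary} $M \subseteq \Sigma^*$ and closed $N$, purely order-theoretically: $u \in M \BS N$ iff $M \subseteq N \SL \{u\}$, and the right-hand side is closed whenever $N$ is, so the condition is insensitive to replacing $M$ by its least closed superset $M^{\rhd\lhd}$. This is cleaner and strictly more general (the paper's identity is the special case $M = M_1^*$), and it isolates precisely why the restriction on ${}^*$ pays off. Two facts you invoke deserve explicit citation or verification, since your argument leans on them: that $N \SL M$ (and $M \BS N$) is closed whenever $N$ is, which the paper states when defining the SCL operations, and that the closure is a nucleus, so that the $n$-fold SCL product equals the closure of the plain concatenation --- this is implicit in Lemma~\ref{P:correc} (and in the residuated-frames view the paper cites) and is what makes your final truth-transfer step go through. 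On the other hand, your writeup is more explicit than the paper's on two points it glosses over: that the non-standard-$\bot$ upper cone may be discarded because the fragment contains neither $\bot$ nor $\U$, and the equivalence of the SCL and L truth conditions for sequents.
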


\begin{proof}
The idea here is that the listed operations, including the composite operations $A^* \BS B$ and $B \SL A^*$, behave in the same way in L-models and SCL-models. For $\BS$, $\SL$, $\top$, and $\wedge$, this was already mentioned above.

For iteration in denominators, we have to prove the following: $M_1^* \BS M_2 = M_1^{\circledast} \BS M_2$, for any $M_1, M_2 \in \Bc_{L_0}$; the case of $\SL$ is symmetric. The $\supseteq$ inclusion follows from $M_1^* \subseteq M_1^{\circledast}$ and the fact that $\BS$ is antitone on the denominator. Let us prove the $\subseteq$ inclusion. Suppose $u \in M_1^* \BS M_2$ and let $v \in M_1^{\circledast}$. We have to show that $vu \in M_2$. Since $M_2$ is closed, it is sufficient to show $vu \in M_2^{\rhd\lhd}$. Let $(x,y) \subseteq \Sigma^* \times \Sigma^*$ be such a pair that for any $w \in M_2$ we have $xwy \in L_0$.

Let us consider the pair $(x,uy)$. For any $v' = v_1 \ldots v_n \in M_1^*$, since $u \in M_1^* \BS M_2$, we have $w = v'u \in M_2$. Hence, $xwy \in L_0$. Next, we have $v \in M_1^{\circledast} = (M_1^*)^{\rhd\lhd}$, whence $xvuy \in L_0$, which is what we need.
\end{proof}

A similar result holds for the case with Lambek's restriction (it was earlier proved in~\cite{KuznetsovRyzhkova2020}):
\begin{cor}
The fragment of $\ACTpomega$ in the language of $\BS$, $\SL$, $\wedge$, $\top$, and ${}^+$, where ${}^+$ is allowed to be used only in subformulae of the form
$A^+ \BS B$ and $B \SL A^+$, is strongly complete w.r.t.\ L-models.
\end{cor}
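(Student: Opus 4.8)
The plan is to transpose the proof of the preceding corollary (for $\ACTomega$ and SCL-models) to the positive setting: $\ACTpomega$, PSCL-models, and positive iteration ${}^+$. First I would invoke the strong completeness of $\ACTpomega$ over PSCL-models (Theorem~\ref{Th:PSCL}): given $\Hc \nvdash \Pi \yields C$, this yields a PSCL-model validating $\Hc$ and refuting $\Pi \yields C$, whose interpretation $\alpha$ assigns to every formula a closed language in $\BcLp$. Since the fragment contains no $\bot$, the non-standard interpretation of $\bot$ plays no part, and all operations of the fragment are computed on these closed languages exactly as in $\BcLp$. The goal is then to show that, on formulae of the given fragment, this PSCL-model coincides with a genuine L-model over $\Pc(\Sigma^+)$, so that strong completeness over L-models follows at once.

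The key observation is that every operation of the fragment is computed identically in $\Pc(\Sigma^+)$ and in $\BcLp$. For $\BS$ and $\SL$ this was already noted (they are given by the same set-builder formula and preserve closedness); meet is intersection in both structures; and $\top = \Sigma^+$ in both. The only operation for which the closure ${}^{\prhd\plhd}$ might intervene is iteration, whose PSCL version is $M^{\oplus} = (M^+)^{\prhd\plhd}$ rather than $M^+$ itself. Because the fragment permits ${}^+$ only inside divisions $A^+ \BS B$ and $B \SL A^+$, it suffices to establish the key lemma
\[
 M_1^+ \BS M_2 = M_1^{\oplus} \BS M_2 \qquad (M_1, M_2 \in \BcLp),
\]
together with its $\SL$-symmetric counterpart.

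For the key lemma, the $\supseteq$ inclusion follows from $M_1^+ \subseteq M_1^{\oplus}$ and antitonicity of $\BS$ in the denominator. For $\subseteq$, take $u \in M_1^+ \BS M_2$ and $v \in M_1^{\oplus}$; since $M_2$ is closed it suffices to check $vu \in M_2^{\prhd\plhd}$. Given a context $(x,y)$ with $xwy \in L$ for all $w \in M_2$, consider the shifted context $(x, uy)$: for every $v' \in M_1^+$ we have $v'u \in M_2$, hence $x v' (uy) \in L$, so that $(x, uy) \in (M_1^+)^{\prhd}$; as $v \in M_1^{\oplus} = (M_1^+)^{\prhd\plhd}$, this gives $x(vu)y \in L$, as required. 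All of $u, v, v'$ lie in $\Sigma^+$, so Lambek's non-emptiness restriction is respected, while the context words $x,y$ may be empty; this is the only adaptation relative to the $\ACTomega$ argument.

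With the lemma available, a routine induction on the fragment shows that the L-model interpretation $\alpha_L$ on $\Pc(\Sigma^+)$ that agrees with $\alpha$ on variables satisfies $\alpha_L(A) = \alpha(A)$ for every fragment formula $A$, with each value remaining a closed language. Finally, sequent truth is insensitive to the difference between $\cdot$ and $\circ$: since $\bigcirc_i \alpha(A_i) = \bigl(\prod_i \alpha(A_i)\bigr)^{\prhd\plhd}$ and the target $\alpha(B)$ is closed, one has $\bigcirc_i \alpha(A_i) \subseteq \alpha(B)$ iff $\prod_i \alpha(A_i) \subseteq \alpha(B)$. Hence the PSCL-model, read as an L-model, validates exactly the same sequents, and strong completeness over L-models follows. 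The main obstacle is the key lemma alone; even there the essential content is the same context-shifting argument as in the already-proved $\ACTomega$ case, and the remainder is bookkeeping around non-emptiness.
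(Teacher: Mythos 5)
Your proposal is correct and follows essentially the same route as the paper: the paper proves the preceding $\ACTomega$ corollary via exactly this key lemma ($M_1^* \BS M_2 = M_1^{\circledast} \BS M_2$, established by the same shifted-context argument with $(x,uy)$), and then states that the positive case with Lambek's restriction goes through by the same argument, which is precisely the transposition you carry out. Your additional bookkeeping (discarding the non-standard $\bot$ since the fragment lacks $\bot$, and noting that sequent truth is unaffected by replacing $\circ$ with plain concatenation when the succedent is closed) matches the paper's intent and is sound.
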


Finally, we semantically establish \emph{strong conservativity} of $\ACTomega$ over $\MALC$ and, furthermore, of $\ACTomega$ over any elementary fragment including the product operation.

\begin{cor}
 Consider an elementary fragment $\Lf$ of $\ACTomega$, obtained by restricting the set of operations to a set $\mathrm{Op}$ and keeping only the axioms and rules which work with these operations, plus Id and Cut. Let the product operation belong to $\mathrm{Op}$. Then if $\Hc$ and $\Pi \yields C$ are in this restricted language, then $\Hc \vdash_{\Lf} \Pi \yields C$ if{f} $\Hc \vdash_{\ACTomega} \Pi \yields C$. The same holds for $\ACTpomega$.
\end{cor}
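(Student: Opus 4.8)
The plan is to argue semantically, recycling the embedding built in the proofs of Theorems~\ref{Th:SCL} and~\ref{Th:PSCL}. One direction is immediate: since $\Lf$ is obtained from $\ACTomega$ (resp.\ $\ACTpomega$) by discarding some rules and axioms while keeping $\mathrm{Id}$ and $\mathrm{Cut}$, every $\Lf$-derivation is literally an $\ACTomega$-derivation, so $\Hc \vdash_\Lf \Pi \yields C$ entails $\Hc \vdash_{\ACTomega} \Pi \yields C$. The content is in the converse, which I would prove by contraposition: from the assumption $\Hc \nvdash_\Lf \Pi \yields C$ I would manufacture a full $\omega$AL (resp.\ $\omega$PAL) model validating $\Hc$ but refuting $\Pi \yields C$, and then quote strong soundness of $\ACTomega$ (resp.\ $\ACTpomega$) to obtain $\Hc \nvdash_{\ACTomega} \Pi \yields C$.

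First I would apply the standard Lindenbaum--Tarski construction to the fragment $\Lf$ on its own. Taking finite sequences of $\mathrm{Op}$-formulae modulo provable equivalence, this yields a partially ordered algebra $\Kc$ equipped with exactly the operations of $\mathrm{Op}$, together with an interpretation under which every hypothesis in $\Hc$ is true and $\Pi \yields C$ is false. Here the assumption $\cdot \in \mathrm{Op}$ is decisive: it makes $\Kc$ a partially ordered semigroup, the class of a concatenation being the product of the classes; and in the $\ACTomega$ case, where empty antecedents are permitted, the class of the empty sequence serves as the unit $\U$, so $\Kc$ is in fact a partially ordered monoid. This order-and-product structure (plus the unit, in the $\ACTomega$ case) is all that the next step will consume.

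Next I would feed $\Kc$ into the construction from the proof of Theorem~\ref{Th:SCL} (or Theorem~\ref{Th:PSCL} under Lambek's restriction). The crucial observation is that the designated language $L$, the map $w \mapsto w^\bullet$, and the homomorphism $h$ are defined using \emph{only} the order $\preo$, the product, and (for $\ACTomega$) the unit of the source algebra---never the additive operations, iteration, or the constants. Hence the construction runs verbatim with $\Kc$ in place of a full action lattice, producing an SCL $\BcL$ (resp.\ PSCL $\BcLp$), which is a genuine $\omega$AL (resp.\ $\omega$PAL) by Lemma~\ref{P:correc} and its SCL analogue, together with the upper cone over the local zero $h(\bot)$. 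The map $h$ lands in this structure and, by the clauses of Lemma~\ref{Lm:hom} (and their SCL analogues) read only for the operations present in $\mathrm{Op}$, preserves each such operation; by Lemma~\ref{Lm:hb} it also \emph{reflects} the order, i.e.\ $a \preo b$ iff $h(a) \subseteq h(b)$. Composing the Lindenbaum--Tarski interpretation with $h$ thus produces a full $\omega$AL model in which the truth of any $\mathrm{Op}$-sequent coincides with its truth in $\Kc$, while the operations outside $\mathrm{Op}$ are interpreted by the ambient SCL structure and are never referenced by $\Hc$ or $\Pi \yields C$. Consequently $\Hc$ is true and $\Pi \yields C$ is false in this model, and strong soundness of $\ACTomega$ finishes the argument; the $\ACTpomega$ case is identical, using the PSCL construction and omitting the unit.

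The step I expect to be the genuine crux---although it demands no new computation, only a careful rereading of the earlier proofs---is the recognition that the completeness construction is \emph{product-driven}: nothing in $L$ or $h$ depends on the operations outside $\{\cdot,\U\}$, so the embedding faithfully transports an arbitrary $\mathrm{Op}$-algebra into a full $\omega$AL, preserving precisely the $\mathrm{Op}$-operations and reflecting the order. This is exactly why $\cdot$ must belong to $\mathrm{Op}$: without a product in the fragment's algebra the map $w \mapsto w^\bullet$, and therefore the designated language $L$, cannot be defined at all, and the embedding collapses.
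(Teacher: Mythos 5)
Your proposal is correct and follows essentially the same route as the paper's own proof: Lindenbaum--Tarski for the fragment $\Lf$, the observation that the (P)SCL embedding construction only consumes the order, the product, and (absent Lambek's restriction) the unit, and the conclusion that the resulting (P)SCL is a model of the \emph{full} calculus, so soundness of $\ACTomega$ (resp.\ $\ACTpomega$) yields non-derivability. The only cosmetic difference is that you always mention the upper cone over $h(\bot)$, whereas the paper passes to it only when $\bot \in \mathrm{Op}$ (otherwise $h(\bot)$ is not even defined); this does not affect the argument.
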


\begin{proof}
 The ``only if'' direction is trivial, let us prove the ``if'' one. Consider the natural class of algebraic models for $\Lf$. Such a model is a partially ordered structure with the operations from $\mathrm{Op}$, which obey the axioms and rules of $\Lf$ (i.e., the rules of $\ACTomega$ working with operations from $\mathrm{Op}$). By Lindenbaum -- Tarski, $\Lf$ is strongly complete w.r.t.\ this class of models.

 Notice that the product here is required, otherwise we, e.g., fail to even formulate the truth condition for a sequent. In the case without Lambek's restriction, we also add a unit for product, even if it is not in the set $\mathrm{Op}$.

 Now we perform our completeness proof (Theorem~\ref{Th:PSCL} or Theorem~\ref{Th:SCL}), for the restricted set of operations. The proof works even simpler, since we have less cases to consider. Again, product and the unit (if there is no Lambek's restriction) are the only two things we really need to be in our abstract model, as they are used to define $w^{\bullet}$. Thus, we construct a homomorphic embedding of our abstract model to the corresponding reduct of a (P)SCL or, if there is $\bot$, of an upper cone of a (P)SCL.

 Given a set of sequents $\Hc$ and a sequent $\Pi \yields C$, such that $\Hc \nvdash_{\Lf} \Pi \yields C$, we obtain an $\mathrm{Op}$-reduct of a (P)SCL-model (with non-standard $\bot$, if $\bot \in \mathrm{Op}$), which validates $\Hc$ and falsifies $\Pi \yields C$. However, in any (P)SCL all the operations of $\ACTomega$ (resp., $\ACTpomega$) are actually well-defined. Thus, this is a model of the full calculus, and therefore we have $\Hc \nvdash \Pi \yields C$ not only in $\Lf$, but also in $\ACTomega$ (resp., $\ACTpomega$).
\end{proof}

This semantic strong conservativity approach can be compared with purely syntactic ones. First, weak conservativity (for derivability without hypothesis) is a standard corollary of cut elimination. For derivability from hypotheses, cut elimination itself does not hold, but strong conservativity can still be proved using a similar technique of \emph{cut normalisation}~\cite{Kuzn2024SEMR}. In~\cite{Kuzn2024SEMR}, however, it was necessary for $\mathrm{Op}$ to contain at least one division operation. Here, instead, we need product.

\section{The $\bot$ Constant}\label{S:zero}

As we have seen so far, Wurm's construction of SCL-models and its modifications (both with and without Lambek's non-emptiness restriction)
fail to give the standard interpretation to the $\bot$ constant, which is the closure of $\varnothing$.
Indeed, this closure gathers the words which belong to $L$ in \emph{any} surrounding context. In~\cite{Kuzn2024WoLLIC}, we have conjectured that adding
$\bot$ to $\MALC$ will make the system incomplete w.r.t.\ SCL-models (with the standard interpretation for $\bot$), at least in the strong sense.
This intuition was also supported by the fact that without $\bot$ the calculus $\MALC$, being an \emph{intuitionistic} variant of non-commutative linear
logic, actually happens to be a conservative fragment of \emph{classical} non-commutative (cyclic) linear logic, while with $\bot$ it is not~\cite{Schellinx1991,KanKuzNigSce2018}.
This makes the behaviour of $\bot$ somehow special.

Interestingly enough, however, in the case without Lambek's restriction one can actually trade $\U$ for $\bot$ and obtain the following strong completeness result. We also have to return to positive iteration instead of Kleene iteration, since the latter contains a hidden $\U$.

\begin{theorem}
The fragment of $\ACTomega$ without $\U$ (but with $\bot$) and with positive iteration instead of Kleene iteration is strongly complete w.r.t.\ SCL-models with the standard interpretation of $\bot$.
\end{theorem}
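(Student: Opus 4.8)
The plan is to re-run the embedding argument of Theorem~\ref{Th:SCL}, but with a designated language $L$ redesigned so that the closure of $\varnothing$ becomes the genuine image of $\bot$. As before, I would start from strong completeness over abstract algebras (Lindenbaum--Tarski): assuming $\Hc \nvdash \Pi \yields C$ in the fragment, fix a countermodel over an algebra $\Kc$ for the fragment---an RBL equipped with positive iteration $a^+ = \sup_\preo\{a^n \mid n \geq 1\}$, in which $\U$ survives as the monoid unit but is no longer a designated constant---validating $\Hc$ and falsifying $\Pi \yields C$. From $\KK$ I build the doubled alphabet $\Sigma = \oSigma \cup \uSigma$, put $w^\bullet = a_1 \cdot \ldots \cdot a_n$ for $w = \oa_1 \ldots \oa_n$, and define $h(b) = \{x \in \Sigma^* \mid x\ub \in L\}$ exactly as in Section~3.

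The heart of the matter is the choice of $L$. The reason $\bot$ came out non-standard before is that the ``junk'' clause $|x|_\uSigma \geq 2$ (and the trailing $u$ with $u^\bullet \preo \U$) forced non-universal marker-words into every $h(b)$, hence into $h(\bot)$. Having dropped $\U$, I would instead take
\[
 L = \{ w\ub u \mid w,u \in \oSigma^*,\ w^\bullet \preo b \} \cup \{ x \mid \text{some maximal } \oSigma\text{-block of } x \text{ evaluates to } \bot \} \cup \Sigma^*\underline{\top}.
\]
The middle clause is the new ingredient: since $\bot$ is a two-sided zero in any residuated lattice ($a \cdot \bot = \bot \cdot a = \bot$), a block evaluating to $\bot$ is absorbed by any surrounding $\oSigma$-material, so such words stay in $L$ under every context and are therefore universal. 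The third clause is a cheap device to keep $\top$ standard (it only affects words ending in $\underline\top$, hence only $h(\top)$). I would first prove the analogue of Lemma~\ref{Lm:hb}, namely $h(b) = \{ w \in \oSigma^* \mid w^\bullet \preo b \} \cup \varnothing^{\rhd\lhd}$ for $b \neq \top$ and $h(\top) = \Sigma^*$, together with a replaceability lemma; and then the crucial identity $\varnothing^{\rhd\lhd} = \{x \mid x \text{ has a } \bot\text{-block}\} = h(\bot)$, which is exactly the statement that $\bot$ receives its standard interpretation.

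With these in hand, the homomorphism checks mirror Lemma~\ref{Lm:hom}, but with ``$x$ has a $\bot$-block'' (equivalently, $x \in \varnothing^{\rhd\lhd}$) playing the role previously played by ``$x$ contains a $\uSigma$-letter.'' For $\cdot$, $\BS$, $\SL$ and $\wedge$ this is routine: whenever an input word carries a $\bot$-block the output word does too, and since $\varnothing^{\rhd\lhd}$ is the least closed language it sits inside every closed target automatically; the remaining pure-$\oSigma$ cases are the monotonicity/residuation computations already done in Lemma~\ref{Lm:hom}. For $\vee$ and ${}^+$ I would argue as in Section~3 that the single letter $\overline{a \vee b}$ (resp.\ $\overline{a^+}$) lies in the closure of $h(a)\cup h(b)$ (resp.\ $h(a)^+$); the one genuinely new point is that a context may now force the word into $L$ \emph{via a $\bot$-block} rather than via a test, and here I use that product distributes over (possibly infinite) joins and is $*$-continuous, so that if the middle block evaluates to $\bot$ for every $\oa^n$ it does so for $\oa^+$, and likewise $\sigma^\bullet a \tau^\bullet = \sigma^\bullet b \tau^\bullet = \bot$ gives $\sigma^\bullet(a\vee b)\tau^\bullet = \bot$. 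Finally, the truth-condition transfer is as in Theorem~\ref{Th:SCL}, with empty antecedents handled by $\varepsilon \in h(b) \iff \U \preo b$.

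The step I expect to be the main obstacle is the identity $\varnothing^{\rhd\lhd} = \{x \mid x \text{ has a } \bot\text{-block}\}$, and more precisely its harder inclusion: I must show that \emph{no} marker-bearing word lacking a $\bot$-block is universal. This is where removing the junk clause really bites, since I can no longer wave such words into $L$; instead I have to exhibit, for each such $x$, a bad context (typically appending a fresh marker $\underline{e}$ with $e \neq \top$ and $\U \not\preo e$) that throws $x$ out of $L$, checking simultaneously that neither the $\bot$-block clause nor the $\top$-clause rescues it. Getting this bookkeeping exactly right---so that the three clauses of $L$ do not interfere and the minimum closed language coincides on the nose with $h(\bot)$---is the delicate part; the $\vee$ and ${}^+$ cases of the homomorphism, which now must track $\bot$-blocks through arbitrary contexts, are a close second.
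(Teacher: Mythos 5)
Your overall strategy---rerun the embedding argument with a redesigned $L$ so that $h(\bot)=\varnothing^{\rhd\lhd}$---is exactly the paper's, and the step you flag as the main obstacle is actually fine: for your $L$, the universal words are precisely the words containing a $\bot$-block, so the identity $h(\bot)=\varnothing^{\rhd\lhd}$ does hold. The genuine gap is in the part you declare routine: with your $L$, the map $h$ is \emph{not} a homomorphism. By your own analogue of Lemma~\ref{Lm:hb}, for $b\SL a\ne\top$ the set $h(b\SL a)$ contains only pure $\oSigma$-words and $\bot$-blocked words; but $h(b)\SL h(a)$ is the plain language residual and can contain other words. Concretely, take $\Kc$ containing elements $c\ne\bot$, $a$, $b$ with $c\cdot a=\bot$ and $b\SL a\ne\top$ (such algebras exist and must be handled by the proof: e.g.\ in the Lindenbaum--Tarski algebra take $c=[\bot\SL p]$, $a=[p]$, $b=[q]$, since $\bot\SL p\yields\bot$ is underivable while $(\bot\SL p)\cdot p\yields\bot$ is derivable). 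Then the word $x=\ud\,\overline{c}$ lies in $h(b)\SL h(a)$: for pure $y\in h(a)$ the merged block $\overline{c}\,y$ evaluates to $c\cdot y^{\bullet}\preo c\cdot a=\bot$, and $\bot$-blocked $y$ are absorbed; yet $x\notin h(b\SL a)$, because $x$ is neither pure nor $\bot$-blocked. So $h(b\SL a)\ne h(b)\SL h(a)$. Your third clause $\Sigma^*\underline{\top}$ creates a second, independent failure: it forces $h(\top)=\Sigma^*$, but then $h(\top)\cdot h(b)\not\subseteq h(\top\cdot b)$ whenever $\top\cdot b\ne\top$ (the word $\underline{c}\,\ob$ lies in the left-hand side but is neither pure nor $\bot$-blocked); ``it only affects $h(\top)$'' is no defence, since $h(\top)$ enters the homomorphism equations for $\top\cdot b$, $b\SL\top$, and so on.

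The single idea you are missing---and it is how the paper repairs all of this at once---is to make \emph{every} word algebraically meaningful: extend $(\cdot)^{\bullet}$ to all of $\Sigma^*$ by simply ignoring the letters from $\uSigma$ (this uses the unit of the ambient $\omega$AL to evaluate the $\oSigma$-free parts, which exists even though $\U$ is not in the language---precisely why the construction tolerates neither $\U$ as a constant nor Lambek's restriction), and set $L=\{x\ub\mid x\in\Sigma^*,\ x^{\bullet}\preo b\}\cup\{z\in\Sigma^*\mid z^{\bullet}=\bot\}$. Then $h(b)=\{x\in\Sigma^*\mid x^{\bullet}\preo b\}$ uniformly, with no pure/junk dichotomy at all, and every homomorphism check reduces to a one-line computation from $(xy)^{\bullet}=x^{\bullet}\cdot y^{\bullet}$ together with residuation, distributivity over joins, and $*$-continuity. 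In particular, in the counterexample above $(\ud\,\overline{c})^{\bullet}=c\preo b\SL a$ by residuation, so that word lands in $h(b\SL a)$ exactly as required. Moreover $h(\bot)=\varnothing^{\rhd\lhd}$ still holds, since $z^{\bullet}=\bot$ is preserved under any context by absorption, and $\top$ needs no clause at all: $h(\top)=\Sigma^*$ is automatic, the paper even dispensing with a separate $\top$ case by noting $\top=\bot\SL\bot$.
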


Thus, $\bot$ itself does not cause problems with completeness. Unfortunately, however, the construction we present below is incompatible both with $\U$ and with
Lambek's non-emptiness restriction; thus, strong completeness for the full systems remains an open question.

The scheme of the proof is again the same: given an abstract algebraic model $\Kc$ over an $\omega$AL, we construct a specific language $L$ and provide an injective homomorphism $h$ from $\Kc$ to
$\BcL$. More precisely, here we are talking about the \emph{reducts} of $\Kc$ and $\BcL$ to the language without $\U$, since $h(\U)$, unfortunately, will not be $\{\varepsilon\}^{\rhd\lhd}$. (One could
also talk of models with non-standard $\U$ and standard $\bot$.)

Again, let $\Sigma = \oSigma \cup \uSigma$.
Let us extend the $^{\bullet}$ notation to arbitrary words over $\Sigma$, consistently with the original notation. We shall simply ignore letters from $\uSigma$. Namely, if $x = w_0 \, \ud_1 \, w_1 \, \ud_2 \ldots w_{n-1} \, \ud_n \, w_n$, where each $w_i$ belongs to $\oSigma^*$, then $x^{\bullet} = (w_0 w_1 \ldots w_{n-1} w_n)^{\bullet}$.

Now the language $L$ is defined as follows:
\[
L = \{ x \ub \mid x \in \Sigma^*, x^{\bullet} \preo b \} \cup \{ z \mid z \in \Sigma^*, z^{\bullet} = \bot \}.
\]
Notice that for words of the form $x \ub$ the second component makes no addition: if $z = x \ub$ and $z^{\bullet} = \bot$, then also $x^{\bullet} = \bot$, and therefore $x^{\bullet} \preo b$.

%w_0 \, \ud_1 \, w_1 \, \ud_2 \ldots w_{n-1} \, \ud_n \, w_n \, \ub \mid n \geq 0; w_0, \ldots, w_n \in \oSigma^*; \\ (w_0 w_1 \ldots w_n)^{\bullet} \preo b \} \cup{} \\
%\{ u \mid u \in \oSigma^*; u^{\bullet} = 0 \} \cup \\
%\{ w_0 \, \ud_1 \, w_1 \ldots \, \ud_n \, w_n \mid w_1, \ldots, w_n \in \oSigma^*; (w_0 w_1 \ldots w_n)^{\bullet} = 0, w_n \ne \varepsilon \}.
%\end{multline*}
The mapping $h$ is defined exactly as before: $h(b) = \{ (\varepsilon, \ub) \}^{\rhd}$; now we have
$h(b) = \{ x \in \Sigma^* \mid x^{\bullet} \preo b \}$. In particular, again always $\oa \in h(a)$.
%\[
% h(b) = \{ w_0 \, \ud_1 \, w_1 \, \ud_2  \ldots %w_{n-1} \, \ud_n \, w_n \mid n \geq 0;  w_0, \ldots, w_n \in \oSigma^*; (w_0 w_1 \ldots w_n)^{\bullet} \preo b \}.
%\]

Now we can see why adding $\U$ ruins this construction: in $h(\U)$, there will be words including letters from $\uSigma$. If we require $h(\U) = \{ \varepsilon \}^{\rhd\lhd}$, then such a word could be appended to the end of a word in $L$. This is undesired, as the \emph{last} letter $\ub$ from $\uSigma$ plays a very special r\^{o}le, while other such letters (those inside $x$) are ignored. Changing the last
letter from $\uSigma$ will most probably violate the $x^{\bullet} \preo b$ condition.

Imposing Lambek's non-emptiness restriction also jeopardises our construction. Namely,
$h(b \SL b) = h(b) \SL h(b)$ would include words which contains no letters from $\oSigma$, but some letters from $\uSigma$. Such a word itself is non-empty, so it is a legal word in the model, but $w_0 \ldots w_n$ is indeed empty, and
without the unit $(w_0 \ldots w_n)^{\bullet}$ is meaningless. Without $\U$ and without Lambek's restriction, however, the construction works, and gives the standard interpretation for $\bot$.

%For brevity, if a word $x$ includes symbols from both $\oSigma$ and $\uSigma$, i.e., $x = w_0 \, \ud_1 \, w_1 \, \ldots \, w_{n-1} \, \ud_n \, w_n$, where $w_i \in \oSigma^*$, let
%$x^{\bullet} = (w_0 w_1 \ldots w_n)^{\bullet}$ (this notation is consistent with the definition of $w^{\bullet}$ for $w \in \oSigma^*$). In this notation,
%\( h(b) = \{ x \in \Sigma^* \mid x^{\bullet} \preo b \).

\begin{lemma}\label{Lm:replzero}
Let $x \in \Sigma^*$ and $u \in \oSigma^+$. Then $x^{\bullet} \preo u^{\bullet}$ in $\Kc$ if{f} $x \leqslant_L u$.
\end{lemma}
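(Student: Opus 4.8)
The plan is to follow the proof of Lemma~\ref{Lm:replace} almost verbatim, adjusting for the new designated language $L$ and for the fact that $^{\bullet}$ now also swallows $\uSigma$-letters. I assume that $x$ contains at least one letter from $\oSigma$, so that $x^{\bullet}$ denotes a genuine element of $\KK$; the all-$\uSigma$ case is degenerate and is disposed of at the end. The single fact driving the nontrivial direction is a monotonicity statement: for words $s,t \in \Sigma^*$, deleting the $\uSigma$-letters turns $sxt$ into the $\oSigma$-word obtained by concatenating the $\oSigma$-parts of $s$, of $x$, and of $t$, so that $(s\,x\,t)^{\bullet}$ and $(s\,u\,t)^{\bullet}$ share the outer factors contributed by $s$ and $t$; since $x^{\bullet} \preo u^{\bullet}$ and product is monotone, this yields $(s\,x\,t)^{\bullet} \preo (s\,u\,t)^{\bullet}$.

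For the implication $x^{\bullet} \preo u^{\bullet} \Rightarrow x \leqslant_L u$, take any context $(s,t)$ with $sut \in L$ and show $sxt \in L$. If $sut$ lies in the first component of $L$, then it ends in a $\uSigma$-letter; as $u \in \oSigma^+$ ends in an $\oSigma$-letter, that trailing $\uSigma$-letter must come from $t$, i.e.\ $t = t'\ub$ with $(s\,u\,t')^{\bullet} \preo b$. The monotonicity fact then gives $(s\,x\,t')^{\bullet} \preo (s\,u\,t')^{\bullet} \preo b$, so $sxt = s\,x\,t'\,\ub$ is again in the first component. If instead $sut$ lies in the second component, then $(sut)^{\bullet} = \bot$, and monotonicity gives $(sxt)^{\bullet} \preo \bot$, hence $(sxt)^{\bullet} = \bot$ and $sxt$ lies in the second component. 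Either way $sxt \in L$.

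The converse is the probe argument: assuming $x \leqslant_L u$, set $a = u^{\bullet}$ and test with the context $(\varepsilon, \ua)$. Since $u^{\bullet} = a \preo a$, the word $u\ua$ is in the first component, so $u\ua \in L$, and replaceability yields $x\ua \in L$; its last letter $\ua$ forces the decomposition, so either $x^{\bullet} \preo a$ from the first component or $(x\ua)^{\bullet} = x^{\bullet} = \bot \preo a$ from the second, and in both cases $x^{\bullet} \preo u^{\bullet}$. The step I expect to require the most care is the bookkeeping around the partiality of $^{\bullet}$: when the $\oSigma$-parts of $s$ or $t'$ are empty the products merely lose outer factors and monotonicity still applies, but when $x$ itself has no $\oSigma$-letter the middle factor vanishes and, in the absence of a unit, deleting it need not decrease the product, so monotonicity breaks. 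This is precisely the excluded degenerate case, and it is consistent to exclude it: there $x^{\bullet}$ is not an element of $\KK$, and the same probe $(\varepsilon, \ua)$ shows $x\ua \notin L$ while $u\ua \in L$, so $x \leqslant_L u$ fails too, leaving the equivalence intact.
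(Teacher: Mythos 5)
Your two substantive directions reproduce the paper's proof exactly: the case split over the two components of $L$, monotonicity and transitivity of product for the ``only if'' part (using $u \in \oSigma^+$ to force the trailing $\uSigma$-letter into $t$, so $t = t'\ub$), and the probe context $(\varepsilon, \ua)$ with $a = u^{\bullet}$ for the ``if'' part. That much is correct and is the paper's argument.

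The genuine flaw is your handling of the case where $x$ contains no $\oSigma$-letter. You treat $x^{\bullet}$ as undefined there and claim the equivalence holds vacuously because ``the same probe $(\varepsilon, \ua)$ shows $x\ua \notin L$ \ldots so $x \leqslant_L u$ fails too.'' This rests on a misreading of the setting, and the claim itself is false. Section~\ref{S:zero} works \emph{without} Lambek's restriction, so $\Kc$ is an $\omega$AL: it has a unit element $\U$ even though the constant $\U$ is absent from the formula language (the paper stresses exactly this when interpreting sequents of the form ${} \yields B$). Hence the extended $^{\bullet}$ is total on $\Sigma^*$, and for an all-$\uSigma$ word one has $x^{\bullet} = \varepsilon^{\bullet} = \U$, a genuine element of $\KK$. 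Concretely, take $u = \overline{\top}$: then $x^{\bullet} = \U \preo \top = u^{\bullet}$, the word $x\ua$ \emph{does} lie in the first component of $L$, and $x \leqslant_L u$ \emph{does} hold --- so both sides of the equivalence are true, not both false as you assert. The repair is simply to drop the restriction: with $x^{\bullet} = \U$ your monotonicity step applies verbatim (the middle factor is $\U$, nothing is deleted, so the failure of monotonicity you worry about never arises), and the probe argument is unchanged. This generality is not cosmetic: the lemma is later applied to arbitrary $x \in h(a \vee b)$ and $x \in h(a^+)$, which may well be all-$\uSigma$ words. (Your concern would be legitimate under Lambek's restriction, where there is no unit --- and the paper notes that this is precisely why the construction of Section~\ref{S:zero} is incompatible with that restriction.)
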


\begin{proof}
For the `only if' direction, let $sut \in L$ for some $s,t \in \Sigma^*$.
If $(sut)^{\bullet} = \bot$, then by monotonicity and transitivity of product we also have $(sxt)^{\bullet} \preo \bot$, i.e., $(sxt)^{\bullet}  = \bot$, whence $sxt \in L$.
In the other case, $sut$ should be of the form $y\ub$, where $y^{\bullet} \preo b$. Thus, $t = t'\ub$ and $(sut')^{\bullet} \preo b$. (Notice that here we use the fact that $u$ is over $\oSigma$ and it is non-empty: otherwise $t$ could have been empty, and the rightmost $\ub$ could belong to $u$ or $s$.) Again by monotonicity and transitivity, we get $(sxt')^{\bullet} \preo b$, whence $sxt = sxt'\ub \in L$.

For the `if' direction, let $b = u^{\bullet}$. We have $u \ub \in L$ and, since $x \leqslant_L u$, also $x\ub \in L$. Hence, $x^{\bullet} \preo b = u^{\bullet}$.
\end{proof}

Now we prove the key statement about $\bot$:
\begin{lemma}
 $h(\bot) = \varnothing^{\rhd\lhd}$.
\end{lemma}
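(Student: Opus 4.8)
The plan is to compute both sides of the claimed equality explicitly and then match them. First I would unfold $\varnothing^{\rhd\lhd}$. Since the defining condition of ${}^{\rhd}$ quantifies over $w \in \varnothing$, it is vacuously satisfied, so $\varnothing^{\rhd} = \Sigma^* \times \Sigma^*$ is the set of \emph{all} contexts. Consequently $\varnothing^{\rhd\lhd} = \{ v \in \Sigma^* \mid \forall (x,y) \in \Sigma^* \times \Sigma^*,\ xvy \in L \}$, i.e.\ the words lying in $L$ in every surrounding context. On the other side, since $\bot$ is the least element of $\Kc$, the condition $v^{\bullet} \preccurlyeq \bot$ is equivalent to $v^{\bullet} = \bot$, so $h(\bot) = \{ v \in \Sigma^* \mid v^{\bullet} = \bot \}$. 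I would also record the algebraic fact that $\bot$ is a two-sided multiplicative zero: from residuation, $a \cdot \bot \preccurlyeq c \iff \bot \preccurlyeq a \BS c$, and the right-hand side always holds, so $a \cdot \bot \preccurlyeq \bot$, i.e.\ $a \cdot \bot = \bot$ (and symmetrically $\bot \cdot a = \bot$).

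For the inclusion $h(\bot) \subseteq \varnothing^{\rhd\lhd}$, I would take $v$ with $v^{\bullet} = \bot$ and an arbitrary context $(x,y)$. Using that ${}^{\bullet}$ deletes $\uSigma$-letters and is multiplicative on the remaining $\oSigma$-letters, $(xvy)^{\bullet} = x^{\bullet} \cdot v^{\bullet} \cdot y^{\bullet} = x^{\bullet} \cdot \bot \cdot y^{\bullet}$, which collapses to $\bot$ by the zero property just noted. Hence $xvy$ lies in the second component of $L$, so $v \in \varnothing^{\rhd\lhd}$. Here I rely on $\Kc$ being the full $\omega$AL, so that empty products equal $\U$ and $x^{\bullet}, y^{\bullet}$ are always defined, even when $x$ or $y$ contains no $\oSigma$-letter.

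For the reverse inclusion, the key move is to feed a single well-chosen context to the universal condition defining $\varnothing^{\rhd\lhd}$. Taking $x = \varepsilon$ and $y = \underline{\bot}$, membership $v\underline{\bot} \in L$ must come from one of the two clauses of $L$: either $v\underline{\bot} = x'\ub$ with $(x')^{\bullet} \preccurlyeq b$, which forces $b = \bot$, $x' = v$ and hence $v^{\bullet} \preccurlyeq \bot$; or $(v\underline{\bot})^{\bullet} = \bot$, and since the trailing $\uSigma$-letter is ignored this likewise reads $v^{\bullet} = \bot$. Either way $v^{\bullet} = \bot$, so $v \in h(\bot)$. The main thing to get right is this last step: checking that both disjuncts in the definition of $L$ yield the \emph{same} conclusion for the context $(\varepsilon,\underline{\bot})$, so that the two clauses cooperate rather than leaving a gap. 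Everything else is a routine unfolding of definitions.
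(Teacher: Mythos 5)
Your proof is correct and, for the substantive inclusion $h(\bot) \subseteq \varnothing^{\rhd\lhd}$, it is exactly the paper's argument: $v^{\bullet} = \bot$ forces $(xvy)^{\bullet} = \bot$ because $\bot$ is a two-sided multiplicative zero, so $xvy$ falls into the second clause of $L$ for every context $(x,y)$. The only addition is your explicit proof of the reverse inclusion via the context $(\varepsilon, \underline{\bot})$, which the paper leaves implicit (it follows at once since $h(\bot) = \{(\varepsilon,\underline{\bot})\}^{\lhd}$ is closed and $\varnothing^{\rhd\lhd}$ is the least closed language); your clause-by-clause check is a sound, self-contained way of getting the same fact.
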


\begin{proof}
 Take an arbitrary pair $s,t \in \Sigma^*$ and let $z \in h(\bot)$. The latter means that $z^{\bullet} = \bot$. Hence, $(szt)^{\bullet} = \bot$ ($\bot$ is the zero for product), therefore $szt \in L$.
\end{proof}

Finally, we routinely check that everything else (except $\U$ and Kleene iteration, which is replaced by its positive version) keeps working.

\begin{lemma}
 The mapping $h$ preserves all operations of $\Kc$, and if $a \preo b$ in $\Kc$, then $h(a) \subseteq h(b)$.
\end{lemma}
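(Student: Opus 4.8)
The plan is to reproduce, essentially line by line, the structure of the proof of Lemma~\ref{Lm:hom}, replacing the two appeals there to the $|x|_{\uSigma}$-dichotomy by the new dichotomy furnished by the two components of $L$. Two facts drive every case: the multiplicativity $(xy)^{\bullet} = x^{\bullet} \cdot y^{\bullet}$ (immediate, since $^\bullet$ merely deletes the $\uSigma$-letters and the deleted positions do not affect the product), together with $\oa^{\bullet} = a$ and $\varepsilon^{\bullet} = \U$; the latter is legitimate because, although $\U$ is dropped as a designated constant, $\Kc$ is still an $\omega$AL and so keeps $\U$ as an element of $\KK$, making $x^{\bullet}$ well-defined for every $x \in \Sigma^*$. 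Monotonicity is then immediate: if $a \preo b$ and $x \in h(a)$ then $x^{\bullet} \preo a \preo b$, so $x \in h(b)$; and $h(\top) = \Sigma^*$ because $x^{\bullet} \preo \top$ holds for all $x$. Each operation is then handled by two inclusions: a ``forward'' one that reduces an arbitrary $x$ to a canonical single- or two-letter witness over $\oSigma$ via the replacement Lemma~\ref{Lm:replzero}, and a ``backward'' one checking that the raw (pre-closure) language already sits inside the target $h(\cdot)$, which is closed.

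For product, division and meet the computation is purely algebraic and copies Lemma~\ref{Lm:hom}. For $h(a \cdot b) \supseteq h(a) \cdot h(b)$ one writes $x = x_1 x_2$ with $x_i^{\bullet}$ below the respective factor and concludes $x^{\bullet} = x_1^{\bullet} x_2^{\bullet} \preo a \cdot b$; for the reverse, $x \in h(a\cdot b)$ gives $x^{\bullet} \preo a\cdot b = (\oa\ob)^{\bullet}$, hence $x \leqslant_L \oa\ob$ by Lemma~\ref{Lm:replzero}, and since $\oa\ob \in h(a)\cdot h(b)$ and closed languages are downward closed under $\leqslant_L$, we get $x \in (h(a)\cdot h(b))^{\rhd\lhd} = h(a) \ccdot h(b)$. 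Division uses the residuation law and the canonical witness $\oa \in h(a)$ exactly as before, and meet is the trivial two-sided rewriting $x^{\bullet} \preo a \wedge b \iff (x^{\bullet}\preo a \text{ and } x^{\bullet} \preo b)$.

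The two genuinely new cases are union and positive iteration, and both hinge on recomputing, for a context $(s,t)$, when $sxt$ lies in $L$. The key observation is a clean dichotomy on the right endpoint: if $t = t'\ud$ ends in a $\uSigma$-letter, then $swt \in L$ is equivalent to $s^{\bullet}\, w^{\bullet}\, (t')^{\bullet} \preo d$, whereas if $t$ ends in a $\oSigma$-letter or is empty, then $swt \in L$ is equivalent to $s^{\bullet}\, w^{\bullet}\, t^{\bullet} = \bot$ (the $z^{\bullet}=\bot$ component of $L$). For union I must show $\overline{a\vee b} \in (h(a)\cup h(b))^{\rhd\lhd}$: a context sending all of $h(a)\cup h(b)$ into $L$ in particular sends $\oa$ and $\ob$, so in the first subcase $s^{\bullet} a (t')^{\bullet} \preo d$ and $s^{\bullet} b (t')^{\bullet} \preo d$ give $s^{\bullet}(a\vee b)(t')^{\bullet} \preo d$ by distribution in the semiring $(\KK;\cdot,\vee)$, and in the second subcase $s^{\bullet} a t^{\bullet} = s^{\bullet} b t^{\bullet} = \bot$ forces $s^{\bullet}(a\vee b)t^{\bullet} = \bot \vee \bot = \bot$; either way $s\,\overline{a\vee b}\,t \in L$. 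Lemma~\ref{Lm:replzero} then promotes this to arbitrary $x \in h(a\vee b)$, and the reverse inclusion $h(a)\cup h(b) \subseteq h(a\vee b)$ is the one-line monotonicity bound. Positive iteration has the same shape, with $\overline{a^+}$ in place of $\overline{a\vee b}$: the context now sends every $(\oa)^n$ into $L$, yielding either $s^{\bullet} a^n (t')^{\bullet} \preo d$ for all $n$ or $s^{\bullet} a^n t^{\bullet} = \bot$ for all $n$, and the strong $*$-continuity $s^{\bullet} a^+ (t')^{\bullet} = \sup_n s^{\bullet} a^n (t')^{\bullet}$ (valid in any $*$-continuous action lattice in the presence of residuals) closes both subcases, the absorbing $\bot$ surviving the supremum in the second.

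I expect the only real friction to be bookkeeping rather than conceptual: applying the dichotomy to the correct endpoint (the \emph{last} letter of the context, which is where the distinguished $\ub$ of $L$ must sit) and checking that the two subcases are exhaustive, with the $z^{\bullet}=\bot$ membership automatically subsumed whenever $t$ ends in $\uSigma$. The one place where the absence of $\U$ and of Lambek's restriction is silently used is exactly here: every word of $\Sigma^*$, including $\varepsilon$ and words built only from $\uSigma$, has a well-defined $x^{\bullet}$, so no case is left meaningless — precisely the failure that the remarks preceding Lemma~\ref{Lm:replzero} flag for the other two settings.
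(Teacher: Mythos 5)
Your proposal is correct and takes essentially the same route as the paper's proof: monotonicity and the product/division/meet cases via $(xy)^{\bullet} = x^{\bullet} \cdot y^{\bullet}$ and reduction to canonical $\oSigma$-witnesses through Lemma~\ref{Lm:replzero}, and the union/positive-iteration cases via the same dichotomy on whether the context ends in a $\uSigma$-letter, closed by semiring distributivity (resp.\ strong $*$-continuity, which the paper indeed licenses in the presence of residuals), with the absorbing $\bot$ handling the second subcase. The only cosmetic divergence is that you verify $h(\top) = \Sigma^*$ directly, whereas the paper dispenses with $\top$ via the equivalence $\top = \bot \SL \bot$ together with the previously established $h(\bot) = \varnothing^{\rhd\lhd}$; both are fine.
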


\begin{proof}
 The proof is even simpler than that of Lemma~\ref{Lm:hom}, because now we do not need to consider two cases.

 If $a \preo b$ in $\Kc$ and $x \in h(a)$, then $x^{\bullet} \preo a \preo b$, whence $x \in h(b)$. For operations, the argument is as follows.

 \vskip 5pt
 \emph{Product.} If $x \in h(a \cdot b)$, then we have $x^{\bullet} \preo a \cdot b$, and by Lemma~\ref{Lm:replzero} $x \leqslant_L \oa\ob$. Having $\oa\ob \in h(a) \cdot h(b)$, we conclude that $x \in h(a) \circ h(b)$. Now take $x = x_1 x_2$, where $x_1 \in h(a)$, $x_2 \in h(b)$. We have $x_1^{\bullet} \preo a$, $x_2^{\bullet} \preo b$, whence $x^{\bullet} = x_1^{\bullet} x_2^{\bullet} \preo a \cdot b$. Therefore, $x \in h(a \cdot b)$.

 \vskip 5pt
 \emph{Division.} Take any $x \in h(b \SL a)$ and any $y \in h(a)$. We have $x^{\bullet} \preo b \SL a$, $y^{\bullet} \preo a$. Hence, $(xy)^{\bullet} \preo (b \SL a) \cdot a \preo a$. This shows that $h(b \SL a) \subseteq h(b) \SL h(a)$. For the opposite direction, take $x \in h(b) \SL h(a)$ and $y = \oa \in h(a)$. We have $x \oa \in h(b)$, whence $x^{\bullet} \cdot a \preo b$, which is equivalent to $x^{\bullet} \preo b \SL a$. Therefore, $x \in h(b \SL a)$. The $\BS$ case is symmetric, since so is the expression for $h(b)$.

 \vskip 5pt
 \emph{Intersection:} the same as in the proof of Lemma~\ref{Lm:hom}, with $w \in \oSigma^*$ changed to an arbitrary $x \in \Sigma^*$.

 \vskip 5pt
 \emph{Union.}
 Let us show that $\overline{a \vee b}$ belongs to $h(a) \sqcup h(b)$. Take a pair $s,t$ such that for any $x \in h(a) \cup h(b)$ we have $sxt \in L$. If $t = t' \ud$, then we have $(s\oa t')^{\bullet} \preo d$ and $(s \ob t')^{\bullet} \preo d$. Hence, $(s\, \overline{a \vee b} \, t')^{\bullet} \preo d$, and finally $s \, \overline{a \vee b} \, t \in L$. In $t$ is empty or ends on a letter from $\oSigma$, we must have $(s \oa t)^{\bullet} = (s \ob t)^{\bullet} = \bot$, whence $(s \, \overline{a \vee b} \, t)^{\bullet} = \bot$, and we again have $s \, \overline{a \vee b} \, t \in L$. The rest of the argument resembles that from Lemma~\ref{Lm:hom}.

 %For $h(a \vee b) \subseteq h(a) \sqcup h(b)$, the argument is the same as in the proof of Lemma~\ref{Lm:hom}. For the opposite inclusion, take $x \in h(a)$ (or, symmetrically, $x \in h(b)$). We have $x^{\bullet} \preo a \preo a \vee b$, whence $x \in h(a \vee b)$.
 %FIXME make argument

 \vskip 5pt
 \emph{Positive iteration.}
 The argument is similar to that for union. We show that $\overline{a^+}$ belongs to $(h(a))^{\oplus}$, and use the fact that if $x \in h(a^+)$, then $x \leqslant_L \overline{a^+}$. Notice that for Kleene iteration this would not work: if $s,t$ are such that for any $n \geq 0$ and for any $x \in h(a)^n$ we have $sxt \in L$, then we could have an empty $t$, and for the empty $x$ we could have $s^{\bullet} \ne \bot$ ($s$ ends with a letter from $\uSigma$).
 %
 %
%
 %FIXME change
 %Here we show that $\overline{a^*}$ belongs to $(h(a))^{\circledast}$. Take a pair $s,t$ such that for any $n \geq 0$ and for any $x \in h(a)^n$ we have $sxt \in L$. If $t = t'\ud$, then we have $(s \oa^n t')^{\bullet} \preo d$ for any $n \geq 0$. By the $\omega$-rule, $(s \overline{a^*} t')^{\bullet} \preo d$, whence $s \overline{a}^* t \in L$. If $t$ is empty,
 %The same modification of the argument for the union, as in the proof
 %FIXME: do accurately! could be problem with unit

\vskip 5pt
 Notice that now we do not need to consider the case of $\top$ separately, since $\top$ is equivalent to $\bot \SL \bot$, and both $\bot$ and $\SL$ map correctly.
\end{proof}

The rest is done in the same way as in the proof of Theorem~\ref{Th:PSCL}. A tricky moment happens, however, with sequents of the form ${} \yields B$. Such a sequent is declared true in the SCL-model, if $\varepsilon \in \alpha(B)$ (or, which is the same, $\{\varepsilon\}^{\rhd\lhd} \subseteq \alpha(B)$, see Section~\ref{S:norestr}). Our construction, however, yields something which looks different.
Despite the absence of the constant $\U$ in our language, the $\omega$AL $\Kc$ still has the unit element $\U$, and the sequent ${} \yields B$ is interpreted in $\Kc$ as $\U \preo \beta(B)$. In our SCL-model, this translates to $h(\U) \subseteq \alpha(B)$, and, unfortunately,
$h(\U) \ne \{ \varepsilon \}^{\rhd\lhd}$.

We shall prove, however, that $h(\U) \subseteq \alpha(B)$ if{f} $\varepsilon \in \alpha(B)$ (where $\U$ is the unit of $\Kc$). Since $\varepsilon^{\bullet} = \U$, we have $\varepsilon \in h(\U)$, which validates the ``only if'' direction. Let us prove the ``if'' direction. Let $b = \beta(B)$, then $\alpha(B) = h(b)$ and $\varepsilon \in h(b)$. The latter means that $\U = \varepsilon^{\bullet} \preo b$. Now, since for any $x \in h(\U)$ we have $x^{\bullet} \preo \U$, by transitivity we get $x^{\bullet} \preo b$, whence $x \in h(b) = \alpha(B)$.

\subsection*{Acknowledgments}
The author thanks C.~Wurm, S.~O.~Kuznetsov, and the participants of WoLLIC 2024 for fruitful discussions, and the reviewers for their valuable comments.

Support from the Basic Research Program of HSE University is gratefully acknowledged.

\bibliographystyle{abbrv}
\bibliography{SCLmodels}

\end{document}